\newcommand{\cX}{\mathcal{X}}
\newcommand{\cH}{\mathcal{H}}
\newcommand{\cF}{\mathcal{F}}
\newcommand{\cG}{\mathcal{G}}
\newcommand{\cC}{\mathcal{C}}
\newcommand{\cN}{\mathcal{N}}
\newcommand{\cA}{\mathcal{A}}
\newcommand{\cR}{\mathcal{R}}
\newcommand{\hFDP}{\widehat{\textnormal{FDP}}}
\newcommand{\hfdr}{\widehat{\textnormal{fdr}}}
\newcommand{\hfdp}{\widehat{\textnormal{fdp}}}
\newcommand{\htt}{\hat{t}}
\newcommand{\tp}{\tilde{p}}
\newcommand{\FDP}{\textnormal{FDP}}
\newcommand{\FDR}{\textnormal{FDR}}
\newcommand{\fdr}{\textnormal{fdr}}
\newcommand{\adapt}{\textnormal{AdaPT} }
\newcommand{\update}{\textsc{update} }
\newcommand{\eps}{\epsilon}
\newcommand{\lb}{\left(}
\newcommand{\rb}{\right)}
\newcommand{\td}{\tilde}
\newcommand{\R}{\mathbb{R}}
\newcommand{\E}{\mathbb{E}}
\renewcommand{\P}{\mathbb{P}}
\newcommand{\tX}{\widetilde{X}}
\newcommand{\1}{\mathbf{1}}
\newcommand{\exps}[1]{\exp\left\{#1\right\}}
\newcommand{\pkg}[1]{\texttt{#1}}
\newtheorem{proposition}{Proposition}
\newtheorem{lemma}{Lemma}
\newtheorem{theorem}{Theorem}
\title{AdaPT: An interactive procedure for multiple testing with side information}
\author{Lihua Lei and William Fithian}
\address{Department of Statistics, University of California, Berkeley, USA}
\email{\{lihua.lei, wfithian\}@berkeley.edu}
\begin{document}

\maketitle

\begin{abstract}
  We consider the problem of multiple hypothesis testing with generic side information: for each hypothesis $H_i$ we observe both a $p$-value $p_i$ and some predictor $x_i$ encoding contextual information about the hypothesis. For large-scale problems, adaptively focusing power on the more promising hypotheses (those more likely to yield discoveries) can lead to much more powerful multiple testing procedures. We propose a general iterative framework for this problem, called the Adaptive $p$-value Thresholding (AdaPT) procedure, which adaptively estimates a Bayes-optimal $p$-value rejection threshold and controls the false discovery rate (FDR) in finite samples. At each iteration of the procedure, the analyst proposes a rejection threshold and observes partially censored $p$-values, estimates the false discovery proportion (FDP) below the threshold, and proposes another threshold, until the estimated FDP is below $\alpha$. Our procedure is adaptive in an unusually strong sense, permitting the analyst to use any statistical or machine learning method she chooses to estimate the optimal threshold, and to switch between different models at each iteration as information accrues. We demonstrate the favorable performance of \adapt by comparing it to state-of-the-art methods in five real applications and two simulation studies.
\end{abstract}

\keywords{multiple testing, false discovery rate, p-value weighting, selective inference, adaptive inference, martingales}

\section{Introduction}\label{sec:intro}

\subsection{Interactive data analysis}

In classical statistics we assume that the question to be answered, and the analysis to be used in answering the question, are both fixed in advance of collecting the data. Many modern applications, however, involve extremely complex data sets that may be collected without any specific hypothesis in mind. Indeed, very often the express goal is to explore the data in search of insights we may not have expected to find. A central challenge in modern statistics is to provide scientists with methods that are flexible enough to allow for exploration, but that nevertheless provide statistical guarantees for the conclusions that are eventually reported.

Selective inference methods blend exploratory and confirmatory analysis by allowing a search over the space of potentially interesting questions, while still guaranteeing control of an appropriate Type I error rate such as a conditional error rate \citep[e.g.,][]{yekutieli2012adjusted,lee2016exact,fithian2014optimal}, familywise error rate~\citep[e.g.,][]{tukey1994collected,berk2013valid}, or false discovery rate~\citep[e.g.,][]{bh95,barber2015controlling}. However, most selective inference methods require that the selection algorithm be specified in advance, forcing a choice between either ignoring any difficult-to-formalize domain knowledge or sacrificing statistical validity guarantees. 

Interactive data analysis methods relax the requirement of a pre-defined selection algorithm. Instead, they provide for an interactive analysis protocol between the analyst and the data, guaranteeing statistical validity as long as the protocol is followed. The two central questions in interactive data analysis are ``what did the analyst know and when did she know it?'' Previous methods for interactive data analysis involve randomization~\citep{dwork2015preserving, tian2015selective} to control the analyst's access to the data at the time she decides what questions to ask. 

This paper proposes an iterative, interactive method for multiple testing in the presence of {\em side information} about the hypotheses. We restrict the analyst's knowledge by partially censoring all $p$-values smaller than a currently-proposed rejection threshold, and guarantee finite-sample FDR control by applying a version of the optional-stopping argument pioneered by~\citet{storey04} and extended in \citet{barber2015controlling,gsell2016sequential,li2016accumulation,lei2016power,barber2016knockoff}.

\subsection{Multiple testing with side information}

In many areas of modern applied statistics, from genetics and neuroimaging to online advertising and finance, researchers routinely test thousands or millions or hypotheses at a time. For large-scale testing problems, perhaps the most celebrated multiple testing procedure of the modern era is the Benjamini--Hochberg (BH) procedure \citep{bh95}. Given $n$ hypotheses and a $p$-value for each one, the BH procedure returns a list of rejections or ``discoveries.'' If $R$ is the number of total rejections and $V$ is the number of false rejections (rejections of true null hypotheses), the BH procedure controls the  {\em false discovery rate} (FDR), defined as
\begin{equation}\label{eq:FDR}
\text{FDR} = \E\left[\frac{V}{\max\{R,1\}}\right],
\end{equation}
at a user-specified target level $\alpha$. The random variable $V/\max\{R,1\}$ is called the {\em false discovery proportion} (FDP). 

The BH procedure is nearly optimal when the null hypotheses are exchangeable {\em a priori}, and nearly all true. In other settings, however, the power can be improved, sometimes dramatically, by applying prior knowledge or by learning from the data. For example, adaptive FDR-controlling procedures can gain in power by estimating the overall proportion of true nulls \citep{storey02}, applying priors to increase power using $p$-value weights \citep{benjamini1997multiple, genovese2006false, dobriban2015optimal, dobriban2016general}, grouping similar null hypotheses and estimating the true null proportion within each group \citep{hu2012false}, or exploiting a prior ordering to focus power on more ``promising'' hypotheses near the top of the ordering \citep{barber2015controlling,gsell2016sequential, li2016accumulation, lei2016power}.

In most large-scale testing problems, the null hypotheses do not comprise an undifferentiated list; rather, each hypothesis is associated with rich contextual information that could potentially help to inform our testing procedures. For example, \citet{li2016accumulation} test for differential expression of 22,283 genes between a treatment and control condition for a breast cancer drug, with side information in the form of an ordering of genes from most to least ``promising'' using auxiliary data collected at larger dosages. Multiple testing procedures that exploit the ordering can reject hundreds of hypotheses while the BH procedure (which does not exploit the ordering) rejects none. 

More generally, prior information could arise in more complex ways. For example, consider testing for association of 400,000 single-nucleotide polymorphisms (SNPs) with each of 40 related diseases. If gene-regulatory relationships are known, then we might expect SNPs near related genes to be associated (or not) with related diseases, but without knowing ahead of time which gene-disease pairs are promising. In a similar vein, \citet{fortney2015genome} used prior knowledge of each SNP's associations with age-related diseases to focus their search for SNPs associated with longevity, leading to novel discoveries. Inspired by examples like this, \citet{ignatiadis2016data} and \citet{li2016multiple} have recently proposed a more general problem setting where, for each hypothesis $H_i$, $i\in [n]$ we observe not only a $p$-value $p_i\in [0,1]$ but also a predictor $x_i$ lying in some generic space $\cX$. Unlike $p_i$, $x_i$ carries only indirect information about the hypothesis: it is meant to capture some side information that might bear on $H_i$'s likelihood to be false, or on the power of $p_i$ under the alternative, but the nature of this relationship is not fully known ahead of time and must be learned from the data.

In other situations, the ``predictor'' information could simply represent a measure of sample size or overall signal for testing the $i$th hypothesis, which could be informative about the power of the $i$th test to distinguish the alternative from the null. For example, if each $p_i$ concerns a test for association between the $i$th SNP and a disease, then the overall prevalence of that SNP (in the combined treatment and control groups) can be used as prior information. Or, if $p_i$ arises from a two-sample $t$-test, we could use the pooled variance, the sample variance ignoring the group labels, as prior information; see e.g. \citep{bourgon2010independent, ignatiadis2016data}.

\subsection{AdaPT: a framework for FDR control}\label{subsec:framework}

This paper presents a new framework for FDR control with generic side information, which we call {\em adaptive $p$-value thresholding} or AdaPT for short. Our method proceeds iteratively: at each step $t=0,1,\ldots$, the analyst proposes a rejection threshold $s_t(x)$ and computes an estimator $\hFDP_t$ for the false discovery proportion for this threshold. If $\hFDP_t\leq \alpha$, she stops and rejects every $H_i$ for which $p_i\leq s_t(x_i)$. Otherwise, she proposes a more stringent threshold $s_{t+1} \preceq s_t$ and moves on to the next iteration, where the notation $a \preceq b$ means $a(x) \leq b(x)$ for all $x\in\cX$.

The estimator $\hFDP_t$ is computed by comparing the number $R_t$ of rejections to the number $A_t$ of $p$-values for which $p_i \geq 1 - s_t(x_i)$:
\[
  R_t = |\{i:\; p_i \leq s_t(x_i)\}|, \qquad
  A_t = |\{i:\; p_i \geq 1-s_t(x_i)\}|, \qquad
  \text{and} \quad \hFDP_t = \frac{1 + A_t}{R_t \vee 1} \;.
\]
The estimate $\hFDP_t$ is also used by \citet{lei2016power} and \citet{arias16}. Figure~\ref{fig:illustrate:compute} illustrates the way $s_t(x)$ and $1-s_t(x)$ partition the data into three regions; $A_t$ is the number of points in the upper blue region and $R_t$ is the number in the lower red region.

\begin{figure}
  \centering
  \begin{subfigure}[t]{.4\textwidth}
    \includegraphics[width=\textwidth]{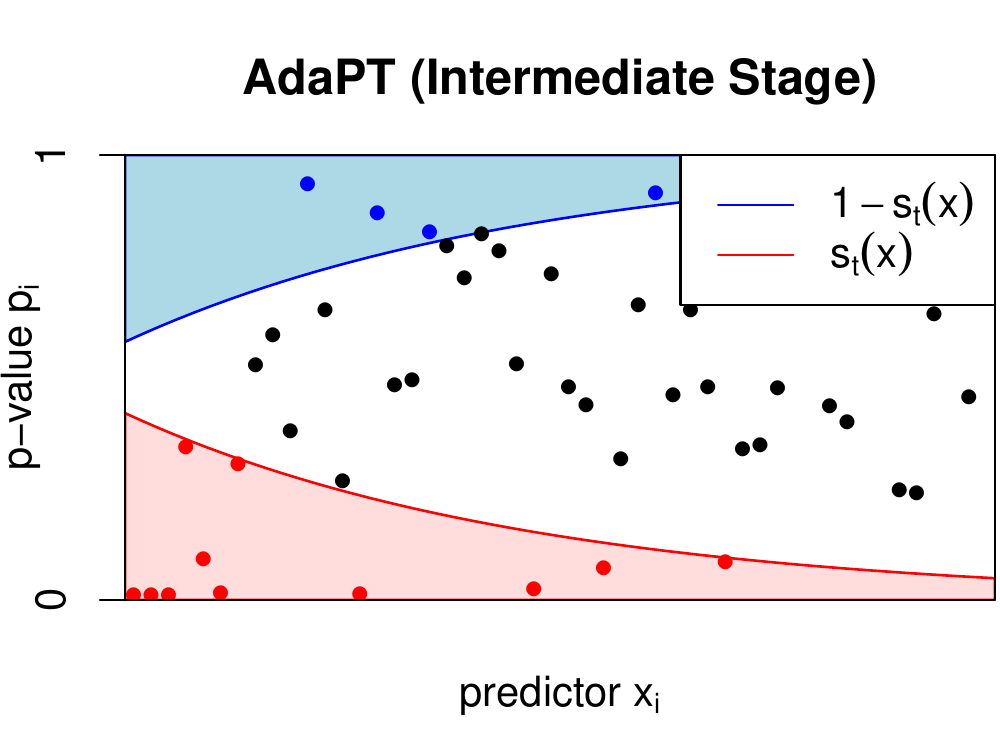}
    \caption{$A_t=4$ and $R_t=11$ are the numbers of blue and red points respectively, leading to $\hFDP = (1+4)/11 \approx 0.45$. If $\hFDP\leq \alpha$, we stop and reject the red points; otherwise we choose a new threshold $s_{t+1} \preceq s_t$ and continue.}
    \label{fig:illustrate:compute}
  \end{subfigure}
  \hspace{.1\textwidth}
  \begin{subfigure}[t]{.4\textwidth}
    \includegraphics[width=\textwidth]{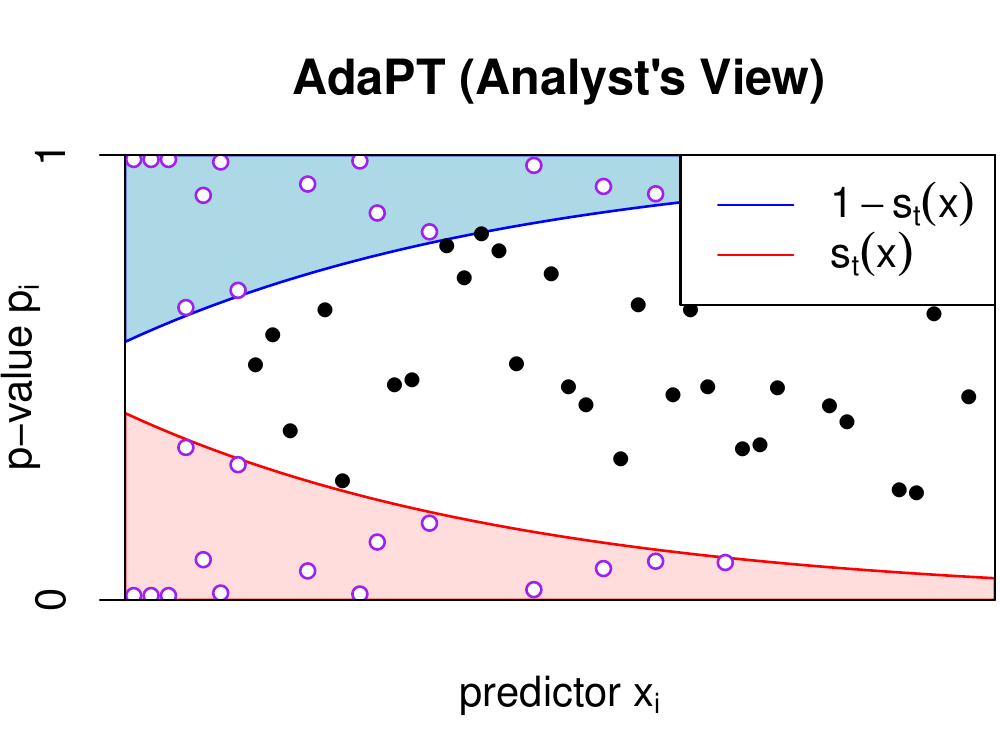}
    \caption{Information available to the analyst when choosing $s_{t+1}(x)$ ($A_t$ and $R_t$ are also known). Each red and blue point is reflected across $p=0.5$, leaving the analyst to impute which are the true $p$-values and which are the mirror images.}
  \label{fig:illustrate:analyst-view}
  \end{subfigure}
  \caption{Illustration of one step of the AdaPT procedure with a univariate predictor.}
  \label{fig:illustrate}
\end{figure} 

At each step $t$, the analyst can choose the next threshold $s_{t+1}(x)$ however she chooses, with only two constraints. First, $s_{t+1}\preceq s_t$ as stated before. Second, the large and small $p$-values (the ones contributing to $A_t$ and $R_t$) are partially masked. Specifically, at step $t$ the analyst is allowed to observe $A_t$ and $R_t$, as well as the entire sequence $(x_i, \tp_{t,i})_{i=1}^n$, where
\begin{equation}\label{eq:mask}
  \tp_{t,i} = 
  \begin{cases} 
    p_i & s_t(x_i) < p_i < 1 - s_t(x_i) \\ 
    \{p_i, \; 1-p_i\} & \text{otherwise}.
  \end{cases}
\end{equation}
Thus, if $p_i = 0.01 \leq s_t(x_i)$ then at step $t$ the analyst knows only that $p_i$ is either 0.01 or 0.99, but if $s_{t+1}(x_i) < 0.01$ then $p_i$ is revealed at step $t+1$ as 0.01. Figure~\ref{fig:illustrate:analyst-view} illustrates what the analyst can see: each red and blue point from Figure~\ref{fig:illustrate:compute} is shown along with its mirror image reflected across the midline $p=0.5$.

We show in Section~\ref{sec:admissible} that, in a generic two-groups empirical Bayes model, an ideal choice for $s_{t}(x)$ would be a level surface of the {\em local false discovery rate} (fdr), as a function of $x$ and $p$:
\[
\fdr(p \mid x) = \P(H_i \text{ is null} \mid p_i = p, x_i = x).
\]
Formally, $\fdr(p\mid x)$ is unidentifiable from the data but, under reasonable assumptions, we can use a good proxy based on the conditional density of the $p$-value given the covariate, $f(p \mid x)$ (note however that our method controls FDR without any empirical Bayes assumptions).

In each step information is gradually revealed to the analyst as the threshold shrinks and more $p$-values are unmasked. Our procedure is adaptive in an unusually strong sense: provided that the two constraints are met, the analyst may apply any method she wants to select $s_{t+1}(x)$, consulting her own hunches or the intuition of domain experts, and can even switch between different methods as information accrues. Moreover, the analyst is under no obligation to describe, or even to fully understand, her update rule for choosing $s_{t+1}(x)$. In this sense, we say our method is fully {\em interactive} --- the analyst's behavior is arbitrary as long as she abides by a certain protocol for interacting with the algorithm. 

While the partial masking of $p$-values obscures just enough information from the analyst to control the FDR, in many cases it does not seriously impact the ability of the analyst to learn the optimal threshold surface $s(x)$. This is because, by the time the algorithm is close to stopping, the vast majority of $p$-values have already been revealed, and many of the ones that remain masked are so minuscule as to leave little doubt about whether $p_i$ is large or small. As we show in numerous simulation and real data experiments in Section~\ref{sec:experiments}, the fdr estimates based on masked data typically converge to the full-data estimates well before the algorithm stops.

The AdaPT procedure controls FDR at level $\alpha$ in finite samples provided that the null $p$-values are uniform, or mirror-conservative as defined in Section \ref{subsec:notation}, and independent conditional on the non-null $p$-values. The proof relies on a pairwise exchangeability argument similar to the argument in \citet{barber2015controlling}.

Algorithm \ref{algo:adapt} summarizes the AdaPT procedure, using the generic sub-routine $\update$ to represent whatever process the analyst uses to select $s_{t+1}(x)$. Note that $s_{t+1}(x)$ is a random function that is measurable to $\cF_{t}$. Sections~\ref{sec:admissible}--\ref{sec:implementation} discuss recommendations for a good $\update$ routine. It is worth mentioning that AdaPT reduces to Barber-Cand{\`e}s method, inspired by \cite{barber2015controlling} and proposed by \cite{arias16}, when $s_{t}(x)$ is a constant function for every $t$.

\begin{algorithm}
  \caption{\adapt}
  \textbf{Input: } predictors and $p$-values $\left(x_{i}, p_{i}\right)_{i \in [n]}$, initialization $s_{0}$, target FDR level $\alpha$
  
  \textbf{Procedure: }
  \begin{algorithmic}[1]
    \For{$t = 0, 1, \ldots$}
    \State $\widehat{\FDP}_{t}\gets \frac{1 + A_{t}}{R_{t}\vee 1}$;
    \If{$\widehat{\FDP}_{t}\le \alpha$}
    \State Reject $\{H_i:\; p_i \leq s_t(x_i)\}$;
    \State Return $s_{t}$;
    \EndIf
    \State $s_{t+1}\gets 
    \update(\left(x_{i}, \tp_{t,i}\right)_{i \in [n]}, A_t, R_t, s_t)$;
    \EndFor
  \end{algorithmic}
  \label{algo:adapt}
\end{algorithm}

\subsection{Related work}
In recent work \citet{ignatiadis2016data} propose a different method {\em independent hypothesis weighting} (IHW) for multiple testing with side information. They first bin the predictors into groups $g_1, \ldots,g_K$, and then apply the weighted-BH procedure at level $\alpha$ with piecewise-constant weights; i.e., if $x_i\in g_k$, then $w_i=w(g_k)$. The weights $w(g_1),\ldots,w(g_K)$ are chosen to maximize the number of rejections. This proposal is similar in spirit to the AdaPT procedure since it attempts to find optimal weights, but it is a bit more limited: first, binning the data may be difficult if the predictor space $\cX$ is multivariate or more complex; and second, their method is only guaranteed to control FDR asymptotically, as the number of bins stays fixed and the number of hypotheses in each bin grows to infinity. As a result, we must trust that $n$ is large enough to support however many bins we have chosen to use. By contrast, AdaPT can use any machine-learning method to estimate $\hat f(p \mid x)$, and we can ``overfit away'' without fear of compromising finite-sample FDR control (though overfitting can of course reduce our power if our fdr estimates are too noisy). Another method is proposed by \citet{du2014single} when the covariate is an auxiliary univariate p-value derived by prior information. However, similar to \citet{ignatiadis2016data}, it only controls FDR asymptotically under the fairly strong conditions that the p-values are symmetrically distributed under the null and bounded by $\frac{1}{2}$ under the alternative.

Perhaps the procedure most closely related to ours is the {\em structure-adaptive BH algorithm} or SABHA \citep{li2016multiple}. SABHA first censors the $p$-values below at a fixed level $\tau$ ($\tau=0.5$ in their simulations), leading to censored $p$-values $p_i\1\{p_i > \tau\}$. Using these, they can estimate $\pi_1(x)$, defined as $P(H_i \mbox { is non-null} \mid x_i = x)$, as a function of $x$, then apply the weighted BH procedure of \citet{genovese2006false} with weights $\hat \pi_1(x_i)^{-1}$, at a corrected FDR level $\tilde\alpha = C\alpha$ (where $C<1$ depends on the Rademacher complexity of the estimator $\hat\pi_1^{-1}$). We notice that this type of censoring is also employed in a variant of IHW \citep{ignatiadis2017covariate}, which guarantees the FDR control in finite samples.

As the first procedure to provably control the finite-sample FDR using generic feature information, SABHA represents a major step forward. However, AdaPT has several important advantages: First, even if $\hat\pi_1(x)$ estimates $\pi_1(x)$ consistently, the weights $\pi_1(x)^{-1}$ are not Bayes optimal as we show in Section~\ref{sec:admissible}; by contrast, our method estimates a Bayes optimal threshold. Second, the correction factor $C$ makes the method conservative and restricts the available estimators $\hat\pi_1^{-1}$ to those with provably low Rademacher complexity. Third, AdaPT can use more information for learning: in later stages we will typically have $s_t(x_i) \ll 0.5$ and the masked $p$-values $\tp_{t,i}$ may be much more informative than $p_i\1\{p_i > 0.5\}$, especially since our goal is to estimate $f(p \mid x)$ for small values of $p$.

Finally, we remark that there is a literature on very different approaches for incorporating covariates into multiple testing problems; see e.g. \cite{lewinger2007hierarchical, ferkingstad2008unsupervised, lawyer2009local, zablocki2014covariate}. Unlike our method (and IHW and SABHA), these approaches hinge on the correct specification of the model and might lose the statistical guarantee if the proposed model deviates from the ground truth. By contrast, our method (and IHW and SABHA) rely only on validity of $p$-values (see assumptions of Theorem \ref{thm:fdr} in next Section) and guarantee FDR control even when employing a misspecified model.

\subsection{Outline}

Section~\ref{sec:adapt-proc} defines the AdaPT procedure more formally and gives our main result: if the null $p$-values are independent and mirror-conservative (defined below), AdaPT controls FDR at level $\alpha$ in finite samples.  Section~\ref{sec:admissible} explains why selection of $s_{t+1}(x)$ will typically operate by first estimating the conditional density $f(p \mid x)$ as a function of $x$, and Section~\ref{sec:implementation} gives practical suggestions for update rules. Section~\ref{sec:experiments} illustrates the AdaPT procedure's power on five real datasets and two simulated datasets, and Section~\ref{sec:discussion} concludes. The programs to replicate all our experiments can be obtained from $\texttt{https://github.com/lihualei71/adaptPaper/}$. Our $\texttt{R}$ package $\texttt{adaptMT}$ can be found in $\texttt{https://github.com/lihualei71/adaptMT/}$.

\section{The AdaPT procedure}\label{sec:adapt-proc}

\subsection{Notation and assumptions}\label{subsec:notation}

Let $[n]$ denote the set $\{1, \ldots, n\}$. For each hypothesis $H_i$, $i\in [n]$ we observe $x_i\in \cX$ and $p_i\in [0,1]$. Let $\cH_0$ denote the set of true null hypotheses. We will assume throughout that $(p_i)_{i\in \cH_0}$ are mutually independent, and independent of $(p_i)_{i\notin \cH_0}$ (see Section~\ref{sec:discussion} for a discussion of how we might relax the independence assumption). Finally, for each $i\in \cH_0$, we assume that $p_i$ is either uniform or mirror-conservative in a sense we will define shortly.

Let $\cF_t$ for $t=0,1,\ldots$ represent the filtration generated by all information available to the user at step $t$:
\[
\cF_t = \sigma\left( (x_i, \tp_{t,i})_{i=1}^n, A_t, R_t \right).
\]
We similarly define an initial $\sigma$-field with all $p$-values masked, $\cF_{-1} = \sigma\left( (x_i, \{p_i,1-p_i\})_{i=1}^n\right).$
The $p$-value masking is equivalent to requiring that $s_{t+1}\in \cF_t$. (For simplicity we have implicitly ruled out the possibility that the analyst uses a randomized rule to update the threshold, but this restriction could be easily removed.) The two constraints $s_{t+1} \preceq s_t$ and $s_{t+1}\in \cF_t$ ensure that $(\cF_t)_{t=-1,0,1,\ldots}$ is a filtration; i.e., the information in $\cF_t$ only grows from $t$ to $t+1$:
\begin{lemma}
  For all $t\geq -1$, $\cF_{t} \subseteq \cF_{t+1}$.
\end{lemma}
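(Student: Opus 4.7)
The plan is to proceed by induction on $t$, verifying that each generator of $\cF_t$---namely the $x_i$'s, the masked $p$-values $\tp_{t,i}$, and the counts $A_t$ and $R_t$---is $\cF_{t+1}$-measurable. The base case $t=-1$ is immediate, since the unordered pair $\{p_i,1-p_i\}$ is recoverable from $\tp_{0,i}$: it equals $\tp_{0,i}$ itself when $p_i$ is masked at step $0$, and equals $\{\tp_{0,i},1-\tp_{0,i}\}$ when $p_i$ is unmasked.

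For the inductive step, the workhorse observation is the monotonicity $s_{t+1}\preceq s_t$, which forces the unmasked region $\{(x,p):s_t(x)<p<1-s_t(x)\}$ at step $t$ to be contained in its step-$(t+1)$ counterpart. I would recover $\tp_{t,i}$ from $\tp_{t+1,i}$ by a three-way case analysis on where $p_i$ sits: if $p_i$ is already unmasked at step $t$, then it remains unmasked at step $t+1$, so $\tp_{t,i}=\tp_{t+1,i}=p_i$; if $p_i$ is masked at step $t$ but unmasked at step $t+1$, we read $p_i$ off $\tp_{t+1,i}$ and form the pair $\tp_{t,i}=\{p_i,1-p_i\}$; if $p_i$ remains masked at both steps, $\tp_{t,i}=\tp_{t+1,i}$. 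Which case we are in is decidable from $x_i$, $s_t$, and $s_{t+1}$. The same monotonicity yields
\[
R_t=R_{t+1}+\left|\{i:s_{t+1}(x_i)<p_i\le s_t(x_i)\}\right|,
\]
and the analogous identity for $A_t$; every index appearing in a correction term is unmasked at step $t+1$ (its $p_i$ has moved into the central region there), so $\tp_{t+1,i}=p_i$ is known and the correction is itself $\cF_{t+1}$-measurable.

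The main technical obstacle is the bookkeeping on the thresholds $s_t$ and $s_{t+1}$, which appear inside every reconstruction formula and must themselves be $\cF_{t+1}$-measurable. By the \update-rule convention $s_k$ is $\cF_{k-1}$-measurable, and the nested inductive hypothesis $\cF_{-1}\subseteq\cdots\subseteq\cF_t$ places $s_0,\ldots,s_{t+1}$ inside $\cF_t$; once the current step's inclusion $\cF_t\subseteq\cF_{t+1}$ is closed by the reconstructions above, these thresholds also lie in $\cF_{t+1}$, and the induction is self-consistent since each step only invokes inclusions from strictly earlier steps to establish threshold measurability. The substantive content of the argument is entirely in the monotonicity-driven case analysis; the rest is just verifying that information accumulates as advertised.
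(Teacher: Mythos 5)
Your reconstruction formulas are exactly the ones the paper uses, but the induction you wrap them in does not close. The gap is the measurability of $s_t$. To recover $\tp_{t,i}$ from $\tp_{t+1,i}$ (and to evaluate the correction term $|\{i:\, s_{t+1}(x_i)<p_i\le s_t(x_i)\}|$) you must test the $p$-values revealed at step $t+1$ against $s_t(x_i)$, so you need $s_t\in\cF_{t+1}$. What you actually have is $s_t\in\cF_{t-1}$, and the consecutive inclusions established at earlier steps of your induction only give $\cF_{t-1}\subseteq\cF_{t}$; promoting this to $\cF_{t-1}\subseteq\cF_{t+1}$ requires $\cF_t\subseteq\cF_{t+1}$, which is the very statement you are in the middle of proving. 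Your closing claim that ``each step only invokes inclusions from strictly earlier steps to establish threshold measurability'' is therefore not correct: knowing $s_t\in\cF_t$ and knowing that the generators of $\cF_t$ are functions of (the generators of $\cF_{t+1}$, $s_t$) gives only an implicit fixed-point relation, not an expression of $\cF_t$'s generators in terms of $\cF_{t+1}$'s. Declaring the induction ``self-consistent'' does not break this circle.

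The fix is the reorganization the paper uses: fix the target $\sigma$-field $\cF_{t+1}$ and induct upward on the lower index $u=-1,0,\ldots,t$, proving $\cF_u\subseteq\cF_{t+1}$ at each stage. At stage $u$ the inductive hypothesis $\cF_{u-1}\subseteq\cF_{t+1}$ already delivers $s_u\in\cF_{t+1}$ (since $s_u\in\cF_{u-1}$), and $\tp_{u,i}$ is computable from $\tp_{t+1,i}$ and $s_u(x_i)$ alone: a two-element set at step $t+1$ is still a two-element set at step $u$ by monotonicity, and a singleton at step $t+1$ is simply compared against $s_u(x_i)$. Likewise $R_u-R_{t+1}$ counts the singleton $\tp_{t+1,i}\le s_u(x_i)$, which again needs only $s_u$. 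Every threshold is then grounded by the time it is needed, because the recursion bottoms out at the deterministic $s_0\in\cF_{-1}$. The substantive content of your write-up --- the monotonicity-driven case analysis and the counting identities --- is all correct and matches the paper; only the scaffolding of the induction needs this rearrangement.
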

\begin{proof}
  We use induction on $u$ to show that $\cF_{u} \subseteq \cF_{t}$ for any $u\leq t$. The conclusion is trivial for $u = -1$ since $\{p_i,1-p_i\}$ is always computable from $p_{t,i}$ (masked $p$-values can always be computed from masked or unmasked ones).

For $u\geq 0$, note that, by the inductive assumption, $s_{u}\in \cF_{u-1} \subseteq \cF_{t}$. As a result, we can compute $p_{u,i}$ which depends only on $p_{t,i}$ and $s_u(x_i)$. Furthermore, 
\[
R_u = R_t + \#\{i:\; p_{t,i} \in (s_t(x_i),s_u(x_i)]\}, \quad 
A_u = A_t + \#\{i:\; p_{t,i} \in [1-s_u(x_i), 1-s_t(x_i))\},
\]
completing the proof.
\end{proof}

To avoid trivialities we assume that the analyst always reveals at least one censored $p$-value in each step of the algorithm, since there is no reason ever to update the threshold surface in a way that reveals no new information. Thus, the stopping time $\hat{t} \leq n$ almost surely.

In many common settings, null $p$-values are conservative but not necessarily exactly uniform. For example, $p$-values from permutation tests are discrete, and $p$-values for composite null hypotheses are often conservative if the true value of the parameter lies in the interior of the null. 

Our method does not require uniformity, but the standard definition of conservatism --- that $\P_{H_i}(p_i \leq a) \leq a$ for all $0 \leq a \leq 1$ --- is {\em not} enough to guarantee FDR control. Instead, we say that a $p$-value $p_i$ is {\em mirror-conservative} if
\begin{equation}\label{eq:mirror-conservative}
\P_{H_i}(p_i \in [a_1,a_2]) \leq \P_{H_i}(p_i \in [1-a_2,1-a_1]), \quad\text{ for all }  0 \leq a_1 \leq a_2 \leq 0.5.
\end{equation}
If $p_i$ is discrete, \eqref{eq:mirror-conservative} means $p_i=1-a$ is at least as likely as $p_i=a$ for $a\leq 0.5$; if $p_i$ has a continuous density, it means the density is at least as large at $1-a$ as at $a$. Mirror-conservatism is not a consequence of conservatism (take $p_i = 0.1 + 0.9 B$ where $B \sim \text{Bernoulli}(0.9)$), and neither does it imply conservatism (take $p_i = B$). Any null distribution with an increasing density is evidently both conservative and mirror-conservative.

Permutation $p$-values are mirror-conservative, as are $p$-values for one-sided tests of univariate parameters with monotone likelihood ratio (with discrete $p$-values randomized to be uniform at the boundary between the null and alternative). See Appendix \ref{subapp:mirror} for proofs of these claims.

\subsection{FDR control}

We are now prepared to prove our main result: the AdaPT procedure controls FDR in finite samples. The proof relies on a similar optional stopping argument as the one presented in \citet{lei2016power} and \citet{barber2016knockoff} (themselves modifications of arguments in \citet{storey04} and \citet{barber2015controlling}). Let $V_t$ and $U_t$ denote the numbers of {\em null} $p_i \leq s_t(x_i)$ and {\em null} $p_i \geq 1-s_t(x_i)$, respectively. If the null $p$-values are uniform then, no matter how we choose $s_t(x)$ at each step, we will always have $V_t \approx U_t$ and $\hFDP_t > \frac{U_t}{R_t \vee 1} \approx \frac{V_t}{R_t \vee 1}$.

\begin{lemma}\label{lem:bernoulli}
  Suppose that, conditionally on the $\sigma$-field $\cG_{-1}$, $b_1,\ldots,b_n$ are independent Bernoulli random variables with $\P(b_i = 1 \mid \cG_{-1}) = \rho_i \geq \rho > 0$, almost surely. Also suppose that $[n] \supseteq \cC_0 \supseteq \cC_1 \supseteq \cdots$, with each subset $\cC_{t+1}$ measurable with respect to
  \[
  \cG_t = \sigma\left(\cG_{-1}, \cC_t, (b_i)_{i \notin \cC_t}, \sum_{i \in \cC_t} b_i\right).
  \]

  If $\htt$ is an almost-surely finite stopping time with respect to the filtration $(\cG_t)_{t \geq 0}$, then
  \[
  \E\left[\frac{1 + |\cC_{\htt}|}{1 + \sum_{i\in \cC_{\htt}} b_i} \mid  \cG_{-1}\right]  \leq \rho^{-1}.
  \]
\end{lemma}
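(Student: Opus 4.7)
The natural quantity to analyze is $M_t := (1 + |\cC_t|)/(1 + \sum_{i \in \cC_t} b_i)$; the target inequality is $\E[M_{\htt} \mid \cG_{-1}] \leq \rho^{-1}$. I plan a supermartingale-plus-optional-stopping argument in three steps: (i) show $(M_t)_{t \geq 0}$ is a supermartingale with respect to $(\cG_t)_{t \geq 0}$; (ii) apply optional stopping to get $\E[M_{\htt} \mid \cG_{-1}] \leq \E[M_0 \mid \cG_{-1}]$; (iii) bound the initial value via a Robbins-type integration trick.

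Steps (ii) and (iii) are the clean parts. For (ii), $M_t \in [0, 1 + n]$ is bounded and $\htt < \infty$ a.s., so the optional stopping theorem applies once (i) is established. For (iii), using $1/(1 + s) = \int_0^1 x^s\, dx$, conditional independence of $(b_i)_{i \in \cC_0}$ given $\cG_{-1}$, and $\rho_i \geq \rho$,
\[
\E\!\left[\frac{1}{1 + \sum_{i \in \cC_0} b_i} \,\Big|\, \cG_{-1}\right] = \int_0^1 \prod_{i \in \cC_0}\bigl(1 - \rho_i(1-x)\bigr)\, dx \leq \int_0^1 \bigl(1 - \rho(1-x)\bigr)^{|\cC_0|}\, dx \leq \frac{1}{(|\cC_0| + 1)\rho},
\]
so $\E[M_0 \mid \cG_{-1}] \leq \rho^{-1}$.

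Step (i) is the heart of the argument. Conditional on $\cG_t$, the joint law of $(b_i)_{i \in \cC_t}$ is the independent Bernoulli product restricted to configurations with sum $\sum_{i \in \cC_t} b_i$; reducing WLOG to single-index reveals by refining the filtration, one checks $\E[M_{t+1} \mid \cG_t] \leq M_t$ from this explicit conditional distribution. The main obstacle is the heterogeneous case where the $\rho_i$'s are unequal: a greedy analyst may reveal an index $j$ for which $\P(b_j = 1 \mid \cG_t)$ exceeds the average $\sum_{i \in \cC_t} b_i / |\cC_t|$, so the pointwise supermartingale inequality can fail locally. In the homogeneous case $\rho_i \equiv \rho$, exchangeability forces $\P(b_j = 1 \mid \cG_t)$ to equal this average and $(M_t)$ is in fact a martingale. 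I expect the cleanest route for general $\rho_i \geq \rho$ is a change-of-measure reduction to the symmetric case, exploiting that the likelihood ratio $L(b) = \prod_i (\rho_i/\rho)^{b_i}\bigl((1-\rho_i)/(1-\rho)\bigr)^{1-b_i}$ is monotone increasing in each $b_i$ (since $\rho_i \geq \rho$ implies $\rho_i/\rho \geq (1-\rho_i)/(1-\rho)$), combined with an FKG-type correlation inequality matched against the stochastic monotonicity of $M_{\htt}$ in $(b_i)$.
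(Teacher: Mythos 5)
Your overall architecture (supermartingale, optional stopping on the bounded process, then a bound on the initial value) matches the paper's, and your Robbins-type integral bound $\E[1/(1+\sum_{i\in\cC_0}b_i)\mid\cG_{-1}]=\int_0^1\prod_{i\in\cC_0}(1-\rho_i(1-x))\,dx\le 1/((|\cC_0|+1)\rho)$ is correct and in fact a slicker route to the initial bound than the paper's binomial identity. The homogeneous-case martingale computation is also right. But the heart of the lemma is exactly the step you leave open --- the heterogeneous case --- and the fix you sketch does not work. The FKG/association inequality you invoke would require $M_{\htt}$ to be a coordinatewise decreasing function of $(b_1,\ldots,b_n)$, and it is not: although $(1+|\cC|)/(1+\sum_{i\in\cC}b_i)$ is decreasing in $b$ for a \emph{fixed} set $\cC$, the sets $\cC_{t+1}$ and the stopping time $\htt$ are arbitrary measurable functions of the observed sums, so flipping some $b_j$ from $0$ to $1$ can reroute the entire trajectory. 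Concretely, with $n=3$, $\cC_0=\{1,2,3\}$, the rule ``if $\sum_i b_i\le 1$ set $\cC_1=\{1\}$, else set $\cC_1=\{2,3\}$,'' and $\htt=1$: the configuration $(1,0,0)$ gives $M_1=2/2=1$ while the larger configuration $(1,1,0)$ gives $M_1=3/2$. So the monotonicity needed for your change-of-measure/FKG step is simply false, and step (i) remains unproven for unequal $\rho_i$.

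The paper closes this gap with a coupling rather than a change of measure: write $b_i=q_i\1\{i\in\cA\}+\1\{i\notin\cA\}$, where the $q_i$ are i.i.d.\ Bernoulli$(\rho)$ and $\cA$ is an independent random set with $\P(i\in\cA\mid\cG_{-1})=(1-\rho_i)/(1-\rho)$; this reproduces the correct heterogeneous law. Conditioning on $\cA$ and augmenting the filtration to include it, one runs your exchangeable homogeneous argument on the restricted quantities $U_t^{\cA}=\sum_{i\in\cC_t\cap\cA}b_i$ and $Z_t^{\cA}=(1+|\cC_t\cap\cA|)/(1+U_t^{\cA})$, which \emph{is} a supermartingale because $(q_i)_{i\in\cC_t\cap\cA}$ are exchangeable given the augmented $\sigma$-field. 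The target quantity is controlled by the pointwise inequality $(1+|\cC_t|)/(1+U_t)\le Z_t^{\cA}$, which holds because the indices outside $\cA$ contribute $1$ to both numerator and denominator of a ratio that is at least $1$. If you replace your FKG step with this thinning device, the rest of your argument (including your integral bound, applied to the Binomial$(|\cC_0\cap\cA|,\rho)$ sum) goes through.
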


Our Lemma~\ref{lem:bernoulli} generalizes Lemma 1 in \citet{barber2016knockoff} and uses a very similar technical argument. The proof is given in the appendix. Using Lemma~\ref{lem:bernoulli}, we can give our main result:

\begin{theorem}\label{thm:fdr}
  Assume that the null $p$-values are independent of each other and of the non-null $p$-values, and the null $p$-values are uniform or mirror-conservative. Then the AdaPT procedure controls the FDR at level $\alpha$, conditional on $\cF_{-1}$ and also marginally.
\end{theorem}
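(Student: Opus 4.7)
The strategy is to reduce to Lemma~\ref{lem:bernoulli} via an optional-stopping argument in the style of \citet{barber2015controlling} and \citet{lei2016power}. Let $V_t$ and $U_t$ denote the numbers of \emph{null} $p_i$ with $p_i \leq s_t(x_i)$ and with $p_i \geq 1 - s_t(x_i)$ respectively, so that $V_t \leq R_t$ and $U_t \leq A_t$. At the stopping time $\htt$ the exit condition $\hFDP_{\htt} \leq \alpha$ together with $A_{\htt} \geq U_{\htt}$ gives
\[
\FDP = \frac{V_{\htt}}{R_{\htt} \vee 1} \;\leq\; \alpha \cdot \frac{V_{\htt}}{1 + A_{\htt}} \;\leq\; \alpha \cdot \frac{V_{\htt}}{1 + U_{\htt}},
\]
so it suffices to show $\E[V_{\htt}/(1 + U_{\htt}) \mid \cF_{-1}] \leq 1$.

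To invoke Lemma~\ref{lem:bernoulli}, for each $i \in \cH_0$ set $b_i = \1\{p_i \geq 1/2\}$, let $\cC_t = \{i \in \cH_0:\; p_i \leq s_t(x_i) \text{ or } p_i \geq 1 - s_t(x_i)\}$ denote the still-masked null indices, and enlarge the base $\sigma$-field to $\cG_{-1} = \sigma\bigl(\cF_{-1},\, (p_i)_{i \notin \cH_0}\bigr)$, which treats every non-null $p$-value as fixed. Since $s_t(x) \leq 1/2$ is implicit in the masking definition, we have $|\cC_t| = V_t + U_t$ and $\sum_{i \in \cC_t} b_i = U_t$. Mirror-conservativity implies $\P(b_i = 1 \mid \cG_{-1}) \geq 1/2$ for each null, and mutual independence of the null $p$-values (and their independence from the non-nulls) delivers the required conditional independence of the $b_i$, so the hypotheses of Lemma~\ref{lem:bernoulli} hold with $\rho = 1/2$.

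The key technical step is to verify that $\cC_{t+1}$ is measurable with respect to $\cG_t = \sigma\bigl(\cG_{-1},\, \cC_t,\, (b_i)_{i \in \cH_0 \setminus \cC_t},\, \sum_{i \in \cC_t} b_i\bigr)$. I would prove by induction on $t$ that $\cF_t \subseteq \cG_t$: inductively $s_t$ is $\cG_{t-1}$-measurable, hence $\cG_t$-measurable, and one can then reconstruct every $\tp_{t,i}$ from $\cG_t$. For non-nulls $p_i$ is in $\cG_{-1}$; for $i \in \cC_t$ the masked pair $\{p_i, 1-p_i\}$ is in $\cG_{-1}$; and for $i \in \cH_0 \setminus \cC_t$ the pair $\{p_i,1-p_i\}$ together with $b_i$ pins down $p_i$. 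The counts $A_t$ and $R_t$ decompose into a non-null contribution (computable from $\cG_{-1}$ and $s_t$) and a null contribution equal to $\sum_{i \in \cC_t} b_i$ or $|\cC_t| - \sum_{i \in \cC_t} b_i$, all of which are $\cG_t$-measurable. Consequently $s_{t+1}$, being $\cF_t$-measurable, is also $\cG_t$-measurable; combined with $s_{t+1} \preceq s_t$, this gives $\cC_{t+1} \subseteq \cC_t$ with $\cC_{t+1} \in \cG_t$.

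Applying Lemma~\ref{lem:bernoulli} then yields $\E\bigl[(1 + V_{\htt} + U_{\htt})/(1 + U_{\htt}) \mid \cG_{-1}\bigr] \leq 2$, which is equivalent to $\E\bigl[V_{\htt}/(1 + U_{\htt}) \mid \cG_{-1}\bigr] \leq 1$; taking a further conditional expectation against $\cF_{-1} \subseteq \cG_{-1}$ and invoking the display above completes both the conditional and the marginal FDR bound. The main obstacle is exactly the measurability bookkeeping in the inductive step: one must carefully track what the analyst sees through $\cF_t$ versus what the coarser coin-flip view $\cG_t$ encodes, and confirm that every quantity driving the \update rule can be reconstructed from $\cG_t$ alone. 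The mirror-conservative hypothesis, meanwhile, is exactly what powers the $\rho \geq 1/2$ bound once the framework is in place.
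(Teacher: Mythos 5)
Your proposal is correct and follows essentially the same route as the paper's proof: the same FDP decomposition via the stopping condition and $U_{\htt} \leq A_{\htt}$, the same reduction to Lemma~\ref{lem:bernoulli} with $b_i = \1\{p_i \geq 1/2\}$, $\rho = 1/2$, and $\cC_t$ the set of still-masked nulls, and the same inductive verification that $\cF_t \subseteq \cG_t$ (your $\cG_{-1}$ is equivalent to the paper's $\sigma((x_i,m_i)_{i=1}^n, (b_i)_{i\notin\cH_0})$). No gaps.
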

\begin{proof}
  Let $\htt$ denote the step at which we stop and reject. Then
  \[
    \FDP_{\htt} 
    \;\;=\;\; \frac{V_{\htt}}{R_{\htt} \vee 1} 
    \;\;=\;\;  \frac{1+U_{\htt}}{R_{\htt} \vee 1} \,\cdot\,\frac{V_{\htt}}{1 + U_{\htt}} 
    \;\;\leq\;\; \alpha \frac{V_{\htt}}{1 + U_{\htt}},
  \]
  where the last step follows from the stopping condition that $\hFDP_{\htt}\leq \alpha$, and the fact that $U_t \leq A_t$. We will finish the proof by establishing that $\E[V_{\htt}/(1+U_{\htt})] \leq 1$, using Lemma~\ref{lem:bernoulli}. 

  Let $m_i = \min\{p_i,1-p_i\}$ and $b_i = \1\left\{p_i \geq 0.5\right\}$, so $p_i = b_i (1-m_i) + (1-b_i) m_i$. Then knowing $b_i$ and $m_i$ is equivalent to knowing $p_i$. Let $\cC_t = \{i\in \cH_0:\; p_i \notin (s_t(x_i), 1-s_t(x_i))\}$, representing the null $p$-values that are {\em not} visible to the analyst at time $t$. Then,
  \[
  U_t = \sum_{i\in \cC_t} b_i, \quad \text{ and } \quad V_t = \sum_{i\in \cC_t} (1-b_i) = |\cC_t|-U_t.
  \]

  Further, define the $\sigma$-fields
  \[
  \cG_{-1} = \sigma\left( \left(x_i, m_i\right)_{i=1}^n, \; (b_i)_{i\notin \cH_0} \right), \quad \text{ and } \quad \cG_t = \sigma\left( \cG_{-1}, \cC_t, (b_i)_{i \notin \cC_t}, U_t \right).
  \]

  The assumptions of independence and mirror-conservatism guarantee $\P\left(b_i = 1 \mid \cG_{-1}\right) \geq 0.5$ almost surely for each $i \in \cH_0$, with the $b_i$ conditionally independent.
  
  Next, note that $\cF_t \subseteq \cG_t$ because $p_i\in \cG_t$ for each $p_i \in (s_t(x_i), 1-s_t(x_i))$, and
  \[
  A_t = U_t + \left|\{i\notin \cH_0:\; p_i \geq 1-s_t(x_i)\}\right|,
  \]
  and $R_t \in \cG_t$ by a similar argument. It follows that ${\htt = \min \{t:\; \hFDP_t \leq \alpha\}}$ is a stopping time with respect to $\cG_t$; furthermore, ${\cC_{t+1} \in \cF_t \subseteq \cG_t}$ by assumption.

  As a result, conditional on $\cG_{-1}$, we can apply Lemma~\ref{lem:bernoulli} to obtain
  \[
  \E [\FDP \mid \cG_{-1}] \;\;\leq\;\; \alpha \,\E\left[\frac{V_{\htt}}{1 + U_{\htt}} \mid \cG_{-1}\right]
  \;\;=\;\; \alpha \,\E\left[\frac{1 + |\cC_{\htt}|}{1 + U_{\htt}} - 1\mid \cG_{-1}\right]
  \;\;\leq\;\; \alpha \left( 2 - 1\right)
  \;\;=\;\; \alpha.
  \]
Note that $\cF_{-1}\subset \cG_{-1}$. The proof is completed by applying the tower property of conditional expectation. 
\end{proof}

The main technical point of departure for our method is that the optional stopping argument is not merely a technical device to prove FDR control for a fixed algorithm like the BH, Storey-BH, or Knockoff+ procedures. Instead, we push the optional-stopping argument to its limit, allowing the analyst to interact with the data in a much more flexible and adaptive way. Sections~\ref{sec:hybrid}--\ref{sec:knockoffs} further investigate the connection to knockoffs.

\section{A Guideline To Choose Thresholding Rules}\label{sec:admissible}

Although the AdaPT procedure controls FDR no matter how we update the threshold, its power depends on the quality of the updates. This section concerns the question of what thresholds we would choose if we had perfect knowledge of the data-generating distribution, with Section~\ref{sec:implementation} discussing suggestions for learning optimal thresholds from the data. To establish a guideline for threshold update, we consider a conditional two-groups model as the \emph{working model}. As we will see, under mild conditions, the Bayes-optimal rejection thresholds are the level surfaces of the {\em local false discovery rate} (fdr), defined as the probability that a hypothesis is null conditional on its $p$-value. The local FDR was first discussed by \citet{efron2001empirical}; see also \citet{efron2007size}. A similar result is obtained by \citet{storey2007optimal} under a different framework.

\subsection{The two-groups model and local false discovery rate}

To begin, we assume a {\em two-groups model} conditional on the predictors $x_i$. Letting $H_i=0$ if the $i$th null is true and $H_i=1$ otherwise, we assume:
\begin{align*}
  H_i \mid x_i &\sim \text{Bernoulli}(\pi_1(x_i))\\
  p_i \mid H_i, x_i &\sim \begin{cases} f_0(p \mid x_i) & \text{ if } H_i = 0\\ f_1(p \mid x_i) & \text{ if } H_i = 1\end{cases}.
\end{align*}
In addition, we assume that $(x_i, H_i, p_i)$ are independent for $i\in [n]$. Unless otherwise stated we will assume for simplicity that both $f_0$ and $f_1$ are continuous densities, with $f_0(p \mid x) \equiv 1$ (null $p$-values are uniform) and $f_1(p \mid x)$ non-increasing in $p$ (smaller $p$-values imply stronger evidence against the null). Furthermore, define the conditional mixture density
\[
f(p \mid x) \;=\; \left(1-\pi_1(x)\right) f_0(p \mid x) + \pi_1(x) f_1(p \mid x) \;=\; 1-\pi_1(x) + \pi_1(x) f_1(p \mid x),
\]
and the conditional local false discovery rate
\[
\fdr(p \mid x) \;=\; \P(H_i \text{ is null } \mid x_i=x, p_i=p) 
\;=\; \frac{1 - \pi_1(x)}{f(p \mid x)}.
\]

Note that we never observe $H_i$ directly. Thus, while $f$ is identifiable from the data, $\pi_1$ and $f_1$ are not: for example, $\pi_1 = 0.5, f_1(p\mid x) = 2(1 - p)$ and $\pi_1 = 1, f_1(p\mid x) = 1.5 - p$ result in exactly the same mixture density. Unless $f_1(p \mid x)$ is known {\em a priori}, we can make the conservative identifying assumption that
\[
1-\pi_1(x) = \inf_{p \in [0,1]} f(p \mid x) = f(1 \mid x),
\]
attributing as many observations as possible to the null hypothesis. This approximation is very good when $\fdr(1 \mid x) \approx 1$, which is reasonable in many settings. Thus, any estimate $\hat{f}$ of the mixture density translates to a conservative estimate $\hfdr(p \mid x) = \hat{f}(1 \mid x) / \hat{f}(p \mid x)$.

\subsection{Optimal thresholds under the two-groups model}
Let $\nu$ be a probability measure on $\mathcal{X}$ and define a random variable $X\sim \nu$. Similar to \citet{sun15}, for any thresholding rule $s(x)$, we define the global FDR as
\[
\mathrm{FDR}(s; \nu) = \P(H = 0 \mid H \mbox{ is rejected}) = \P(H = 0 \mid P\le s(X))
\]
where $H$ and $P$ are a hypothesis and $p$-value distributed according to the two-groups model. The power is defined in a similar fashion as
\[
\mathrm{Pow}(s; \nu) = \P(H \mbox{ is rejected}\mid H = 1) = \P( P \le s(X) \mid H = 1).
\]
\citet{sun15} formulates a compound decision-theoretic framework by defining a Bayesian-type loss function. Instead, we propose a Neyman-Pearson type framework, i.e. 
\begin{equation}\label{eq:np}
\max_{s}\; \mathrm{Pow}(s; \nu)\quad \mathrm{s.t.} \,\,\mathrm{FDR}(s; \nu)\le \alpha.
\end{equation}

Next, define
\begin{align*}
  Q_0(s) &= \P(P \le s(X), H = 0) = \int_{\mathcal{X}}F_{0}(s(x) | x)(1-\pi_1(x))\nu(dx)\\
  Q_1(s) &= \P(P \le s(X), H = 1) = \int_{\mathcal{X}}F_{1}(s(x) | x)\pi_1(x)\nu(dx),
\end{align*}
where $F_0$ and $F_1$ are the cumulative distribution functions under the null and alternative. We can simplify~\eqref{eq:np} as
\begin{align}
  \max_{s}\; &\frac{Q_1(s)}{\P(H=1)} \quad \mathrm{s.t.} \,\,\frac{Q_0(s)}{Q_0(s) + Q_1(s)}\le \alpha\\
  \iff \min_{s}\; &-Q_1(s) \qquad \mathrm{s.t.} \,\, -\alpha Q_1(s) + (1-\alpha) Q_0(s) \le 0\\
  \iff \min_{s}\; &\int_{\mathcal{X}}-F_{1}(s(x) | x)\pi_1(x)\nu(dx)\nonumber\\
  \quad \mathrm{s.t.}& \,\, \int_{\mathcal{X}}\bigg\{-\alpha F_{1}(s(x) | x)\pi_1(x) + (1 - \alpha)F_{0}(s(x) | x)(1 - \pi_1(x))\bigg\}\nu(dx)\le 0.
                       \label{eq:np_simplify}
\end{align}

The corresponding Lagrangian function can be written as 
\begin{equation}\label{eq:lagrangian}
L(s; \lambda) = \int_{\mathcal{X}}\bigg\{-(1 + \lambda \alpha)F_{1}(s(x) | x)\pi_1(x) + \lambda(1 - \alpha)F_{0}(s(x) | x)(1 - \pi_1(x))\bigg\}\nu(dx).
\end{equation}
Let $s^{*}$ be the optimum, then the Karush-Kuhn-Tucker (KKT) condition (under regularity conditions) implies that 
\begin{align}
& (1 + \lambda \alpha)f_{1}(s^{*}(x) | x)\pi_1(x) = \lambda(1 - \alpha)f_{0}(s^{*}(x) | x)(1 - \pi_1(x))\nonumber\\
& \Longrightarrow \mathrm{fdr}(s^{*}(x) | x) = \frac{1 + \lambda \alpha}{1 + \lambda}.\label{eq:level_surface}
\end{align}
In other words, the optimal thresholding rules are level surfaces of local FDR. Theorem~\ref{thm:level_surface} formalizes the above derivation by clarifying the regularity conditions.
\begin{theorem}\label{thm:level_surface}
  Assume that 
  \begin{enumerate}[(a)]
  \item $f_{1}(p \mid x_{i})$ is continuously non-increasing and $f_{0}(p \mid x_{i})$ is continuously non-decreasing and uniformly bounded away from $\infty$;
  \item $\nu$ is a discrete measure supported on $\{x_{1}, \ldots, x_{n}\}$ with $\nu(\{x_{i}: \fdr(0 \mid x_{i}) < \alpha, f(0 \mid x_{i}) > 0\}) > 0$.
  \end{enumerate}
 Then \eqref{eq:np} has at least a solution, and all solutions are level surfaces of $\mathrm{fdr}(p \mid x)$.
\end{theorem}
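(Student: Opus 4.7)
The plan is to combine a compactness-based existence argument with Lagrangian duality that exploits the finite support of $\nu$ to reduce the problem to a pointwise optimization whose minimizers are precisely the level surfaces of $\mathrm{fdr}(\cdot\mid x_i)$.

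First, since $\nu$ is supported on the finite set $\{x_1,\ldots,x_n\}$, any threshold rule $s$ is identified with a vector $(s(x_1),\ldots,s(x_n))\in [0,1]^n$, and by (a) the CDFs $F_0, F_1$ are continuous, so both $Q_1$ and the constraint $g(s):=-\alpha Q_1(s)+(1-\alpha)Q_0(s)$ are continuous on the compact cube $[0,1]^n$. The feasible set is closed and contains $s\equiv 0$, so a maximizer exists. For strict feasibility I rewrite $g(s)=\sum_i \nu_i \int_0^{s(x_i)} f(p\mid x_i)\,[\mathrm{fdr}(p\mid x_i)-\alpha]\,dp$; by (b) there is some $x_i$ with $\nu(\{x_i\})>0$, $f(0\mid x_i)>0$ and $\mathrm{fdr}(0\mid x_i)<\alpha$, and the perturbation $s_\epsilon=\epsilon\,\mathbf{1}\{x=x_i\}$ then gives $g(s_\epsilon)<0$ for small $\epsilon>0$, verifying Slater's condition.

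Next, the Lagrangian in~\eqref{eq:lagrangian} decomposes as $L(s,\lambda)=\sum_i \nu_i \ell_i(s(x_i),\lambda)$, with $\partial_t \ell_i(t,\lambda) = f(t\mid x_i)\bigl[(1+\lambda)\,\mathrm{fdr}(t\mid x_i)-(1+\lambda\alpha)\bigr]$. Assumption (a) yields the monotone likelihood ratio $f_0/f_1$, so $\mathrm{fdr}(\cdot\mid x_i)$ is non-decreasing in $p$ and each $\ell_i(\cdot,\lambda)$ is quasi-convex on $[0,1]$. Its minimizers therefore satisfy $\mathrm{fdr}(t^*\mid x_i)=c_\lambda:=(1+\lambda\alpha)/(1+\lambda)$, with the appropriate boundary convention when this equation has no solution in $[0,1]$; in particular, every pointwise minimizer of $L(\cdot,\lambda)$ lies on a $c_\lambda$-level surface of $\mathrm{fdr}$.

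To finish, I match $\lambda^*$ to the constraint. Let $s^*_\lambda$ be any pointwise-minimizing selection and $G(\lambda):=g(s^*_\lambda)$; a standard crossing-inequality argument shows that $G$ is non-increasing. Excluding the trivial case in which $s\equiv 1$ is already feasible, $G(0)>0$, while as $\lambda\to\infty$ one has $c_\lambda\to\alpha$, making $f(p\mid x_i)[\mathrm{fdr}(p\mid x_i)-\alpha]$ non-positive on $[0,s^*_\lambda(x_i)]$ so $G(\lambda)\leq 0$. Choosing $\lambda^*$ with $G(\lambda^*)=0$—filling any jump by varying $s^*$ within a plateau of $\mathrm{fdr}(\cdot\mid x_i)$, where $g$ varies continuously because $\mathrm{fdr}-\alpha$ is constant there—weak duality gives $-Q_1(s)\geq L(s,\lambda^*)\geq L(s^*,\lambda^*) = -Q_1(s^*)$ for every feasible $s$, so $s^*$ is optimal; and any other optimum $s^{**}$ must likewise attain the pointwise minimum of every $\ell_i(\cdot,\lambda^*)$ and hence lie on the same $c_{\lambda^*}$-level surface of $\mathrm{fdr}$. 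The main obstacle is that the problem is not convex in $s$, so a standard Slater-plus-KKT toolkit does not apply directly; the pointwise decomposition of the Lagrangian together with the monotonicity of $\mathrm{fdr}(\cdot\mid x_i)$ from (a) restores a clean quasi-convex pointwise structure, and the only real technicality is the plateau-handling step used to produce $\lambda^*$ with exact complementary slackness.
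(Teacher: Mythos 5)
Your proof is correct in its essentials, but it takes a genuinely different route from the paper's, and it mischaracterizes the paper's situation in one respect. The paper's proof rests on the observation that assumption (a) makes \eqref{eq:np_simplify} a \emph{convex} program: $f_1(\cdot\mid x_i)$ non-increasing makes $F_1$ concave (so $-F_1$ is convex) and $f_0(\cdot\mid x_i)$ non-decreasing makes $F_0$ convex, hence both the objective and the constraint $g$ are convex in $(s_1,\ldots,s_n)$. The entire written proof is then devoted to verifying Slater's condition via an explicit quantitative construction of a strictly feasible ball around a point built from the $x_1$ guaranteed by (b), after which necessity of the KKT conditions delivers the level-surface identity \eqref{eq:level_surface} already derived in the main text. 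Your closing remark that ``the problem is not convex in $s$, so a standard Slater-plus-KKT toolkit does not apply directly'' is therefore wrong -- convexity is exactly what (a) buys and is the crux of the paper's argument. That said, your argument never uses convexity: you prove existence by compactness, decompose the Lagrangian pointwise over the finite support, use the monotonicity of $\mathrm{fdr}(\cdot\mid x_i)$ (which does follow from (a)) to make each $\ell_i(\cdot,\lambda)$ quasi-convex with minimizers on the $c_\lambda$-level set, construct $\lambda^*$ by monotonicity of $G$ plus plateau interpolation, and close with weak duality. This is more self-contained -- it establishes existence and exact complementary slackness constructively, both of which the paper leaves implicit -- and it would survive even without the convex structure; the paper's route is shorter but outsources the heavy lifting to Slater/KKT theory. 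Two small imprecisions you should repair: for finite $\lambda$ the integrand $f(p\mid x_i)\left[\mathrm{fdr}(p\mid x_i)-\alpha\right]$ on $[0,s^*_\lambda(x_i)]$ is bounded above by $f\cdot(c_\lambda-\alpha)>0$, not non-positive, so you only get $\limsup_{\lambda\to\infty}G(\lambda)\le 0$ directly (eventual strict negativity then follows from the point supplied by (b), where $\mathrm{fdr}(0\mid x_i)<\alpha$ and $f(0\mid x_i)>0$); and pointwise minimizers sitting at the boundary $t=0$ need not satisfy $\mathrm{fdr}(0\mid x_i)=c_{\lambda^*}$, so ``all solutions are level surfaces'' must be read with the same boundary convention the paper itself implicitly adopts in \eqref{eq:level_surface}.
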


In practice, any conservative null distribution (stochastically dominated by $U([0, 1])$) with positive density at zero satisfies condition (a). The monotonicity of $f_{1}$ is also valid since smaller p-values imply stronger evidence against null. In condition (b), the assumption on the support is reasonable since we treat $\{x_{i}: i \in [n]\}$ as fixed and hence only the quantities associated with these values are of interest. We believe it can be relaxed to more general measures and will not discuss it due to the technical complication. In contrast, the second requirement is necessary since it implies the feasibility of the problem. If the local FDR is above $\alpha$ almost everywhere, no thresholding rule is able to control FDR at $\alpha$. As mentioned above, we can set $s$ as the level surfaces of $\widehat{\mathrm{\fdr}}(p \mid x) = \hat{f}(1 \mid x) / \hat{f}(p \mid x)$ given some estimator $\hat{f}(p\mid x)$. The next section discusses estimation of $\hat{f}(p\mid x)$.

\section{Implementation}\label{sec:implementation}

Having shown that level surfaces of the local FDR are optimal under the two-groups model, we now turn to estimation of $\fdr(p\mid x)$, which boils down to estimation of the conditional density $f(p\mid x)$. This section discusses a flexible framework for conditional density estimation that can perform favorably when no domain-specific expertise can be brought to bear. 

More generally, we should model the data using as much domain-specific expertise as possible. We emphasize once more that, no matter how misspecified our model is, no matter how misguided our priors are (if we use a Bayesian method), no matter how we select a model or tuning parameter, or how much that selection biases our resulting estimate of local FDR, the AdaPT procedure nevertheless controls global FDR. Thus, there is every reason to be relatively aggressive in choosing a modeling strategy. 

\subsection{Conditional density estimation via the expecation maximization algorithm}

Generically, we can model the conditional density by a parametric family where we assume null p-values are uniform distributed, i.e. $f_{0}(p \mid x_i)\equiv 1$, and each non-null p-value has a density in the following exponential family, indexed by a univariate parameter $\eta_i$:
\begin{equation}\label{eq:exp_family1}
  f_{1}(p \mid x_i) = h(p; \eta_i) \triangleq e^{\eta_i g(p) - B(\eta_i)}.
\end{equation}
Note that $\eta_i$ and $g(p)$ can be vectors but we focus on the scalar case for simplicity. Let 
\begin{equation}\label{eq:exp_family_transform}
y_i = g(p_i), \quad \mu_i = B'(\eta_i).
\end{equation}
Using the standard argument, \eqref{eq:exp_family1} implies that 
\begin{align}
&  \E_{\eta_i}[y_i] = \E_{\eta_i}[g(p_i)] = B'(\eta_i) = \mu_i,\label{eq:mu}
\end{align}
where $\E_{\eta_i}$ denotes the expectation under $h(\cdot; \eta_i)$. If $g$ is not almost-everywhere constant, then $B''(\eta)=\text{Var}_\eta(y_i)>0$ and $B'$ is bijective. Then there is a one-to-one mapping from $\mu_i$ to $\eta_i$, denoted by $\eta_i = \eta(\mu_i)$ as convention. In fact, $\eta(\cdot) = (B')^{-1}(\cdot)$. Then \eqref{eq:exp_family1} can be reparametrized using $\mu_i$,
\begin{equation}\label{eq:exp_family2}
  h(p; \mu_i) = e^{\eta(\mu_i) g(p) - A(\mu_i)},
\end{equation}
where $A(\cdot) = B(\eta(\cdot))$ and we abuse the notation $h(p; \cdot)$. As we will see, it is more convenient to use the mean parametrization \eqref{eq:exp_family2}. 

Given \eqref{eq:exp_family2}, it is left to model $\pi_{1i} \triangleq \pi_1(x_i)$ and $\mu_i$ (or $\eta_i$ equivalently). In this article we consider the following generalized linear model where $\phi_{\pi}(x), \phi_{\mu}(x)$ denote two featurization and $\zeta$ denotes a link function:
\begin{align}
  H_i \mid x_i 
  &\sim \text{Bernoulli}(\pi_{1i}), 
    \qquad \text{with } \log\frac{\pi_{1i}}{1-\pi_{1i}} = \theta'\phi_{\pi}(x_i),
    \text{ and}\nonumber\\
  p_i \mid x_i, H_i  &\sim \left\{\begin{array}{ll} 
h(p; \mu_i) & \text{if } H_i=1\\
1 & \text{if } H_i=0 \end{array}\right.,
\quad\text{with } \zeta(\mu_i) = \beta'\phi_{\mu}(x_i).\label{eq:twogroup_GLM}
\end{align}
In particular, $\zeta(\cdot) = \eta(\cdot)$ gives the canonical link function. For instance, when $g(p) = -\log p, \eta(\mu) = -\frac{1}{\mu} + 1$ and $A(\mu) = \log \mu$, 
\begin{equation}\label{eq:beta_mixture}
f(p | x) = \pi_{1i}h(p; \mu_i) + (1-\pi_{1i}) = \pi_{1i}\cdot \frac{1}{\mu_i}p^{\frac{1}{\mu_i} - 1} + (1 - \pi_{1i}).
\end{equation}
This yields a beta-mixture model on the conditional density, which has been considered in literature, e.g. \cite{parker1988identifying, allison2002mixture, pounds2003estimating, markitsis2010censored}.

The fully-observed log-likelihood for the model~\eqref{eq:twogroup_GLM} is 
\begin{align}
  \ell(\theta, \beta; p, H, x) 
  &= \sum_{i=1}^n \left\{H_i \theta ' \phi_{\pi}(x_i) - \log\left(1+e^{-\theta'\phi_{\pi}(x_i)}\right)\right\}\nonumber\\
&  + \sum_{i=1}^{n} H_i\{y_i\cdot \eta\circ \zeta^{-1}(\beta'\phi_{\mu}(x_i)) - A\circ \zeta^{-1}(\beta'\phi_{\mu}(x_i))\}
\end{align}
Because some values of $y_i$ and all values of $H_i$ are unknown, we can use the expectation maximization (EM) algorithm to maximize the partially observed log-likelihood. To simplify estimation, we will proceed as though $A_t$ and $R_t$ are missing, so that the $(y_i, H_i)$ pairs are mutually independent given the predictors. That is, at step $t$ of the AdaPT procedure we attempt to maximize the likelihood of the data $D_t = (x_i, \tp_{t,i})_{i\in [n]}$ and treating $s_t$ as fixed.

Recall that $b_i = I(p_i\ge 0.5)$. There are four possible values of $(b_i, H_i)$, with each pair conditionally independent given $D_t$, and whose probabilities can be efficiently computed for any values of $\theta$ and $\beta$. Let ${r=0,1,\ldots}$ index stages of the EM algorithm (recall $t$ is fixed for the duration of the EM algorithm). For the E-step we compute the expectation of the log-likelihood,
\[
\E_{\hat\theta^{(r - 1)}, \hat\beta^{(r - 1)}}\left[\ell(\theta, \beta; y, H, x) | D_t \right],
\]
which amounts to computing the following quantities:
\begin{align}
\widehat{H}_i^{(r)} &= \E_{\hat\theta^{(r-1)}, \hat\beta^{(r-1)}}[H_i \mid D_t], \quad \text{ and }\label{eq:Hhat}\\
\hat{y}_i^{(r,1)} &= \E_{\hat\theta^{(r-1)}, \hat\beta^{(r-1)}}[y_i H_i \mid D_t, \;H_i=1] / \widehat{H}_i^{(r)},\label{eq:phat}
\end{align}
where $\hat\theta^{(r)}$ and $\hat\beta^{(r)}$ denote the current coefficient estimates. We derive the exact formula for \eqref{eq:Hhat} and \eqref{eq:phat} in Appendix \ref{subapp:Estep}. For the M-step, we set
\begin{align}
  \hat\theta^{(r)}, \hat\beta^{(r)} = \arg\max_{\beta,\theta}\;\; &\E_{\hat\theta^{(r-1)}, \hat\beta^{(r-1)}}\left[\ell(\theta, \beta; y, H, x) \mid D_t\right]\nonumber\\
  = \arg\max_{\beta,\theta}\;\; &\sum_{i=1}^n \widehat{H}_i^{(r)} \theta ' \phi_{\pi}(x_i) - \log\left(1+e^{-\theta'\phi_{\pi}(x_i)}\right)\nonumber\\
   + &\sum_{i=1}^n  \widehat{H}_i^{(r)}\cdot\lb\hat{y}_i^{(r,1)}\cdot \eta\circ \zeta^{-1}(\beta'\phi_{\mu}(x_i)) - A\circ \zeta^{-1}(\beta'\phi_{\mu}(x_i))\rb.\label{eq:Mstep}
\end{align}

The optimization above splits into two separate optimization problems, a logistic regression with predictors $\phi_{\pi}(x_i)$ and fractional responses $\widehat{H}_i^{(r)}$, and a GLM with predictors $\phi_{\mu}(x_i)$, responses $\hat{y}_i^{(r,1)}$, and weights $\widehat{H}_i^{(r)}$. Each of these GLM problems can be solved efficiently using the \texttt{glm} function in \texttt{R} (e.g. \cite{dobson2008introduction}). For $r=0$, we can initialize $\hat\theta^{(0)}$ and $\hat\beta^{(0)}$ by a simple method with details discussed in Appendix \ref{subapp:EM_init}. Algorithm~\ref{algo:em} formalizes the EM algorithm using R pseudocode. The family argument for estimating $\hat{\beta}^{(r)}$ depends on the form of exponential family \eqref{eq:exp_family2}. For example, \eqref{eq:Mstep} yields a Gamma GLM in the beta-mixture model \eqref{eq:beta_mixture}.

\begin{algorithm}
\caption{EM algorithm to estimate $\pi_1(\cdot)$ and $\mu(\cdot)$ based on $D_t = (x_i, \tp_{t,i})_{i\in [n]}$}\label{algo:em}
\textbf{Input:} data $D_t$, number of iterations $m$, initialization $\hat{\theta}^{(0)},\hat{\beta}^{(0)}$;
\begin{algorithmic}
  \For{$r = 1, 2,\ldots, m$}
  \State (\emph{E-step}):
  \State \qquad $\widehat{H}_i^{(r)} \gets \E_{\hat\theta^{(r-1)}, \hat\beta^{(r-1)}}[H_i \mid D_t], \quad i\in [n]$;
  \State \qquad $\hat{y}_i^{(r,1)} \gets \E_{\hat\theta^{(r-1)}, \hat\beta^{(r-1)}}[y_i H_i \mid D_t, H_i = 1] / \widehat{H}_{i}^{(r)}, \quad i\in [n]$;
  \State (\emph{M-step}): 
  \State \qquad $\hat\theta^{(r)} \gets \texttt{glm}\left(\widehat{H}^{(r)} \sim \phi_{\pi}(x), \texttt{ family = binomial}\right)$;
  \State \qquad $\hat\beta^{(r)} \gets \texttt{glm}\left(\hat{y}^{(r,1)} \sim \phi_{\mu}(x), \texttt{ family = ...} (\texttt{link} =  \zeta), \,\,\texttt{weights = } \widehat{H}^{(r)}\right)$;
  \EndFor
\end{algorithmic}
\textbf{Output: } $\hat{\pi}_1(x) = \left(1+e^{-\phi_{\pi}(x)'\hat\theta^{(m)}}\right)^{-1}, \;\;\hat{\mu}(x) = \zeta^{-1}\lb\phi_{\mu}(x)'\hat\beta^{(m)}\rb$.
\end{algorithm}

The GLM model~\eqref{eq:twogroup_GLM} provides the starting point for an extremely flexible and extensible modeling framework. More generally, we could replace the fitting procedure in M-step by penalized GLM (\texttt{glmnet} package), generalized additive model (\texttt{gam} or \texttt{mgcv} package), or generalized boosting regression (\texttt{gbm} package). Furthermore, noting that 
\[\pi_1(x) = \E[H \mid x], \quad\mu(x) = \E [y \mid x, H=1],\]
 one can even fit them directly using any nonparametric method, such as random forest or neural networks, that targets on estimating conditional mean. 

\subsection{Selecting featurization}\label{subsec:feature}
Suppose we are given a finite set of candidate featurization $\{(\phi_{\pi, j}(x), \phi_{\mu, j}(x)): j = 1, \ldots, M\}$. For instance for univariate $x$, $\phi_{\pi, j}(x)$ and $\phi_{\mu, j}(x)$ could be spline bases with certain numbers of equi-spaced knots; for multivariate $x$, $\phi_{\pi, j}(x)$ and $\phi_{\mu, j}(x)$ could be subsets of covariates contained in $x$. At step $t$, one is permitted to fit a model for each featurization, using arbitrary methods (e.g., GLM, penalized GLM, etc.), based on $((\phi_{\pi, j}(x_i), \phi_{\mu, j}(x_i), \td{p}_{t,i})_{i=1}^{n})$. Let $\hat{\pi}_{1}^{(j)} = (\hat{\pi}_{11}^{(j)}, \ldots, \hat{\pi}_{1n}^{(j)})$ and $\hat{\mu}^{(j)} = (\hat{\mu}_{1}^{(j)}, \ldots, \hat{\mu}_{n}^{(j)})$ denote the resulting fitted values. The full log-likelihood, assuming $H_i$ is known, for the GLM model~\eqref{eq:twogroup_GLM} based on $(\phi_{\pi, j}(x), \phi_{\mu, j}(x))$ can be written as
\[\ell_{j}(\pi_{1}, \mu) = \sum_{i=1}^{n}(H_i\log \pi_{1i}^{(j)} + (1 - H_i)\log (1 - \pi_{1i}^{(j)})) + \sum_{i=1}^{n}H_{i}\log h(p_i; \mu_i^{(j)}).\]
Though $\ell_{j}$ is not computable, we can replace it by 
\[\td{\ell}_{j}\triangleq \E_{\hat{\pi}_{1}^{(j)}, \hat{\mu}^{(j)}} [\ell_{j}(\pi_{1}, \mu)].\]
This is precisely the objective of M-step and hence is directly computed from the EM algorithm. 

Based on $\{\td{\ell}_{j}\}_{j=1}^{M}$, we can use any information criterion for featurization selection. Our implementation uses BIC as default, defined as
\[\mathrm{BIC}_{j} = \log n\cdot (\mathrm{df}_{\pi, j} + \mathrm{df}_{\mu, j}) - 2\td{\ell}_{j}\]
where $\mathrm{df}_{\pi, j}$ (resp. $\mathrm{df}_{\mu, j}$) is the degree of freedom of $\phi_{\pi, j}$ (resp. $\phi_{\mu, j}$). For instance, $\mathrm{df}_{\pi, j}$ is the number of knots plus 1 (for the intercept) when $\phi_{\pi, j}$ is the spline basis; $\mathrm{df}_{\pi, j}$ is the number of selected covariates plus 1 (for the intercept) when $\phi_{\pi, j}$ is a sparse subset of $x$.

Alternatively, the user can also apply cross-validation to select the featurization. Specifically, at step $t$ the data is divided into $K$ folds. For $k$-th fold, the expected log-likelihood $\td{\ell}_{jk}$ is computed by taking the $k$-th fold as the holdout set and fitting the parameters on other folds. The selection is then based on $\td{\ell}_{j} = \sum_{k=1}^{K}\td{\ell}_{jk}$.

We emphasize that any of above selection procedures can be performed in any intermediate step of AdaPT. If the featurization selection can be computed efficiently, we suggest applying it in every step. Otherwise we suggest performing it only at the first step, in which $s(x) = s_{0}(x)$, and keeping the selected featurization for all later steps.

\subsection{Updating the threshold}

Theorem \ref{thm:level_surface} suggests that our updated threshold $s_{t+1}$ should approximate a level surface of $\hfdr(p \mid x)$. For the model \eqref{eq:twogroup_GLM}, level surfaces of the local FDR are given by
\begin{equation}\label{eq:lfdr_level}
c = \frac{f(1 | x)}{f(s(x) | x)} = \frac{\pi_1(x) h(1; \mu(x))  + 1 - \pi_1(x)}{\pi_1(x)  h(s(x); \mu(x)) + 1 - \pi_1(x)}.
\end{equation}
For various widely-used exponential families in the form \eqref{eq:exp_family2}, $h(p; \mu)$ is decreasing with respect to $p$, in which case,
\begin{equation}\label{eq:sxc}
s(x; c) = f^{-1}\lb \frac{h(1; \mu(x))}{c} + \frac{1 - \pi_1(x)}{\pi_1(x)}\frac{1 - c}{c}; \mu(x)\rb
\end{equation}

Given a chosen local FDR level $c$, we can evolve $s_{t}$ by 
\begin{equation}\label{eq:evolve_first}
s_{t + 1}(x) = \min\{s_{t}(x), s(x; c)\},
\end{equation}
where the minimum is taken to meet the requirement that $s_{t + 1}(x) \le s_{t}(x)$. Note that a higher level surface (larger $c$) will typically give a higher $\hFDP_t$ and vice versa. Unless computational efficiency is at a premium, it is better to force the procedure to be patient since more information can be gained after each update and the learning step can be more accurate. In other words, we shall choose a large $c$ such that $s_{t + 1}(x)$ only deviates from $s_{t}(x)$ slightly. 

In this article we propose a simple procedure to achieve this: it chooses $c$ such that exactly one partially-masked p-value is revealed based on $s_{t+1}(x)$ defined in \eqref{eq:evolve_first}. The choice of $c$ can be computed in the following way
\begin{enumerate}[(a)]
\item Estimate local FDR for each  $p'_{t, i}$ as 
\begin{equation}\label{eq:lfdr}
\mathrm{fdr}_{t,i} = \frac{f(1 | x_i)}{f(p'_{t,i} | x_i)} = \frac{\hat{\pi}_{1i} \cdot h(1; \hat{\mu}_{i})  + 1 - \hat{\pi}_{1i}}{\hat{\pi}_{1i} \cdot h(p'_{t,i}; \hat{\mu}_{i}) + 1 - \hat{\pi}_{1i}},
\end{equation}
where $p'_{t, i}$ is the minimum element in $\td{p}_{t, i}$ (i.e., $\td{p}_{t, i} = p'_{t, i}$ for revealed p-values and $\td{p}_{t, i} = \{p'_{t, i}, 1 - p'_{t, i}\}$ for masked p-values.)
\item Set $c$ as the largest value of $\mathrm{lfdr}_{t,i}$ among all \emph{partially masked p-values}. (Strictly speaking, $c$ should be slightly smaller than $\max_{i}\mathrm{lfdr}_{t,i}$. In implementation we subtract $10^{-15}$ from it.)
\end{enumerate}
As a consequence, this choice of $c$ is measurable with respect to $\cF_t$ and hence a permissible operation in \adapt.

\subsection{Other Issues}

\noindent \textbf{Initial thresholds.} As shown in Algorithm \ref{algo:adapt}, AdaPT starts from some curve $s_{0}(x)$ and then slowly update it. If the hypotheses are not ordered, then we can simply set $s_{0}(x)\equiv s_{0, 1}$ with $s_{0, 1} \le 0.5$. A larger $s_{0, 1}$ is conceptually preferred since the procedure is more patient. We found that $s_{0, 1} = 0.45$ is a consistently good choice.

~\\
\noindent \textbf{Computation efficiency.} The model update (Algorithm \ref{algo:em}) is the most computationally costly component. To save computation, we recommend not updating the model at every step. In our implementation, the default is to update the model every $\lceil n / 20\rceil$ steps.

~\\
\noindent \textbf{$q$-Values.} Rather than specify $\alpha$ in advance, some researchers might prefer to see a list of discoveries for each of a range of $\alpha$ values. Rather than return a single list for a single $\alpha$, we can alternatively run the algorithm once and output {\em $q$-values} for every hypothesis \citep{storey02, storey2003statistical}, defined as the minimum value of $\alpha$ for which the hypothesis would be rejected.

Let $\htt_\alpha = \min\{t:\; \hFDP_t \leq \alpha\}$ and 
\[
t_i^{\ast} = \min\{t:\; s_t(x_i) < p_i < 1-s_t(x_i)\},
\]
the time at which $p_i$ is revealed.  We then see that
\begin{align*}
H_i \text{ rejected at level } \alpha
  &\iff p_i \leq s_{\htt_{\alpha}}(x_i) \\
  &\iff \htt_\alpha < t_i^{\ast} \\
  &\iff \min_{t<t_i^{\ast}} \hFDP_t \leq \alpha
\end{align*}
As a result, $q_i = \min_{t<t_i^{\ast}} \hFDP_t$ is a valid $q$-value for hypothesis $i$.

\section{Experiments}\label{sec:experiments}

\subsection{Gene/Drug response data: an illustrating example}\label{subsec:GEOquery}
To illustrate the power of the AdaPT procedure, we apply it to the GEOquery gene-dosage data~\citep{davis2007geoquery}, which has been analyzed repeatedly as a benchmark for ordered testing procedures \cite{li2016accumulation, lei2016power, li2016multiple}. We use Algorithm \ref{algo:em} with a beta-mixture model \eqref{eq:beta_mixture} for the E-step (see Appendix \ref{subsubapp:GammaGLM} for details) and a Gamma GLM with canonical link function for the M-step. This dataset consists of gene expression measurements for $n = 22283$ genes, in response to estrogen treatments in breast cancer cells for five groups of patients, with different dosage levels and 5 trials in each. The task is to identify the genes responding to a low dosage. The p-values $p_{i}$ for gene $i$ is obtained by a one-sided permutation test which evaluates evidence for a change in gene expression level between the control group (placebo) and the low-dose group. $\{p_{i}: i \in [n]\}$ are then ordered according to permutation $t$-statistics comparing the control and low-dose data, pooled, against data from a higher dosage (with genes that appear to have a strong response at higher dosages placed earlier in the list). 

We consider two orderings: first, a stronger (more informative) ordering based on a comparison to the highest dosage; and second, a weaker (less informative) ordering based on a comparison to a medium dosage. Let $\sigma_S(i)$ and $\sigma_W(i)$ denote respectively the permutations of $i\in [n]$ given by the stronger and weaker orderings. Further details on these two orderings can be found in \citet{li2016accumulation} and \citet{li2016multiple}. We write the $p$-values, thus reordered, as $p_i^S=p_{\sigma_S(i)}$ and $p_i^W =p_{\sigma_W(i)}$. Once the data are reordered, we can apply either a method that ignores the ordering altogether, or an ordered testing procedure, or a testing procedure that uses generic side information, using the index of the reordered $p$-values as a univariate predictor. 

We compare AdaPT against twelve other methods : 
\begin{enumerate}[(a)]
\item SeqStep with parameter $C = 2$ \citep{barber2015controlling};
\item ForwardStop \citep{gsell2016sequential};
\item the accumulation test with the HingeExp function and parameter $C = 2$ \citep{li2016accumulation};
\item Adaptive SeqStep with $s = q$ and $\lambda = 1 - q$ \citep{lei2016power};
\item BH procedure \citep{bh95};
\item Storey's BH procedure with threshold $\lambda = 0.5$ \citep{storey04};
\item Barber-Cand{\`e}s method \citep{barber2015controlling, arias16};
\item SABHA with $\tau = 0.5, \eps = 0.1$ and the stepwise constant weights, monotone taking values in $\{\eps, 1\}$ (see section 4.1 of \citet{li2016multiple});
\item SABHA with $\tau = 0.5, \eps = 0.1$ and the monotone weights, taking values in $[\eps, 1]$ (see section 4.1 of \citet{li2016multiple});
\item Independent Hypothesis Weighting (IHW) with number of bins and folds set as default \citep{ignatiadis2016data};
\item an oracle version of IHW with the number of bins determined by maximizing the number of rejections;
\item an oracle version of Independent Filtering (IF) with the cutoff determined by maximizing the number of rejections \citep{bourgon2010independent}.
\end{enumerate}
 Note that the last two methods do not guarantee FDR control because the optimal parameter is selected; and both versions of SABHA control FDR at level $1.134\alpha$ (Lemma 1 of \cite{li2016multiple}) when the target level is $\alpha$. Despite the potential anti-conservativeness of these methods, we do not make correction in order to compare their best possible performance to \adapt. Figure \ref{fig:GEOquery_rejs} shows the number of discoveries with different target FDR levels. We only show the range of $\alpha$ from $0.01$ to $0.3$ since it is rare to allow FDR to be above 0.3 in practice. We use different featurization for estimating $\pi(x)$ and $\mu(x)$, selected from the combination of all spline basis with $6-15$ equi-quantile knots via BIC criterion at the initial step and kept the same afterwards; see Section \ref{subsec:feature}.

\begin{figure}[H]
  \centering
  \includegraphics[width = 0.85\textwidth]{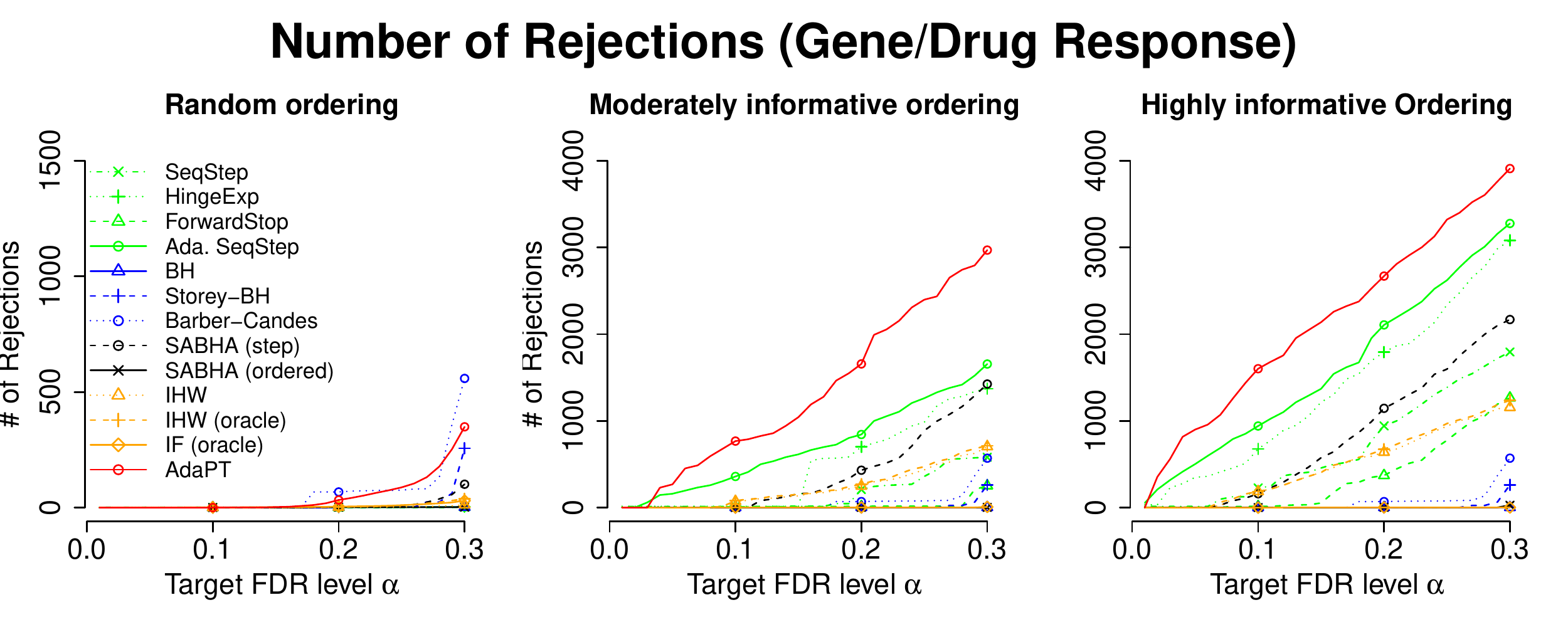}
  \caption{Number of discoveries, in gene/drug response dataset, by each method at a range of target FDR levels $\alpha$ from 0.01 to 0.30. Each panel plots the results for an ordering, ranging from random ordering to highly informative.}
  \label{fig:GEOquery_rejs}
\end{figure}

\begin{figure}[H]
  \centering
  \includegraphics[width = 0.49\textwidth]{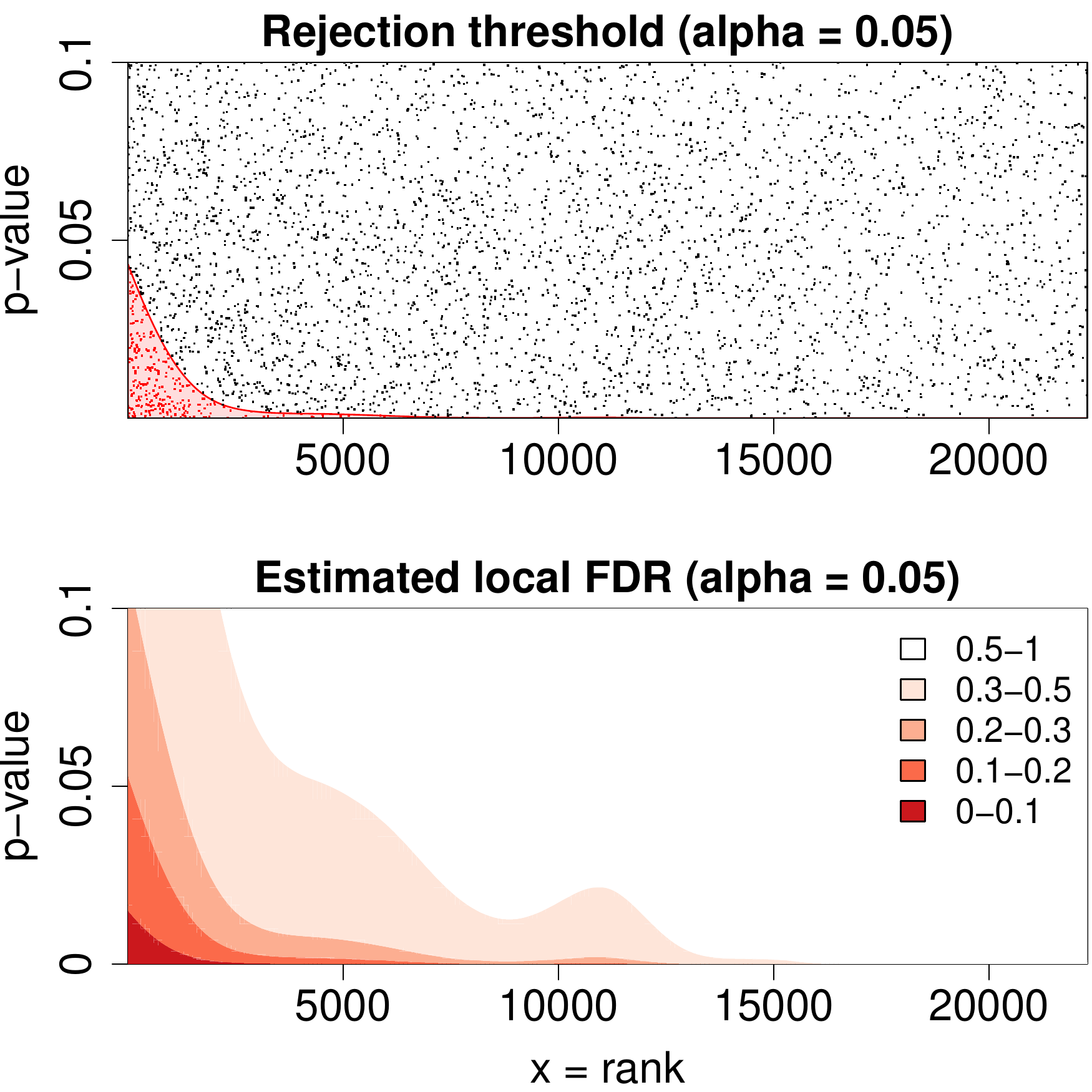}
  \includegraphics[width = 0.49\textwidth]{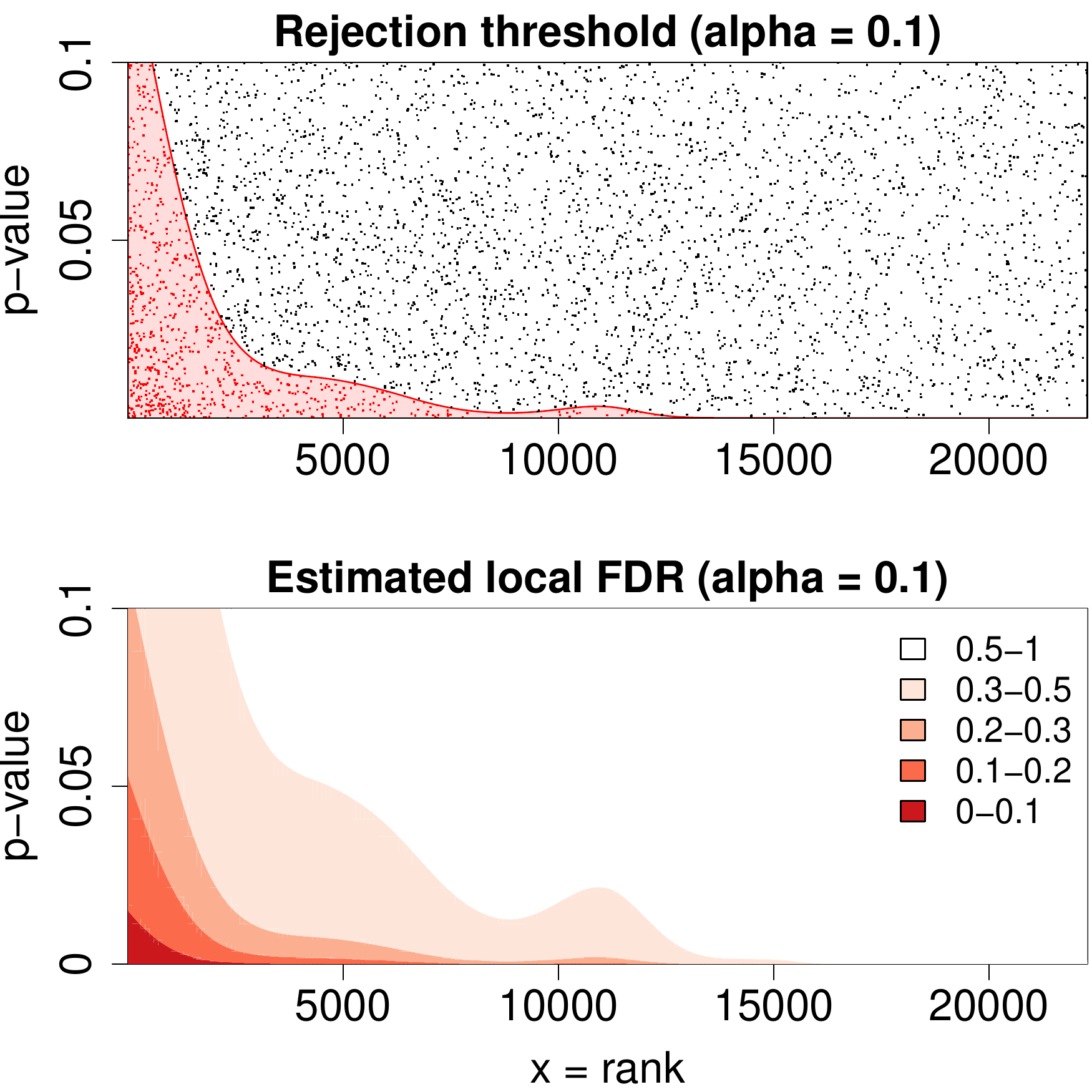}
  \caption{Results for gene/drug response data with moderately informative ordering of p-values, i.e. $\{p_{i}^{W}\}$, with $\alpha = 0.05$ (left) and $\alpha = 0.1$ (right): (top) the dots represent the p-values and the red dots are rejected ones. The red curve is the thresholding rule $s(x)$; (bottom) the contour plots of estimated local FDR.}\label{fig:moderate}
\end{figure}

The right two panels of Figure \ref{fig:GEOquery_rejs} correspond to the weaker and the strong orderings, and show that AdaPT significantly outperforms all other methods for all target FDR levels. One might doubt whether the power gain is driven by overfitting. To check this, we also apply AdaPT, as well as all other methods, on the same set of p-values with a random ordering. We repeat it using 100 random seeds and report the average number of rejections in the left panel of Figure \ref{fig:GEOquery_rejs}. In this case, the number of rejections drop dramatically and the power is almost the same as Barber-Cand{\`e}s method, the non-adaptive version of AdaPT. This provides strong evidence against overfitting.



To illustrate how \adapt exploits the covariate to improve the power, we plot the thresholding rules and estimated signal strength for p-values with moderately informative ordering and p-values with highly informative ordering, respectively in Figure \ref{fig:moderate} and Figure \ref{fig:high}. It can be seen from the bottom panels that the evidence to be non-null has an obvious decreasing trend when the ordering is used. Moreover, the highly informative ordering indeed sorts the p-values better than the moderately informative ordering. For the former, the thresholding rule is fairly monotone while it has a small bump at $i\approx 5000$ for the latter. In both cases, most discoveries are from the first 5000 genes in the list.

\begin{figure}[H]
  \centering
  \includegraphics[width = 0.49\textwidth]{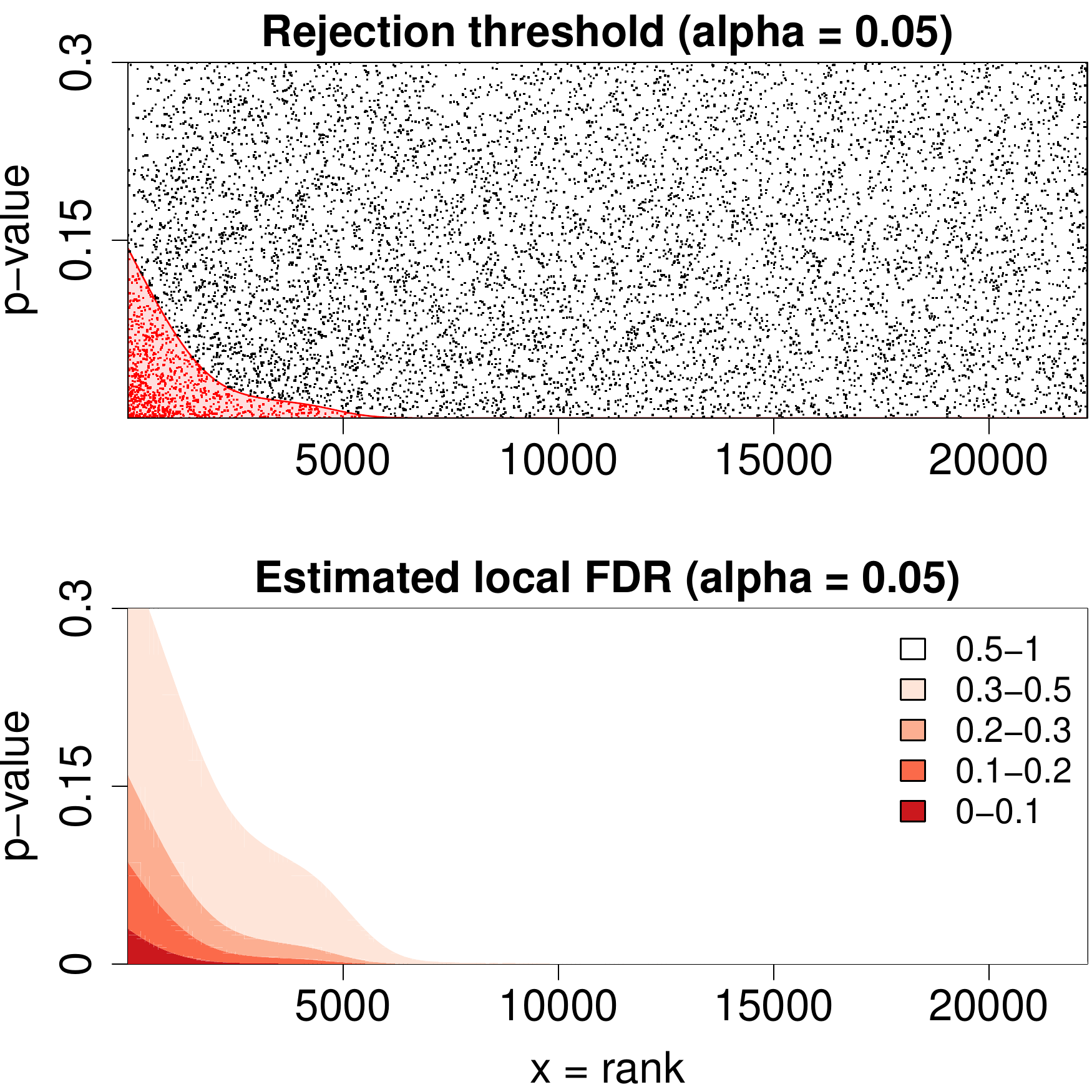}
  \includegraphics[width = 0.49\textwidth]{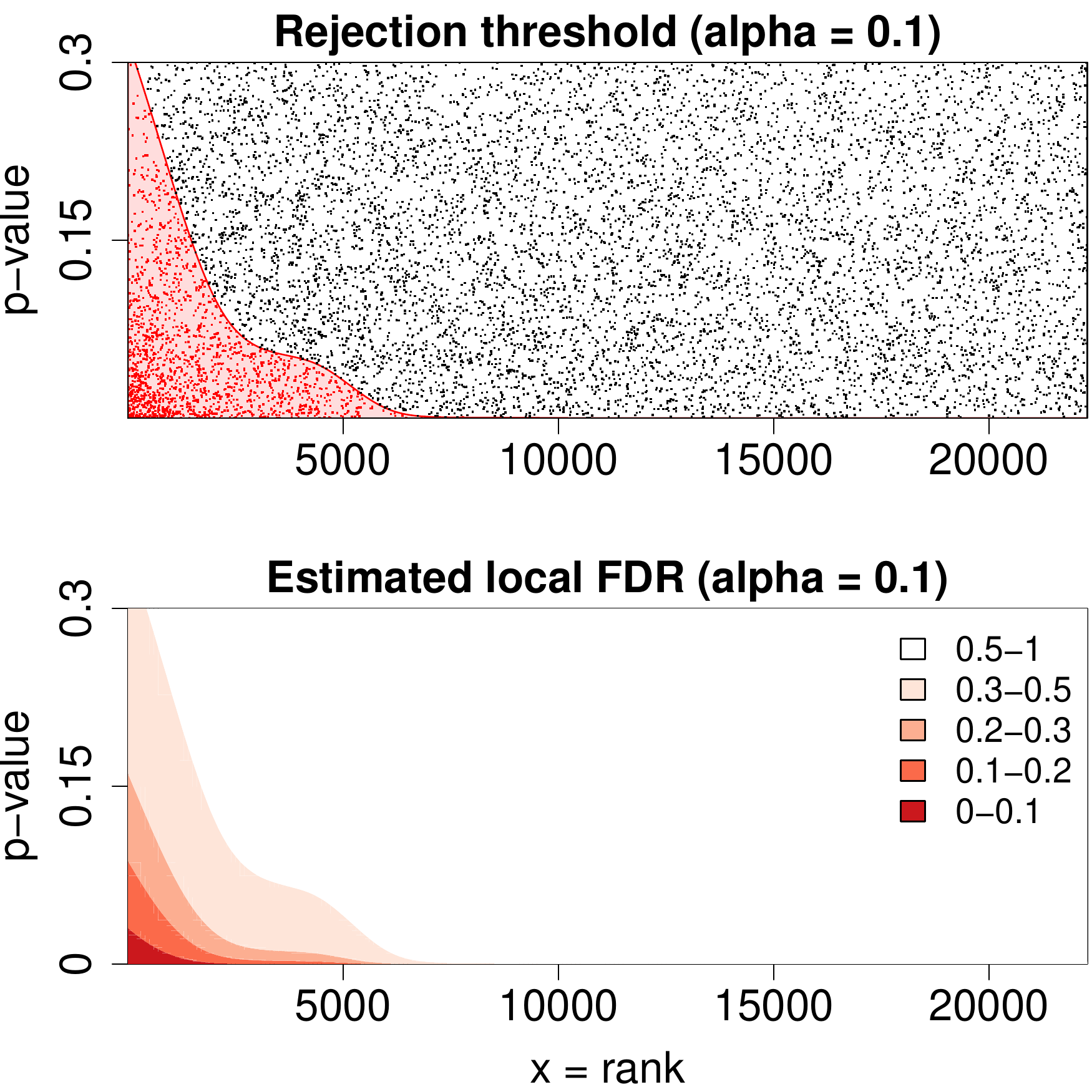}
  \caption{Results for gene/drug response data with highly informative ordering of p-values, i.e. $\{p_{i}^{S}\}$, with $\alpha = 0.05$ (left) and $\alpha = 0.1$ (right): (top) the dots represent the p-values and the red dots are rejected ones. The red curve is the thresholding rule $s(x)$; (bottom) the contour plots of estimated local FDR.}\label{fig:high}
\end{figure}

Finally, we measure the information loss caused by partial masking: We first estimate local FDR using the set of (unmasked) p-values and the covariates, denoted by $\mathrm{lfdr}^{*}(x)$. It can be regarded as the best possible estimate given the algorithm. Let $\mathrm{lfdr}_t(x)$ denote the estimate of local FDR at step $t$ (based on partially masked p-values). Then we measure the information loss by the correlation of $\{\mathrm{lfdr}^{*}(x_i)\}_{i=1}^{n}$ and $\{\mathrm{lfdr}_{t}(x_i)\}_{i=1}^{n}$. The results are shown in Figure \ref{fig:GEOquery_corr} where the x-axis corresponds to the target FDR, in a reverse order ranging from $0.5$ to $0.01$, and y-axis corresponds to the correlation at the step where $\hFDP$ first drops below the target FDR. As expected from the discussion in Subsection \ref{subsec:framework}, the information loss is quite small and even negligible after the target $\FDR$ drops to the ``practical'' regime (e.g. below $0.2$), where the correlation between $\{\mathrm{lfdr}^{*}(x_i)\}_{i=1}^{n}$ and $\{\mathrm{lfdr}_{t}(x_i)\}_{i=1}^{n}$ is almost $1$. The pattern is even more significant in other data examples in the next Subsection. This provides a strong evidence that \adapt allows efficient data exploration under comparatively limited information loss.

In summary, these plots show a strong data adaptivity of \adapt, which can also learn the local structure of data while controlling FDR. Moreover, it provides a quantitative way, by estimated signal strength, to evaluate the quality of ordering, which is the major concern in ordered testing problems \citep{li2016accumulation, lei2016power, li2016multiple}.

\begin{figure}[H]
  \centering
  \includegraphics[width = 0.55\textwidth]{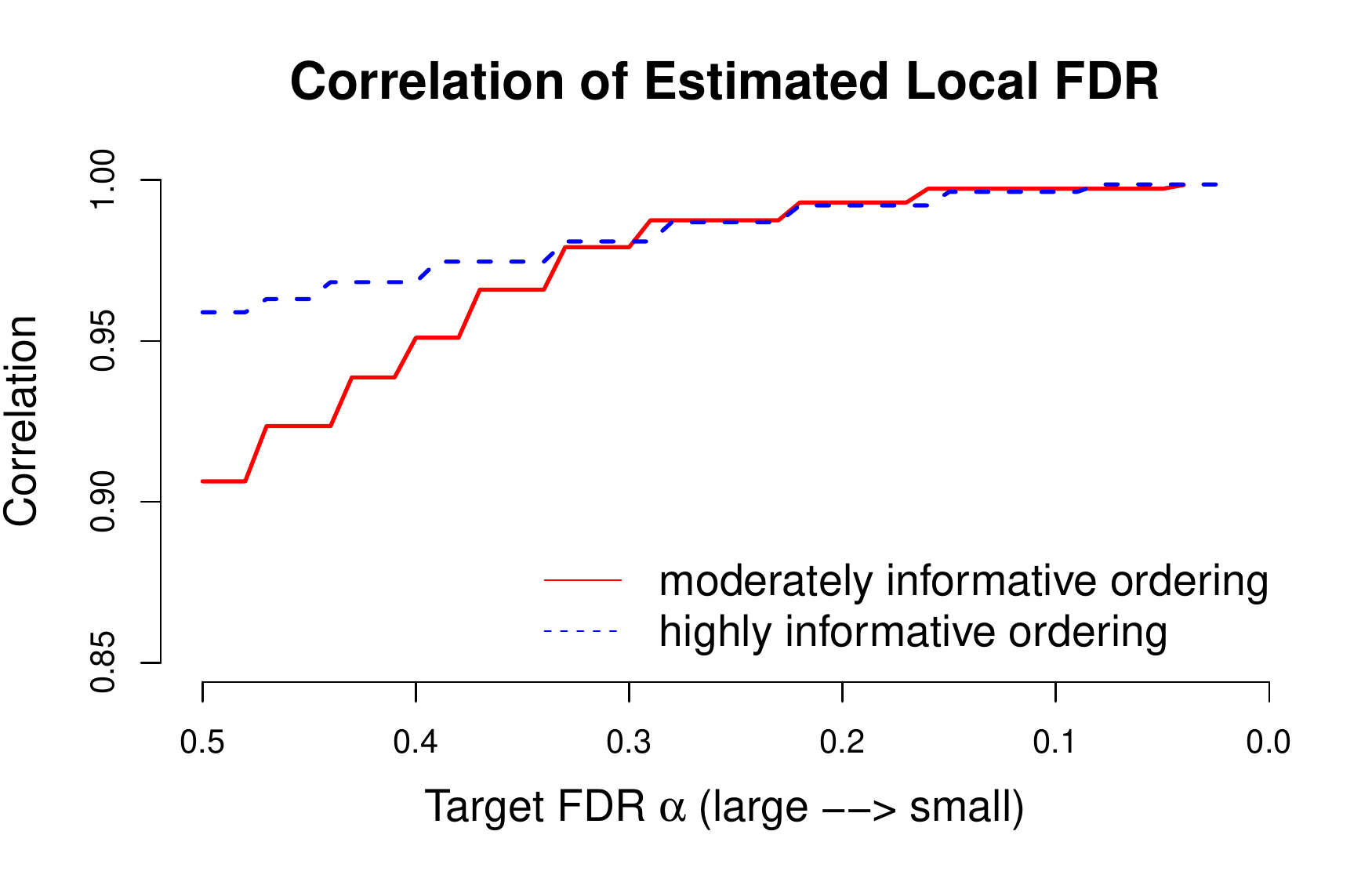}
  \caption{Correlation of $\{\mathrm{lfdr}^{*}(x_i)\}_{i=1}^{n}$ and $\{\mathrm{lfdr}_{t}(x_i)\}_{i=1}^{n}$ for gene/drug-response dosage dataset under original, moderately informative and highly informative orderings. The x-axis corresponds to the target FDR, in a reverse order ranging from $0.5$ to $0.01$, and y-axis corresponds to the correlation at the step where $\hFDP$ first drops below the target FDR. }\label{fig:GEOquery_corr}
\end{figure}

\subsection{Simulation studies}

$\bullet$ \textbf{Example 1: a two-dimensional case}

~\\
\noindent We generate the covariates $x_i$'s from an equi-spaced $50\times 50$ grid in the area $[-100, 100]\times [-100, 100]$. We generate $p$-values i.i.d. from a one-sided normal test, i.e. 
\begin{equation}\label{eq:simul_pvals}
p_{i} = 1 - \Phi(z_{i}), \quad \mbox{and} \quad z_{i}\sim N(\mu, 1),
\end{equation}
where $\Phi$ is the cdf of $N(0, 1)$. For $i\in \cH_{0}$ we set $\mu = 0$ and for $i\not\in \cH_{0}$ we set $\mu = 2$. Figure~\ref{fig:simul1_truth} below shows three types of $\cH_{0}$ that we conduct tests on.

\begin{figure}[H]
  \centering
  \includegraphics[width = 0.75\textwidth]{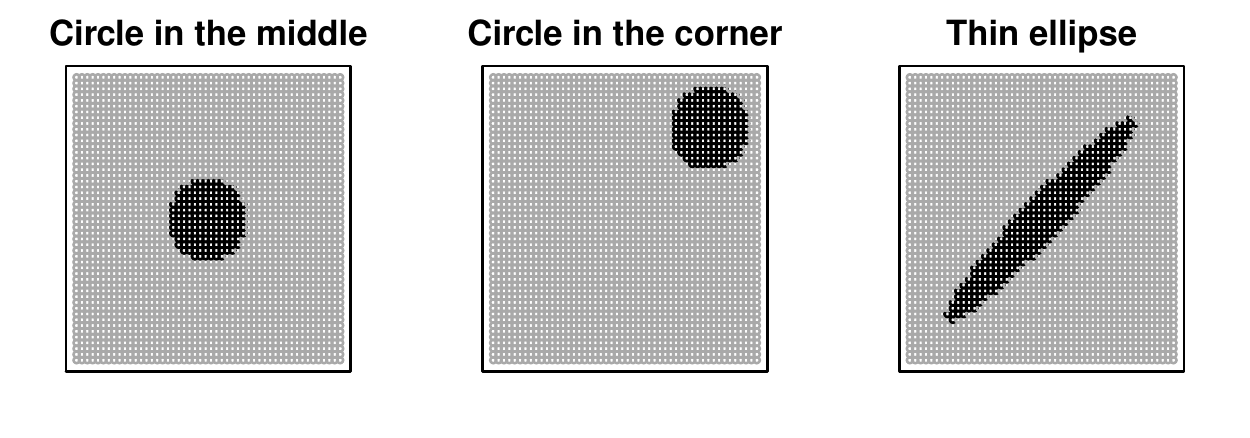}
  \caption{The above panels display the underlying ground truth for three cases in Example 1. Each point represents a hypothesis (2500 in total) with gray ones being nulls and black ones being non-nulls.}\label{fig:simul1_truth}
\end{figure}

In this case, it is not clear how to apply non-adaptive ordered testing procedures or Independent Filter. Thus we compare \adapt only with Storey's BH method, Barber-Cand\'{e}s method, IHW using the default automatic parameter tuning procedure and SABHA using 2-dim low total variation weights (see Section 4.3 of \cite{li2016multiple}). For \adapt, we fit two-dimensional Generalized Additive Models in M-step, using \pkg{R} package \pkg{mgcv} with the knots selected automatically in every step by GCV criterion. For each procedure and a given level $\alpha$, let $\cR_{\alpha}$ be the set of rejected hypotheses with a target FDR level $\alpha$. Then we calculate the FDP and the power as
\begin{align}\label{eq:defs}
\mathrm{FDP}(\alpha) &= \frac{|\cR_{\alpha}\cap \cH_{0}|}{|\cR_{\alpha}|}, \quad \mathrm{power}(\alpha) ~= \frac{|\cR_{\alpha}\cap \cH_{0}^{c}|}{|\cH_{0}^{c}|}.
\end{align}
We repeat the above procedure for on $100$ fresh simulated datasets and calculate the average of $\mathrm{FDP}(\alpha)$ and $\mathrm{power}(\alpha)$ as the measure of FDR and power. The results are shown in Figure \ref{fig:simul1}. It is clearly seen that \adapt controls FDR as other methods while achieving a significantly higher power. 

To see why \adapt gains power, we plot the estimated local FDR in Figure \ref{fig:simul1_fit} for the first case, at the initial step, the step where $\hFDP$ is first below 0.3 and the step where $\hFDP$ is first below 0.1. As shown in the real examples, the fitted local FDR identifies the non-nulls quite accurately even at the initial step where most p-values are partially-masked. The estimates become very stable and informative after reaching the practical regime of $\alpha$'s.

\begin{figure}[H]
  \centering
  \includegraphics[width = 0.85\textwidth]{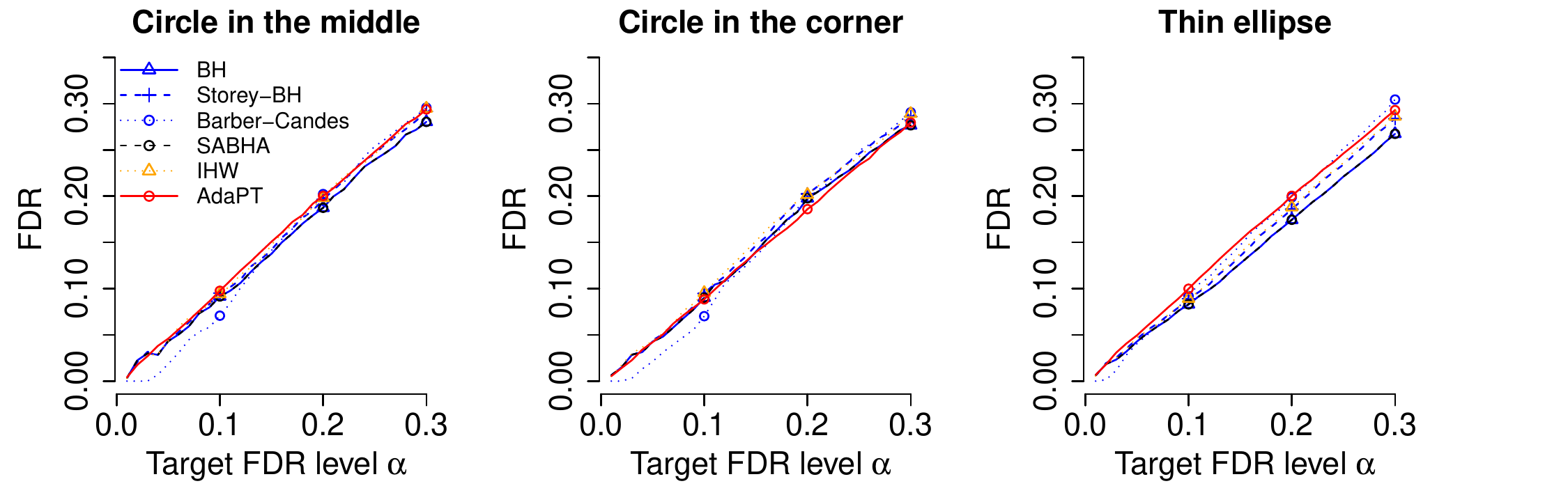}
  \includegraphics[width = 0.85\textwidth]{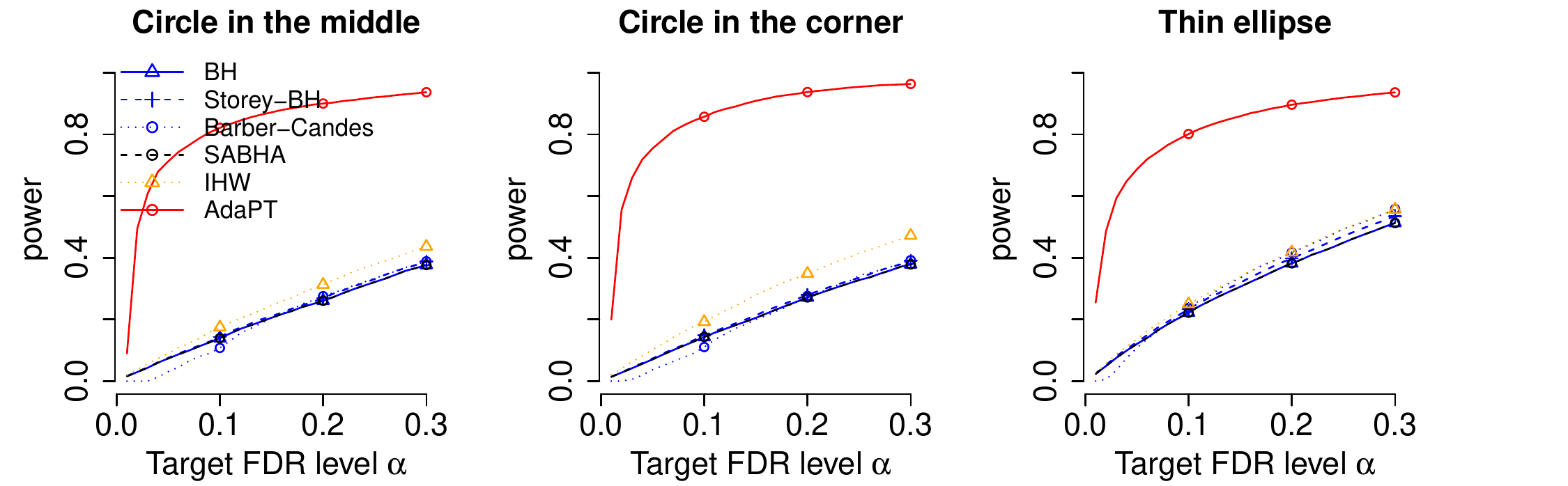}
  \caption{FDR and power with $\alpha \in \{0.01, 0.02, \ldots, 0.30\}$ in Example 1.}\label{fig:simul1}
\end{figure}

\begin{figure}[H]
  \centering
  \includegraphics[width = 0.85\textwidth]{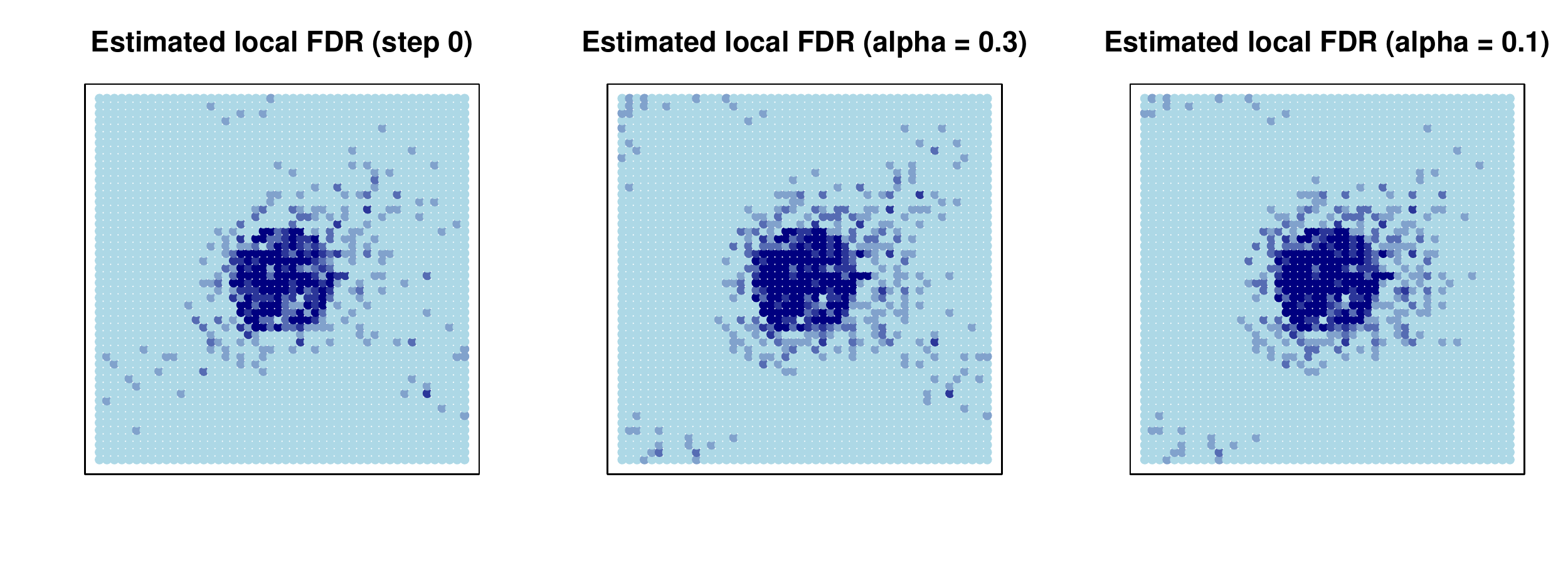}
  \caption{Estimated local FDR in the first case of Example 1 at the initial step (left), with the target FDR level $0.3$  (middle) and with the target FDR level $0.1$ (right). The dark color marks the hypotheses with low local FDR and vice versa.}\label{fig:simul1_fit}
\end{figure}

\noindent $\bullet$ \textbf{Example 2: a 100-dimensional case}

~\\
\noindent We generate $x_{i}\in \R^{d}$ with $d = 100$ and 
\[\{x_{ij}: i\in [n], j\in [d]\}\stackrel{i.i.d.}{\sim} U([0,1]).\]
Then we generate p-values from a varying-coefficient two group beta-mixture model \eqref{eq:beta_mixture} with $\pi_{1i}$ and $\mu_{i}$ are specified as a logistic model and a truncated linear model, respectively, i.e.,
\[\log \lb\frac{\pi_{1i}}{1 - \pi_{1i}}\rb = \theta_{0} + x_i^{T}\theta, \quad \mu_i = \max\{x_i^{T}\beta, 1\}, \quad \beta,\theta\in \R^{d}.\]
In this case, we choose $\theta$ and $\beta$ as highly sparse vectors with only two non-zero entries: 
\[\theta = (3, 3, 0, \ldots, 0)^{T}, \quad \beta = (2, 2, 0, \ldots, 0)^{T}\]
and $\theta_{0}$ is chosen so that $\frac{1}{n}\sum_{i=1}^{n}\pi_{1i} = 0.3$. In this case, $\E (-\log p_i) = \mu_i$ under the alternative. Figure \ref{fig:simul2_hist} shows the histograms of $\pi_{1i}$'s and $\mu_i$'s. 

\begin{figure}[H]
  \centering
  \includegraphics[width = 0.85\textwidth]{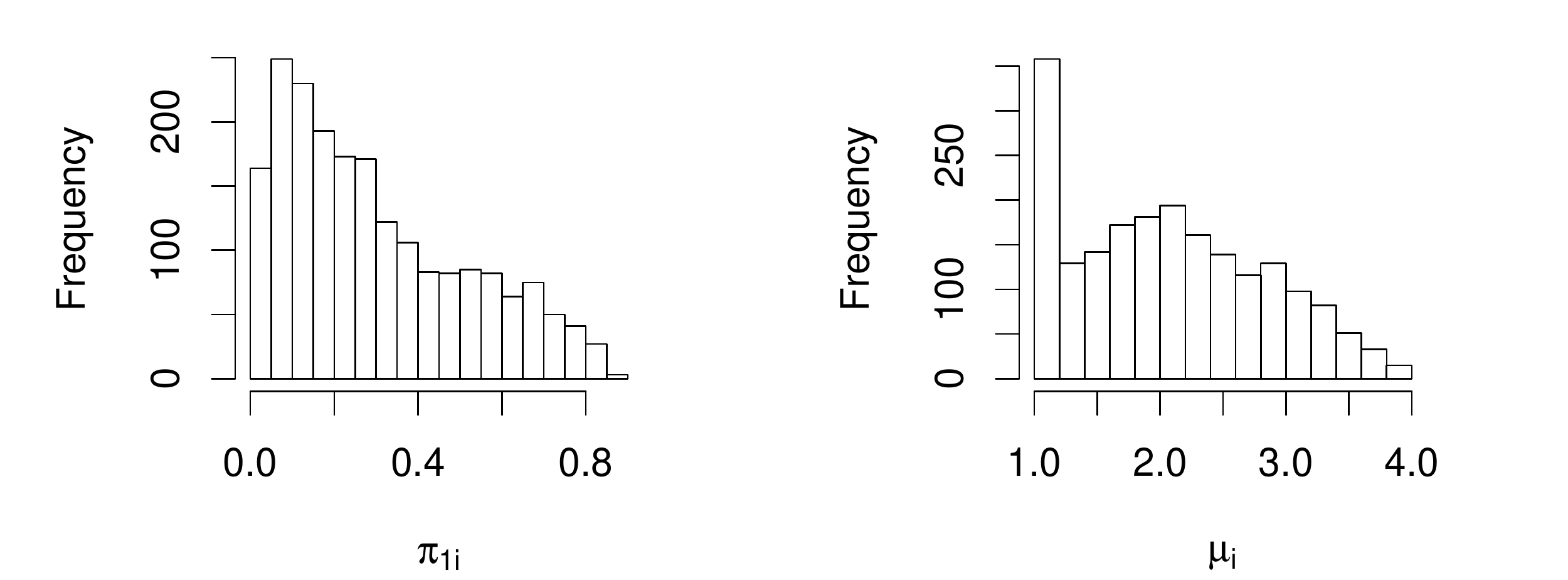}
  \caption{Distributions of $\pi_{1i}$'s and $\mu_i$'s in Example 2.}\label{fig:simul2_hist}
\end{figure}

In this case, it is not clear how to apply non-adaptive ordered testing procedures or Independent Filter or adaptive procedures like IHW and SABHA. Thus we compare \adapt only with BH method, Storey's BH method and Barber-Cand\'{e}s method. For \adapt, we fit $L_1$-regularized GLMs in M-step (See Appendix \ref{app:EM} for details), using \pkg{R} package \pkg{glmnet} with the penalty level selected automatically in every step by cross validation. Further we run \adapt by fitting an "oracle" GLM in M-steps where only the first two covariates are involved. 

As in Example 1, we estimate the FDR and the power using 100 replications. The results are plotted in Figure \ref{fig:simul2}. It is clearly seen that both \adapt 's control FDR as other methods while achieving a higher power. Not surprisingly, compare to \adapt with $L_1$-regularized GLMs, \adapt with ``oracle'' GLMs has a higher power. Nevertheless, this example shows the unprecedented ability of \adapt to improve power by squeezing information from a large set of noisy features.

\begin{figure}[H]
  \centering
  \includegraphics[width = 0.85\textwidth]{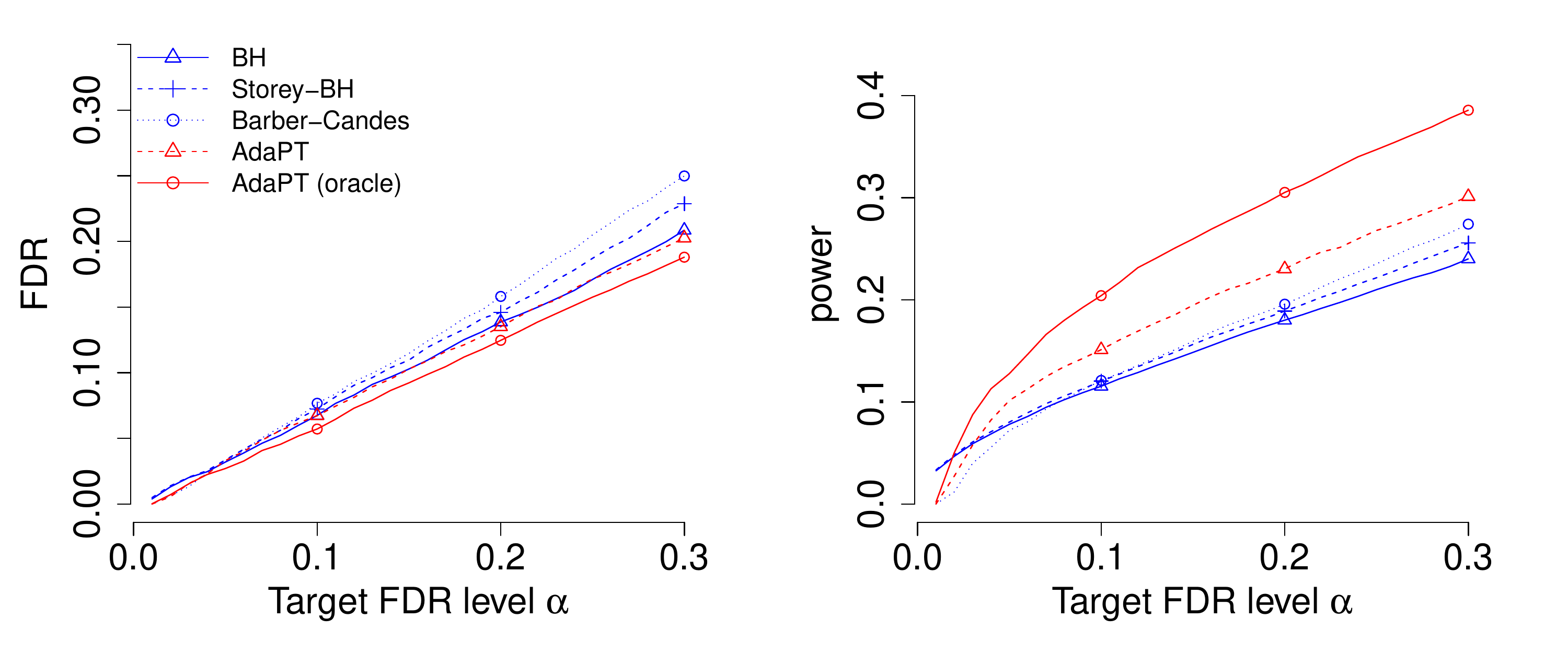}
  \caption{FDR and power with $\alpha \in \{0.01, 0.02, \ldots, 0.30\}$ in Example 2.}\label{fig:simul2}
\end{figure}

\subsection{Other exemplary applications}
In this Subsection, we examine the performance of \adapt on four more real datasets, which are analyzed in other papers exploiting adaptive FDR control methods, e.g. \cite{bourgon2010independent, ignatiadis2016data}. In all cases, we start with a brief introduction of the dataset and show the plots on the number of rejections, as Figure \ref{fig:GEOquery_rejs}, path of information loss, as Figure \ref{fig:GEOquery_corr}, and threshold curve and level curves of estimated local FDR with target FDR $0.1$, as Figure \ref{fig:moderate} and Figure \ref{fig:high}. We use the same settings for \adapt as in the gene dosage dataset: performing model selection at the initial step with candidate featurization being all combinations of spline basis with $6\sim 15$ equi-quantile knots on $\pi(x)$ and $\mu(x)$; and fixing the selected model in subsequent updates.

~\\
$\bullet$ \textbf{Bottomly data}

~\\
This dataset is an RNA-Seq dataset targeting on detecting the differential expression on two mouse strains, C57BL/6J (B6) and DBA/2J (D2), collected by \cite{bottomly}, available on ReCount repository \citep{recount}, and analyzed by \cite{ignatiadis2016data} using IHW. It consists of gene expression measurements for $n = 13932$ genes. Following \cite{ignatiadis2016data}, we analyze the data using \pkg{DEseq2} package \citep{DEseq2} and use the logarithm of normalized count (averaged across all samples plus 1) as the univariate covariate for each gene. The results are plotted in Figure \ref{fig:bottomly}. It is clearly seen that \adapt produces significantly more discoveries than all other methods and the information loss is almost negligible (with correlation consistently above 0.985). Furthermore, we observe the same pattern that \adapt prioritizes the genes with higher mean normalized means.

\begin{figure}[H]
  \centering
  \includegraphics[width = 0.3\textwidth]{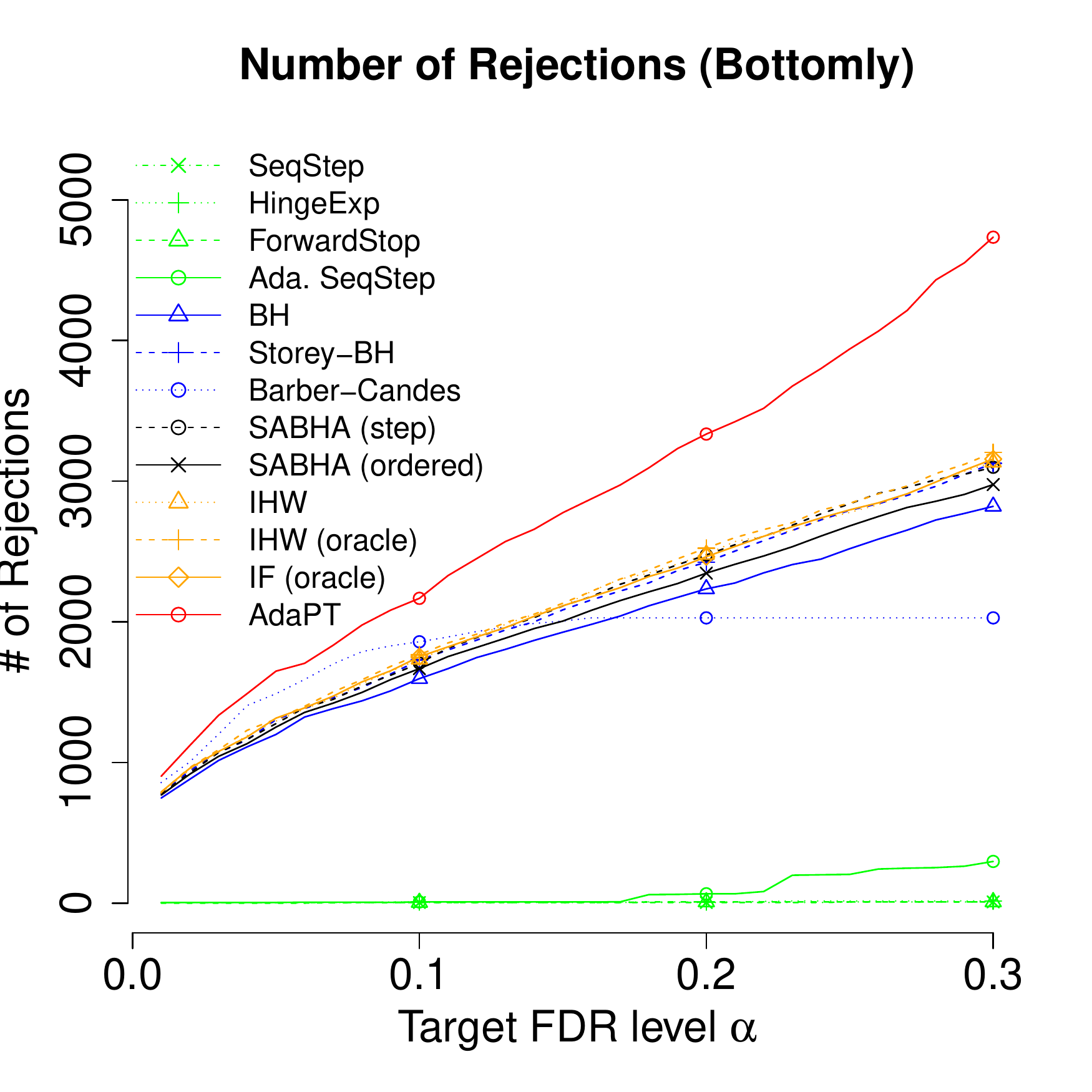}
  \includegraphics[width = 0.3\textwidth]{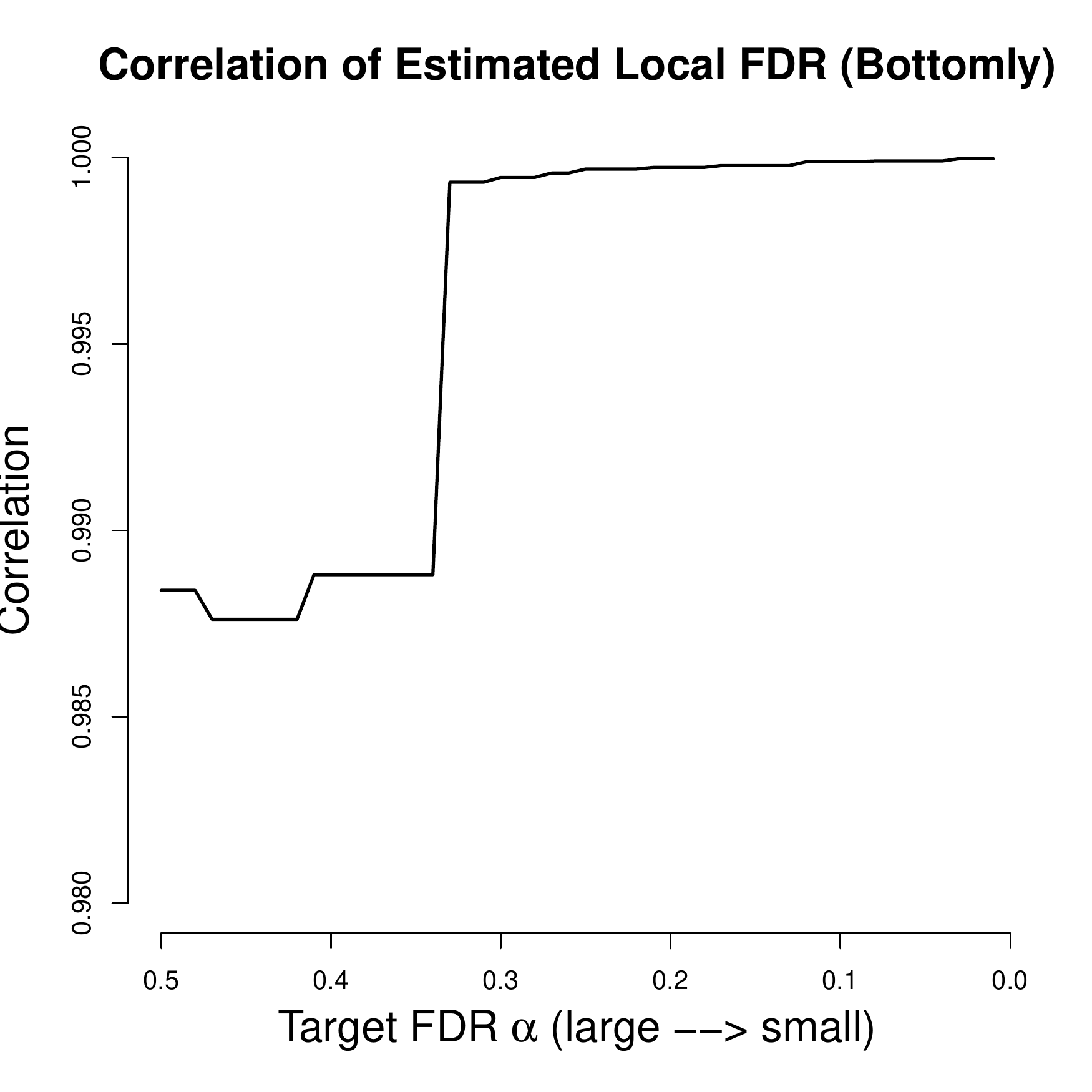}
  \includegraphics[width = 0.3\textwidth]{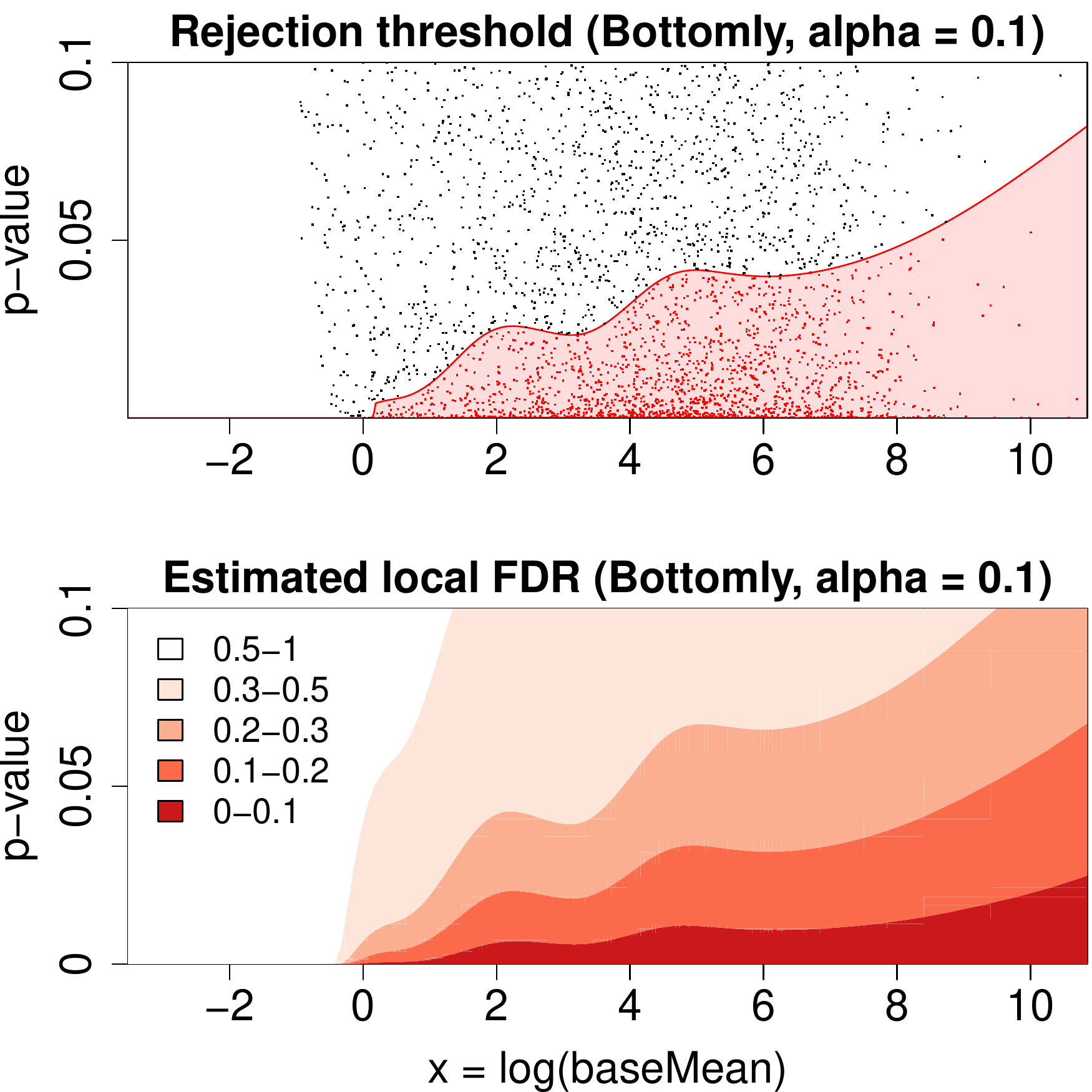}
  \caption{Results for Bottomly dataset: (left) number of rejections; (middle) path of information loss; (right) threshold curve and level curves of estimated local FDR when $\alpha = 0.1$.}\label{fig:bottomly}
\end{figure}

\noindent $\bullet$ \textbf{Airway data}

~\\
\noindent This dataset is an RNA-Seq dataset targeting on identifying the differentially expressed genes in airway smooth muscle cell lines in response to dexamethasone, collected by \cite{airway} and available in \pkg{R} package \pkg{airway}. It is analyzed in the vignette of \pkg{IHW} package using IHW method \cite{ignatiadis2016data}. As in the vignette and the previous example, we analyze the data using \pkg{DEseq2} package \citep{DEseq2} and use the logarithm of normalized count as the univariate covariate for each gene. The results are plotted in Figure \ref{fig:airway}. Again, \adapt produces significantly more discoveries than all other methods. 

\begin{figure}[H]
  \centering
  \includegraphics[width = 0.3\textwidth]{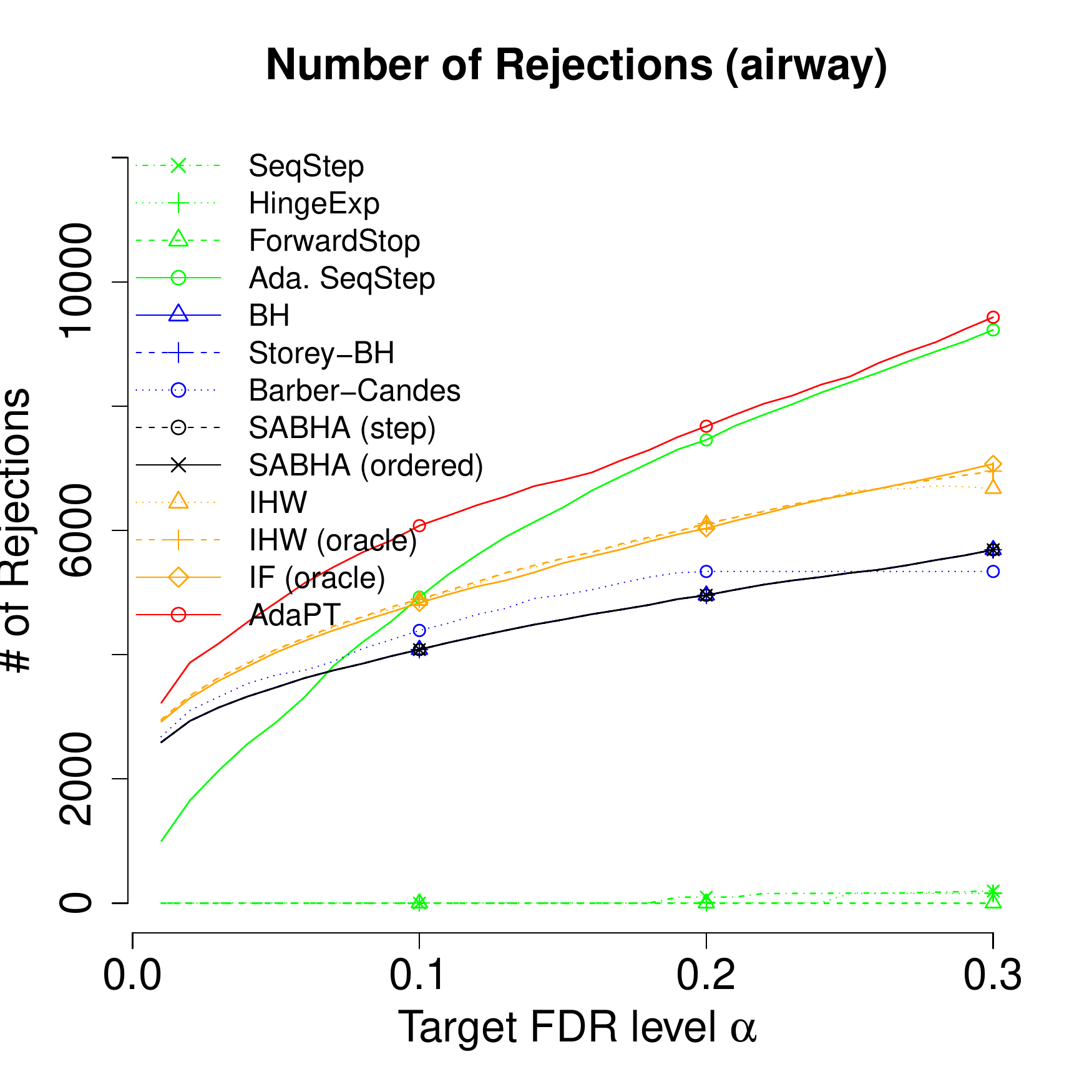}
  \includegraphics[width = 0.3\textwidth]{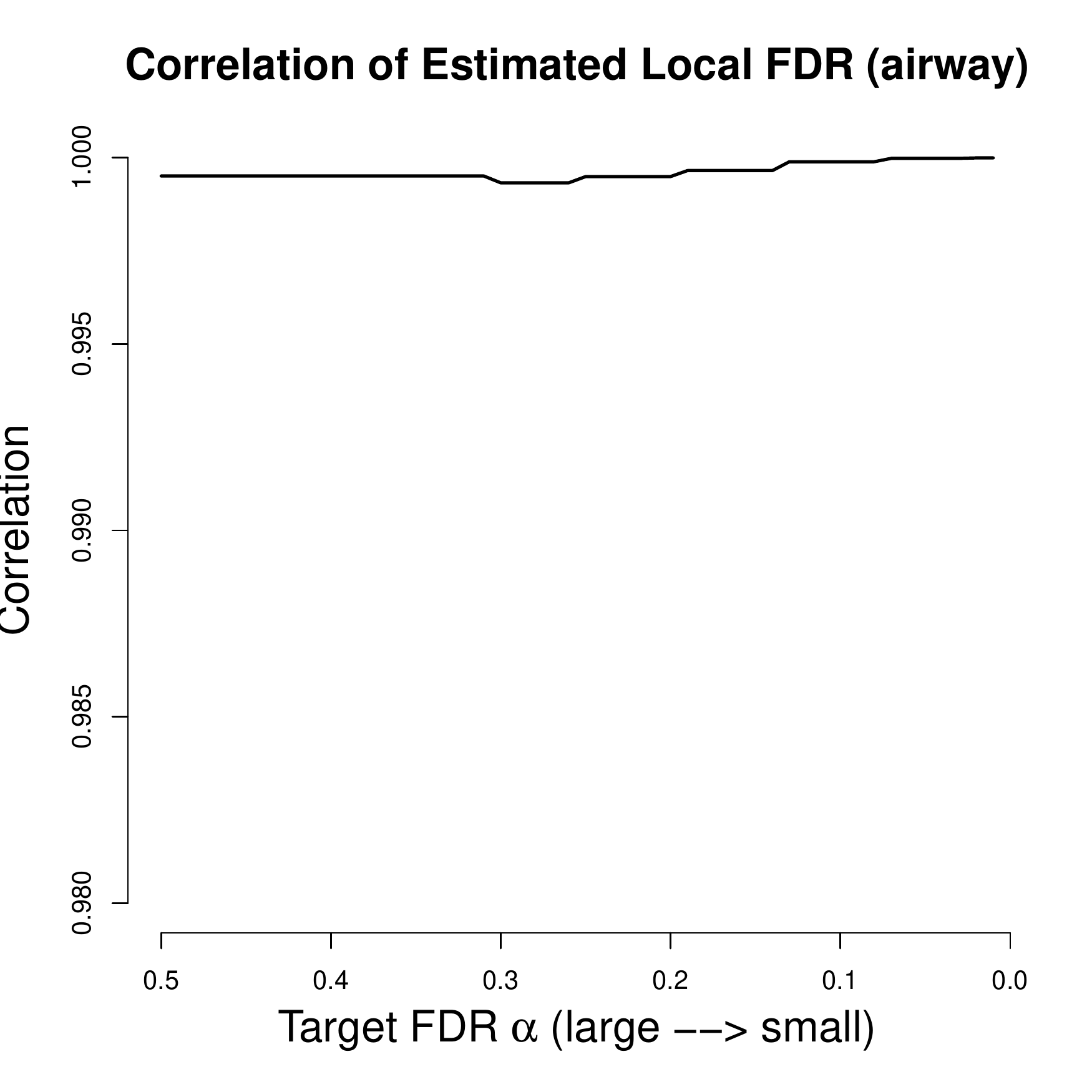}
  \includegraphics[width = 0.3\textwidth]{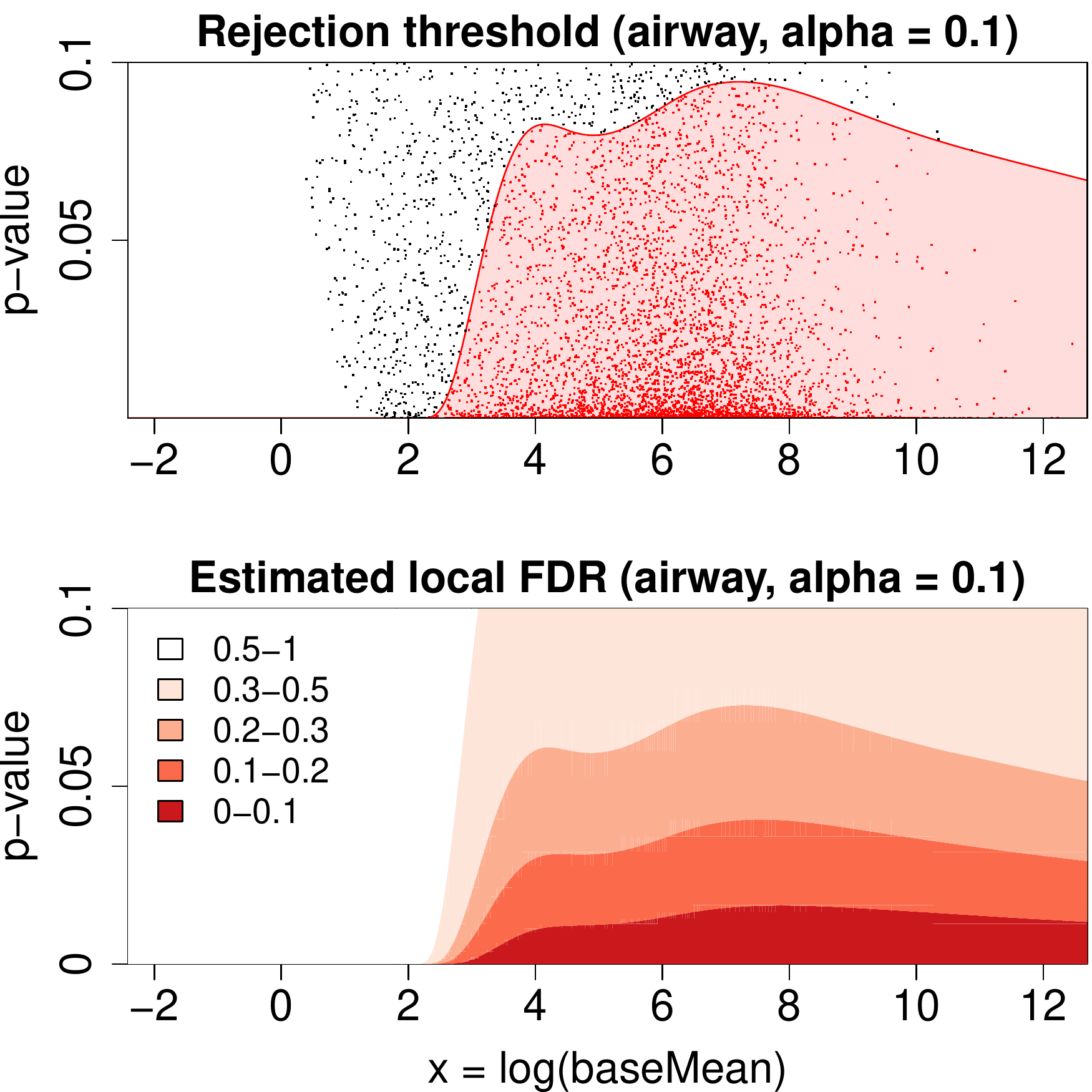}
  \caption{Results for Airway dataset: (left) number of rejections; (middle) path of information loss; (right) threshold curve and level curves of estimated local FDR when $\alpha = 0.1$.}\label{fig:airway}

\end{figure}

\noindent $\bullet$ \textbf{Pasilla data}

~\\
This dataset is also an RNA-Seq dataset targeting on detecting genes that are differentially expressed between the normal and Pasilla-knockdown conditions, collected by \cite{pasilla} and available in \pkg{R} package \pkg{pasilla} \citep{pasillapackage}. It is analyzed in the vignette of \pkg{genefilter} package \citep{genefilter} using independent filtering  method \cite{bourgon2010independent}. As in the vignette, we analyze the data using \pkg{DEseq} package \citep{DEseq} and use the logarithm of normalized count as the univariate covariate for each gene. The results are plotted in Figure \ref{fig:pasilla}. It is clear that we arrive at the same conclusion that  \adapt is more powerful than all other methods.

\begin{figure}[H]
  \centering
  \includegraphics[width = 0.3\textwidth]{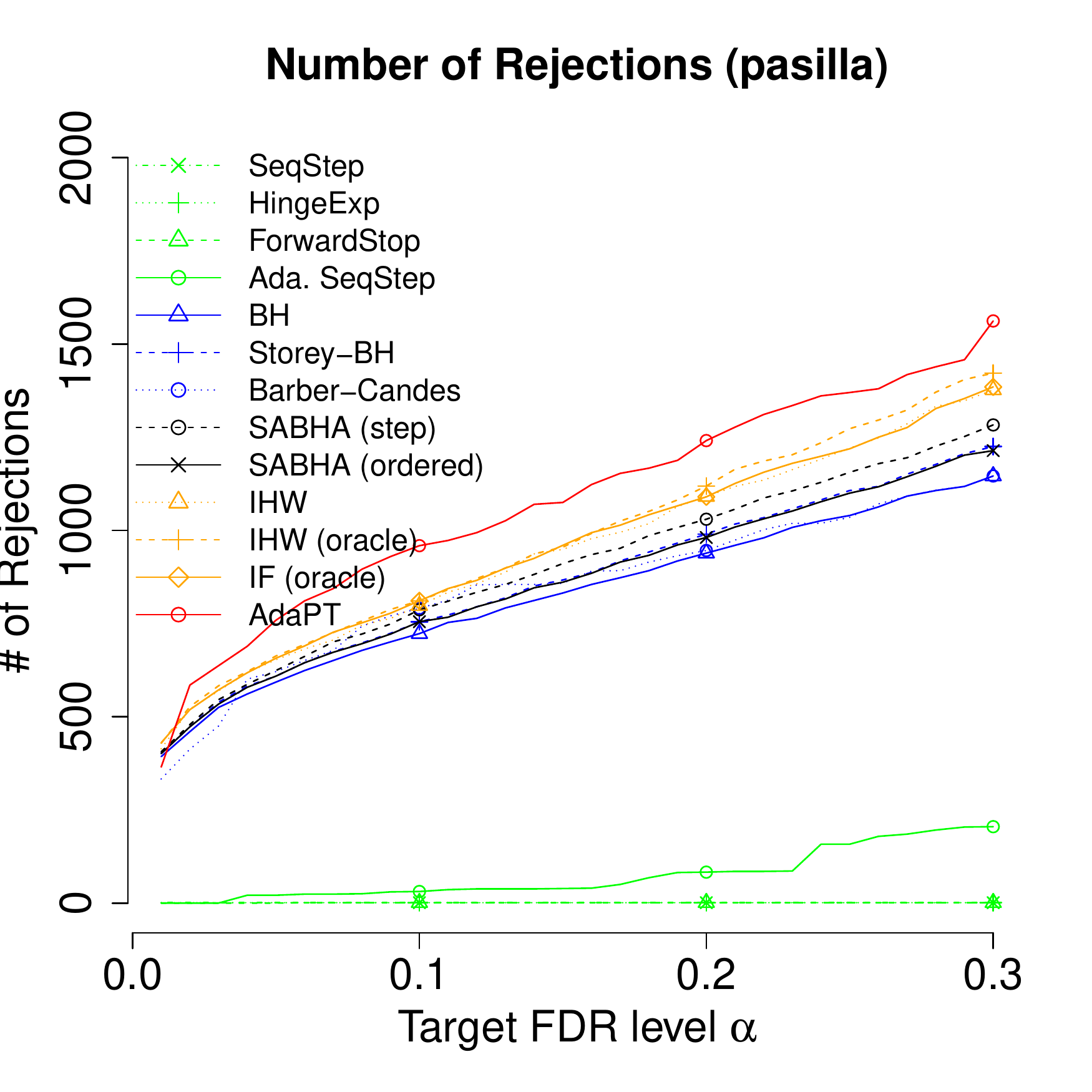}
  \includegraphics[width = 0.3\textwidth]{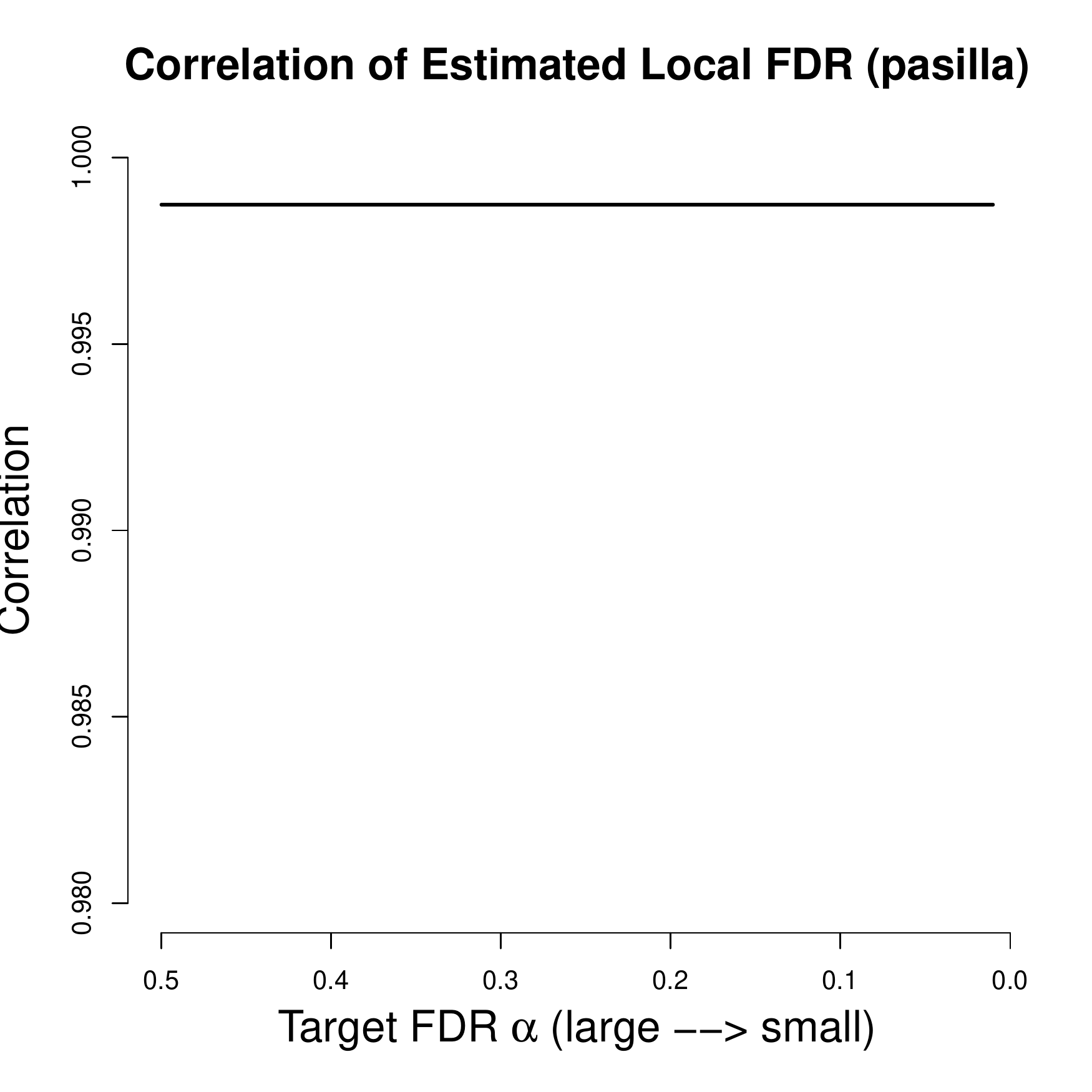}
  \includegraphics[width = 0.3\textwidth]{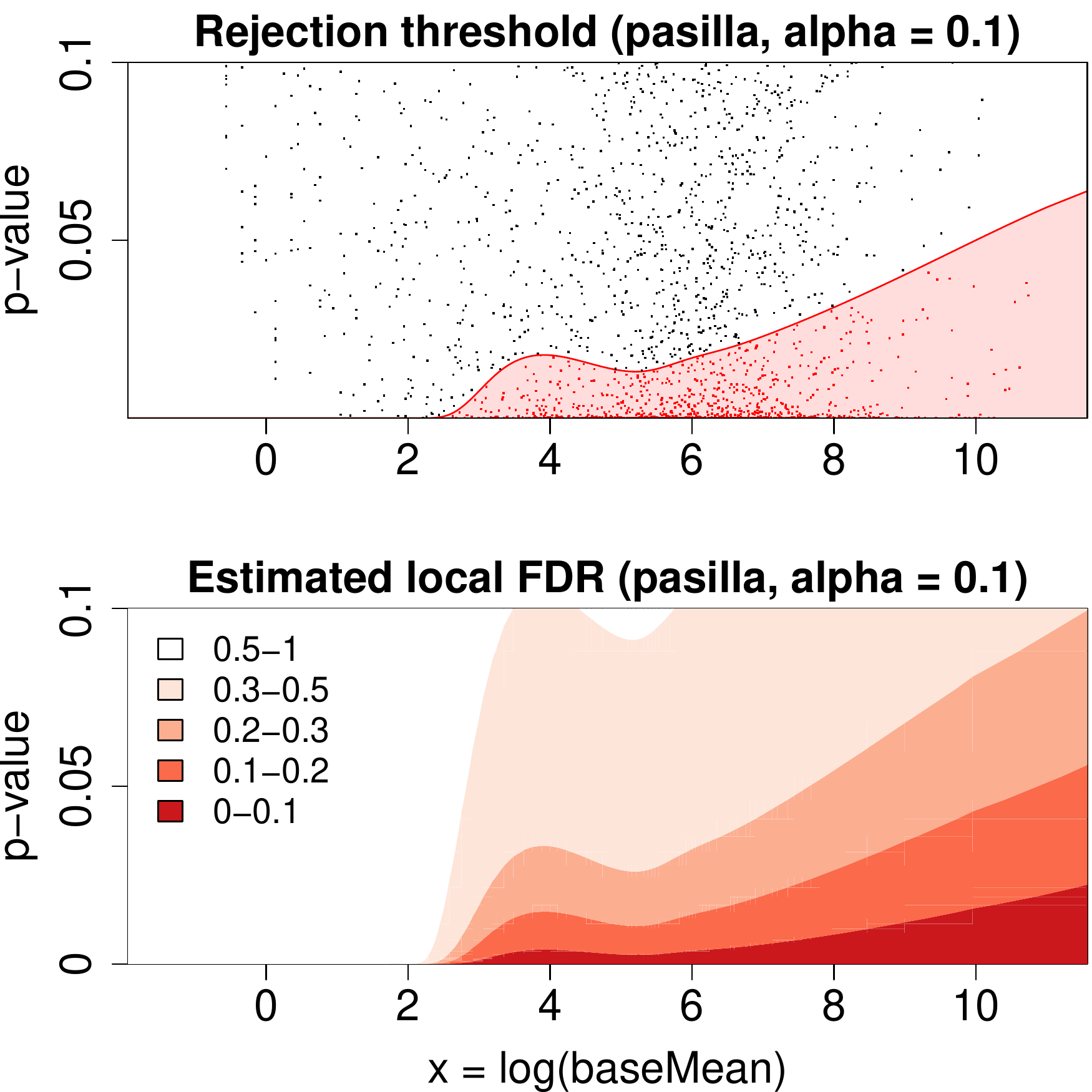}  
  \caption{Results for Pasilla dataset: (left) number of rejections; (middle) path of information loss; (right) threshold curve and level curves of estimated local FDR when $\alpha = 0.1$.}\label{fig:pasilla}

\end{figure}

\noindent $\bullet$ \textbf{Yeast proteins data}

~\\
This dataset is a proteomics dataset, collected by \cite{SILAC} and available in \pkg{R} package \pkg{IHWpaper}, that provides temporal abundance profiles for 2666 yeast proteins from a quantitative mass-spectrometry (SILAC) experiment. The goal is to identify the differential protein abundance in yeast cells treated with rapamycin and DMSO. It is analyzed in \cite{ignatiadis2016data} using IHW method. As in \cite{SILAC} and \cite{ignatiadis2016data}, we calculate the p-values using Welch's t-test and use as the univariate covariate the logarithm of total number of peptides that were quantified across all samples for each gene. The results are plotted in Figure \ref{fig:proteomics}. In this case, \adapt has a similar performance to Barber-Cand\'{e}s method and Storey's BH method. However, it still outperforms all other methods. Furthermore, \adapt learns the monotone pattern of the local FDR, which coincides with the heuristic. 

\begin{figure}[H]
  \centering
  \includegraphics[width = 0.3\textwidth]{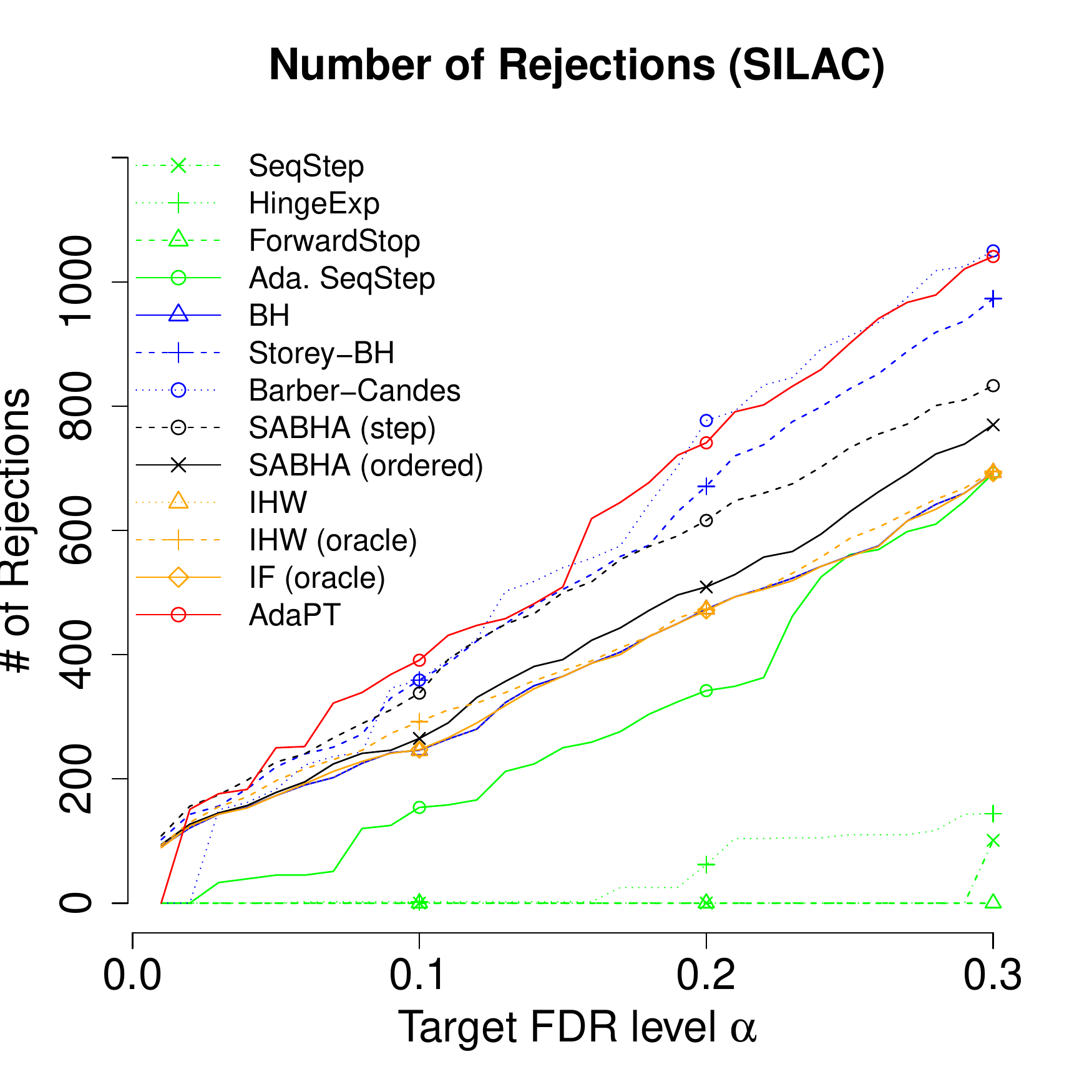}
  \includegraphics[width = 0.3\textwidth]{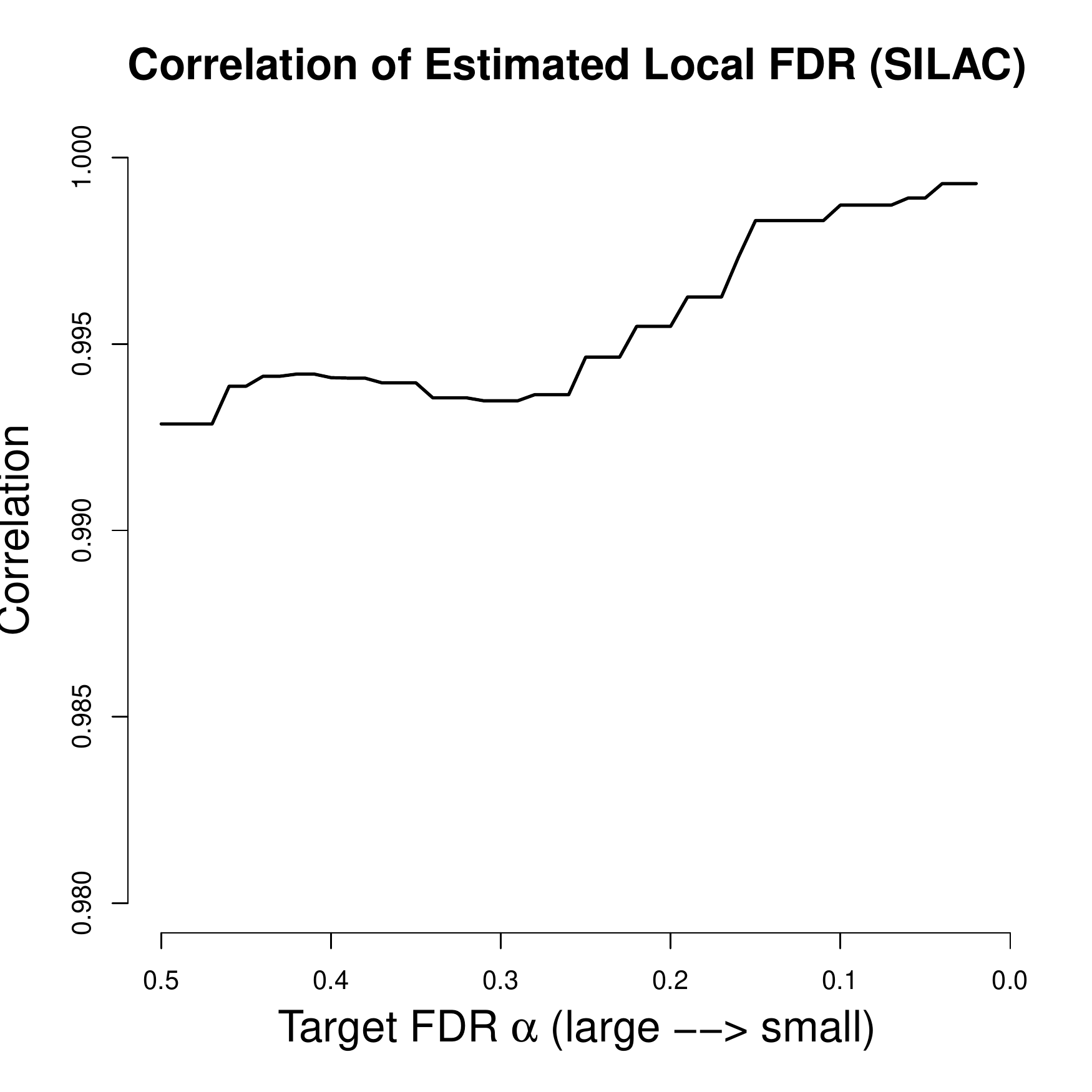}
  \includegraphics[width = 0.3\textwidth]{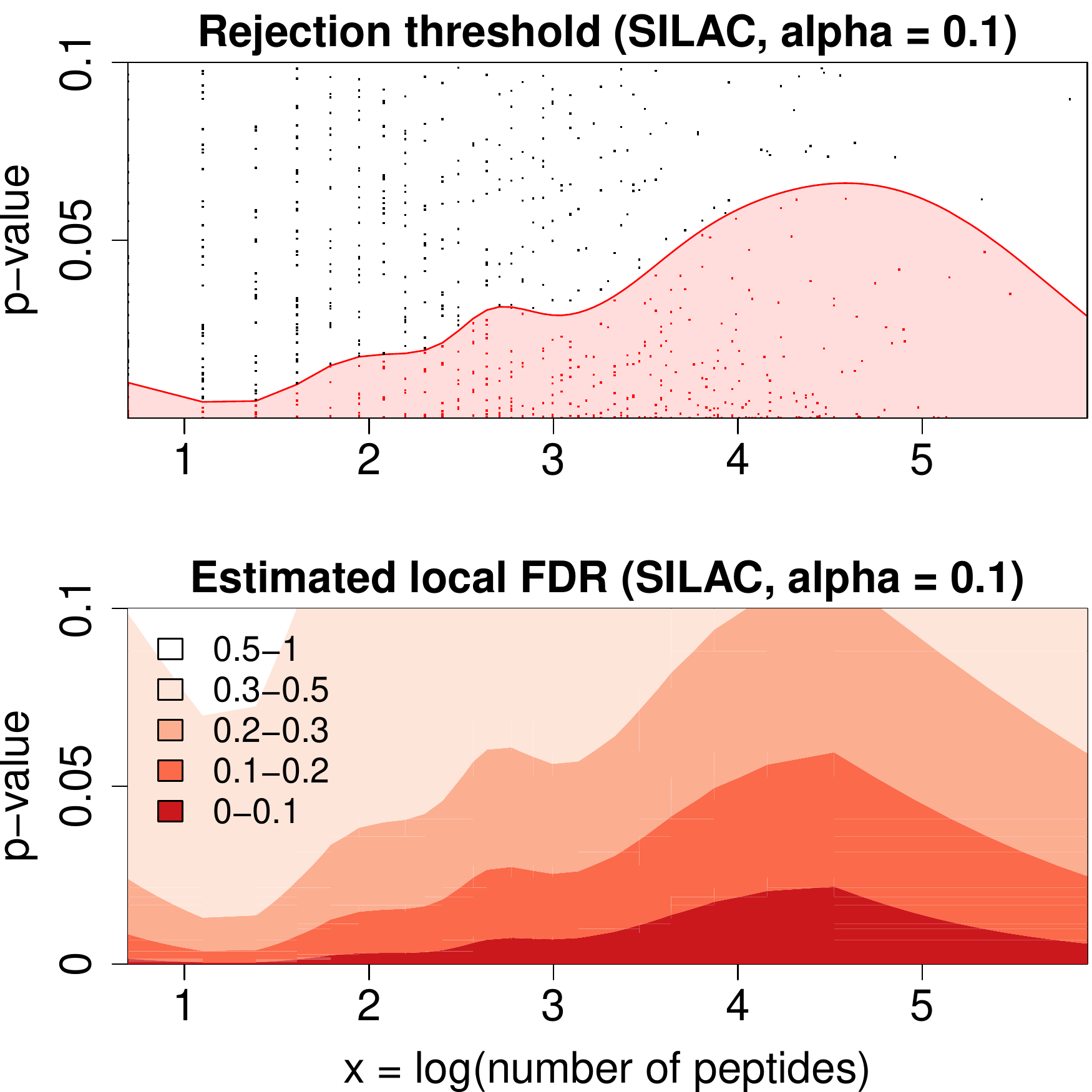}
  \caption{Results for yeast proteins dataset: (left) number of rejections; (middle) path of information loss; (right) threshold curve and level curves of estimated local FDR when $\alpha = 0.1$.}\label{fig:proteomics}
\end{figure}

\section{Discussion}\label{sec:discussion}

We have proposed the AdaPT procedure, a general iterative framework for multiple testing with side information. Using partially masked $p$-values, we estimate a family of optimal and increasingly stringent rejection thresholds, which are level surfaces of the local FDR. We then monitor an estimator of FDP to decide which threshold to use, updating our estimates as we unmask more $p$-values and gain more information. 

Our method is interactive in that it allows the analyst to use an arbitrary method for estimating the local FDR, and to consult her intuition to change models at any iteration, even after observing most of the data. No matter what the analyst does or how badly she overfits the data, FDR is still controlled at the advertised level (though power could be adversely affected by overfitting).We show using various experiments that \adapt can give consistently significant power improvements over current state-of-the-art methods.

\subsection{\adapt without thresholds}\label{subsec:adapt_no_thresh}
Although we state \adapt as a procedure that interactively updates a covariate-variant threshold curve, the thresholds are not essential. In fact, Algorithm \ref{algo:adapt} can be modified as follows in the absence of $s(x)$.

\begin{algorithm}
  \caption{\adapt without thresholds}\label{algo:adapt_no_thresh}
  \textbf{Input: } predictors and $p$-values $\left(x_{i}, p_{i}\right)_{i \in [n]}$, target FDR level $\alpha$. 
  
  \textbf{Procedure: }
  \begin{algorithmic}[1]
    \State Initialize $\cR_{0} = [n]$;
    \For{$t = 0, 1, \ldots$}
    \State $R_t\gets \#\left\{i\in \cR_t: p_i\le \frac{1}{2}\right\}; A_t\gets \#\left\{i\in \cR_t: p_i > \frac{1}{2}\right\}$;
    \State $\widehat{\FDP}_{t}\gets \frac{1 + A_{t}}{R_{t}\vee 1}$;
    \If{$\widehat{\FDP}_{t}\le \alpha$}
    \State Reject $\cR_t$;
    \EndIf
    \State $\cR_{t+1}\gets 
    \update(\left(x_{i}, \tp_{t,i}\right)_{i \in [n]}, \cR_t)$;
    \EndFor
  \end{algorithmic}
\end{algorithm}

Rephrasing Algorithm \ref{algo:adapt_no_thresh}: we start from partially masking all p-values, yielding a ``candidate rejection set'' $\cR_{0} = [n]$, then apply arbitrary method to update $\cR_{t}$ directly. The FDP estimator (line 4) is defined in an essentially identical way as Algorithm \ref{algo:adapt}. It is easy to see that Algorithm \ref{algo:adapt} is a special case of Algorithm \ref{algo:adapt_no_thresh}. Perhaps strikingly, the proof of FDR control carries through to this general case without any modification. 

It is not hard to see that our implementation in Section \ref{sec:implementation} can be reformulated in a more simple and straightforward way: in each step we estimate local FDR for each partially-masked p-values and peel off $\delta$-proportion of them with highest estimated local FDR.

In principle, we can define any ``score'' that measures how "promising" each hypothesis is or how ``likely'' each hypothesis is non-null. A simple workflow based on Algorithm \ref{algo:adapt_no_thresh} is to peel off the hypotheses with least favorable ``scores'' and proceed with refitted ``scores'' by exploiting the revealed p-values. Heuristically, the most statistical meaningful ``score'' is local FDR, which is directly associated with our purpose. However, it arguably allows the framework of \adapt to be more general and flexible. For instance, we recently exploited this idea and develop a general framework for controlling FDR under structural constraints. We refer the readers to \cite{lei2017star} for more thoughts in this vein.

\subsection{Extension to dependent data using knockoffs}\label{sec:hybrid}

\newcommand{\tv}{\tilde{v}}

It would also be interesting to attempt to relax our restriction that the $p$-values must be independent. In the absence of some modification, our AdaPT procedure does not control FDR in finite samples for dependent $p$-values. In particular, there is a danger of ``overfitting'' to local random effects shared by nearby hypotheses: to the AdaPT procedure, such random effects are treated as signal to discover.

It could be interesting to pursue a hybrid method using ideas from AdaPT and Knockoff+ procedures in the case where the $p$-values arise from regression coefficients or other multivariate Gaussian test statistics. Suppose that we observe feature matrix $X\in\R^{n\times d}$ and response vector $y \sim \cN_n(X\beta, \sigma^2 I_n)$, and we wish to test hypotheses $H_j:\; \beta_j=0$ for $j=1,\ldots,d$. The key step in \citet{barber2015controlling} is to compute another matrix $\tX\in \R^{n\times d}$ with $\tX'\tX = X'X$ and $\tX'X = X'X - D$, for some diagonal $D\in \R^{d\times d}$ with positive entries; this can be done provided that $n \geq 2d$ and $X$ has full column rank.

If we define $v = X'y$ and $\tv = \tX'y$, then we have
\[
\begin{bmatrix} v+\tv \\ v-\tv \end{bmatrix} \sim \cN_d\left(\begin{bmatrix} (2X'X - D)\beta \\ D\beta \end{bmatrix}, \sigma^2\begin{bmatrix}4X'X - 2D & 0\\ 0 & 2D \end{bmatrix} \right).
\]
As a result $((v_j,\tv_j))_{j\in \cH_0}$ are independent exchangeable pairs, conditional on $(v_j)_{j\notin\cH_0}$. Let $\cF_{-1}= \sigma((\{v_j,\tv_j\})_{j=1}^d)$. The knockoff filter directly uses these exchangeable pairs by constructing knockoff statistics $w(X,y)\in \R^d$. The sufficiency and antisymmetry conditions together imply that each $|w_j|$ is $\cF_{-1}$-measurable and that, conditional on $\cF_{-1}$, $b_j = 1-\text{sgn}(w_j)$ is a mirror-conservative ``binary $p$-value:'' that is $(b_j)_{j\in\cH_0}$ are i.i.d. $\text{Bern}(1/2)$ independently of $\cF_{-1}$ and $(b_j)_{j\notin\cH_0}$. Using $|w_j|$ as a ``predictor'' (along with any other predictors for feature $j$ that we might have at hand) and $b_j$ as the $p$-value, the AdaPT procedure is immediately applicable. 

Note that $\min\{b_j,1-b_j\} = 0$ for every $j$; hence, at each step it matters only where the rejection threshold surface is above zero or not. If $q_t$ is the $t$th smallest value of $(|w_j|)_{j=1}^d$, the Knockoff+ filter corresponds to using the thresholds $s_t(|w_j|) = 0.5 \cdot 1\{|w_j| \geq q_t\}$. More generally, we can use AdaPT and interactively change the threshold we use.
 
If $\sigma^2$ is known, we can proceed more directly by constructing $z$-statistics and two-tailed $p$-values:
\[
z_j = \frac{v_j-\tv_j}{\sqrt{2d_j\sigma^2}} \sim \cN\left(\frac{2\beta_j}{\sqrt{2d_j\sigma^2}}, 1\right); \quad p_j = 2\min\{\Phi(z_j), 1-\Phi(z_j)\}.
\]
In that case $(p_j)_{j\in\cH_0}$ are i.i.d. uniform $p$-values conditional on $(p_j)_{j\notin\cH_0}$ and $v+\tv$ (not on $\cF_{-1}$ above). Once again, we can immediately apply AdaPT using  $v+\tv$ as a ``predictor.'' While it is not fully clear {\em a priori} just how we should use $v+\tv$ as a predictor, this represents an interesting avenue for future work.

\subsection{Connection to knockoffs in the orthogonal design case}\label{sec:knockoffs}

Focusing on the case of orthogonal design further illuminates the relationship between AdaPT and the Knockoff+ procedure. Suppose that $X \in \R^{n\times d}$ has orthonormal columns, and that $d \geq 2n$. In that case \citet{barber2015controlling} suggest using the knockoff matrix $\tX$ of $d$ more orthonormal columns which are also orthogonal to the columns of $X$. Then $X'y \sim \cN_d(\beta, \sigma^2 I_d)$ while $\tX'y \sim \cN_d(0,\sigma^2 I_d)$, independently. 

In this case, using the LASSO, forward stepwise regression, or virtually any other model selection path procedure on the design matrix $[X \tX]$ is identical to selecting variables in decreasing order of absolute value of $|X_j'y|$ and $|\tX_j'y|$; or equivalently, in increasing order of the two-tailed $p$-values $p_j = 2 - 2\Phi(|X_j'y|/\sigma)$ and $p_j^* = 2 - 2\Phi(|\tX_j'y|/\sigma)$ (this is true whether or not $\sigma^2$ is known). As a result, if we operationalize the Knockoff+ procedure using e.g. LASSO, we would reject hypotheses $H_j$ for which $\min\{p_j,p_j^*\}$ is small {\em and} $p_j < p_j^*$. By contrast, if we were to implement AdaPT with a constant threshold in each step, we would reject hypotheses $H_j$ for which $\min\{p_j,1-p_j\}$ is small {\em and} $p_j < 1-p_j$. Hence, the pairwise exchangeability of $(p_j,1-p_j)$ is playing the same role as the i.i.d. pair $(p_j,p_j^*)$ in knockoffs.

The two most salient differences between AdaPT and Knockoff+ in this case are that:
\begin{enumerate}
\item AdaPT allows for iterative interaction between the analyst and data, allowing the analyst to update her local FDR estimates as information accrues. By contrast, the knockoff filter as described in \citet{barber2015controlling} does not allow for such interaction (though it could, and this is a potentially interesting avenue for extending knockoffs).
\item Unlike Knockoff+, AdaPT introduces no extra randomness into the problem. This is because AdaPT uses pairwise exchangeability of $p_i$ with the ``mirror image'' $p$-value $1-p_i$ instead of the independent ``knockoff'' $p$-value $p_i^*\sim U[0,1]$. Thus, as a statistical procedure AdaPT respects the sufficiency principle: for any (non-randomized) choice of \update subroutine, the AdaPT result is a deterministic function of the original data.
\end{enumerate}

\subsection{Extension: estimating local FDR}

In addition to returning a list of rejections that is guaranteed to control the global FDR, most implementations of AdaPT will also return estimates, for each rejected hypothesis, of the local FDR,
\[
\hfdr(p_i \mid x_i) = \widehat{\P}(H_i \text{ is null} \mid x_i, p_i).
\]

If we have reasonably high confidence in the model we have used to produce these estimates, they may provide the best summary of evidence against the individual hypothesis $H_i$. By contrast, the significance level for global FDR only summarizes the strength of evidence against the entire list of rejections, taken as a whole. Indeed, it is possible to construct pathological examples where $\fdr(p_i \mid x_i) = 1$ for some of the rejected $H_i$, despite controlling FDR at some level $\alpha \ll 1$. Even apart from such perversities, it will typically be the case that $\hfdr(p_i \mid x_i) > \alpha$ for many of the rejected hypotheses. 

Despite their more favorable interpretation, however, the local FDR estimates produced by AdaPT rely on much stronger assumptions than the global FDR control guarantee --- namely, that the two-groups model, as well as our specifications for $\pi_1(x)$ and $f_1(p \mid x)$, must be correct. Instead of using the parametric estimates $\hfdr(p_i \mid x_i)$, we could estimate the local FDR in a moving window of $w$ steps of the AdaPT algorithm:
\[
\hfdp_{t,w} = \frac{A_t - A_{t+w}}{1 \vee (R_t - R_{t+w})}, \quad \text{ or } \hfdp_{t,w}^+ = \frac{1 + A_t - A_{t+w}}{1 \vee (R_t - R_{t+w})}.
\]
Note that if we take an infinitely large window, we obtain $\hfdp_{t, \infty}^+ = \hFDP_t$; thus, these estimators adaptively estimate the false discovery proportion for $p$-values revealed in the next $w$ steps of the algorithm, in much the same way that $\hFDP_t$ estimates the false discovery proportion for {\em all} remaining $p$-values. It would be interesting to investigate, in future work, what error-control guarantees we might be able to derive by using these estimators.

\section*{Acknowledgments}

The authors thank Jim Pitman, Ruth Heller, Aaditya Ramdas, and Stefan Wager for helpful discussions.

\bibliography{biblio}

\begin{thebibliography}{}

\bibitem[\protect\citeauthoryear{Allison, Gadbury, Heo, Fern{\'a}ndez, Lee,
  Prolla, and Weindruch}{Allison et~al.}{2002}]{allison2002mixture}
Allison, D.~B., G.~L. Gadbury, M.~Heo, J.~R. Fern{\'a}ndez, C.-K. Lee, T.~A.
  Prolla, and R.~Weindruch (2002).
\newblock A mixture model approach for the analysis of microarray gene
  expression data.
\newblock {\em Computational Statistics \& Data Analysis\/}~{\em 39\/}(1),
  1--20.

\bibitem[\protect\citeauthoryear{Anders and Huber}{Anders and
  Huber}{2010}]{DEseq}
Anders, S. and W.~Huber (2010).
\newblock Differential expression analysis for sequence count data.
\newblock {\em Genome biology\/}~{\em 11\/}(10), R106.

\bibitem[\protect\citeauthoryear{Arias-Castro and Chen}{Arias-Castro and
  Chen}{2016}]{arias16}
Arias-Castro, E. and S.~Chen (2016).
\newblock Distribution-free multiple testing.
\newblock {\em arXiv preprint arXiv:1604.07520\/}.

\bibitem[\protect\citeauthoryear{Barber and Cand{\`e}s}{Barber and
  Cand{\`e}s}{2015}]{barber2015controlling}
Barber, R.~F. and E.~J. Cand{\`e}s (2015).
\newblock Controlling the false discovery rate via knockoffs.
\newblock {\em The Annals of Statistics\/}~{\em 43\/}(5), 2055--2085.

\bibitem[\protect\citeauthoryear{Barber and Cand{\`e}s}{Barber and
  Cand{\`e}s}{2016}]{barber2016knockoff}
Barber, R.~F. and E.~J. Cand{\`e}s (2016).
\newblock A knockoff filter for high-dimensional selective inference.
\newblock {\em arXiv preprint arXiv:1602.03574\/}.

\bibitem[\protect\citeauthoryear{Benjamini and Hochberg}{Benjamini and
  Hochberg}{1995}]{bh95}
Benjamini, Y. and Y.~Hochberg (1995).
\newblock Controlling the false discovery rate: a practical and powerful
  approach to multiple testing.
\newblock {\em Journal of the Royal Statistical Society. Series B
  (Methodological)\/}, 289--300.

\bibitem[\protect\citeauthoryear{Benjamini and Hochberg}{Benjamini and
  Hochberg}{1997}]{benjamini1997multiple}
Benjamini, Y. and Y.~Hochberg (1997).
\newblock Multiple hypotheses testing with weights.
\newblock {\em Scandinavian Journal of Statistics\/}~{\em 24\/}(3), 407--418.

\bibitem[\protect\citeauthoryear{Berk, Brown, Buja, Zhang, and Zhao}{Berk
  et~al.}{2013}]{berk2013valid}
Berk, R., L.~Brown, A.~Buja, K.~Zhang, and L.~Zhao (2013).
\newblock Valid post-selection inference.
\newblock {\em The Annals of Statistics\/}~{\em 41\/}(2), 802--837.

\bibitem[\protect\citeauthoryear{Bottomly, Walter, Hunter, Darakjian, Kawane,
  Buck, Searles, Mooney, McWeeney, and Hitzemann}{Bottomly
  et~al.}{2011}]{bottomly}
Bottomly, D., N.~A. Walter, J.~E. Hunter, P.~Darakjian, S.~Kawane, K.~J. Buck,
  R.~P. Searles, M.~Mooney, S.~K. McWeeney, and R.~Hitzemann (2011).
\newblock Evaluating gene expression in c57bl/6j and dba/2j mouse striatum
  using rna-seq and microarrays.
\newblock {\em PloS one\/}~{\em 6\/}(3), e17820.

\bibitem[\protect\citeauthoryear{Bourgon, Gentleman, and Huber}{Bourgon
  et~al.}{2010}]{bourgon2010independent}
Bourgon, R., R.~Gentleman, and W.~Huber (2010).
\newblock Independent filtering increases detection power for high-throughput
  experiments.
\newblock {\em Proceedings of the National Academy of Sciences\/}~{\em
  107\/}(21), 9546--9551.

\bibitem[\protect\citeauthoryear{Boyd and Vandenberghe}{Boyd and
  Vandenberghe}{2004}]{boyd04}
Boyd, S. and L.~Vandenberghe (2004).
\newblock {\em Convex optimization}.
\newblock Cambridge university press.

\bibitem[\protect\citeauthoryear{Brooks, Yang, Duff, Hansen, Park, Dudoit,
  Brenner, and Graveley}{Brooks et~al.}{2011}]{pasilla}
Brooks, A.~N., L.~Yang, M.~O. Duff, K.~D. Hansen, J.~W. Park, S.~Dudoit, S.~E.
  Brenner, and B.~R. Graveley (2011).
\newblock Conservation of an rna regulatory map between drosophila and mammals.
\newblock {\em Genome research\/}~{\em 21\/}(2), 193--202.

\bibitem[\protect\citeauthoryear{Davis and Meltzer}{Davis and
  Meltzer}{2007}]{davis2007geoquery}
Davis, S. and P.~S. Meltzer (2007).
\newblock {GEOquery}: a bridge between the gene expression omnibus (geo) and
  bioconductor.
\newblock {\em Bioinformatics\/}~{\em 23\/}(14), 1846--1847.

\bibitem[\protect\citeauthoryear{Dephoure and Gygi}{Dephoure and
  Gygi}{2012}]{SILAC}
Dephoure, N. and S.~P. Gygi (2012).
\newblock Hyperplexing: a method for higher-order multiplexed quantitative
  proteomics provides a map of the dynamic response to rapamycin in yeast.
\newblock {\em Science signaling\/}~{\em 5\/}(217), rs2.

\bibitem[\protect\citeauthoryear{Dobriban}{Dobriban}{2016}]{dobriban2016general}
Dobriban, E. (2016).
\newblock A general convex framework for multiple testing with prior
  information.
\newblock {\em arXiv preprint arXiv:1603.05334\/}.

\bibitem[\protect\citeauthoryear{Dobriban, Fortney, Kim, and Owen}{Dobriban
  et~al.}{2015}]{dobriban2015optimal}
Dobriban, E., K.~Fortney, S.~K. Kim, and A.~B. Owen (2015).
\newblock Optimal multiple testing under a gaussian prior on the effect sizes.
\newblock {\em Biometrika\/}~{\em 102\/}(4), 753--766.

\bibitem[\protect\citeauthoryear{Dobson and Barnett}{Dobson and
  Barnett}{2008}]{dobson2008introduction}
Dobson, A.~J. and A.~Barnett (2008).
\newblock {\em An introduction to generalized linear models}.
\newblock CRC press.

\bibitem[\protect\citeauthoryear{Du, Zhang, et~al.}{Du
  et~al.}{2014}]{du2014single}
Du, L., C.~Zhang, et~al. (2014).
\newblock Single-index modulated multiple testing.
\newblock {\em The Annals of Statistics\/}~{\em 42\/}(4), 1262--1311.

\bibitem[\protect\citeauthoryear{Dwork, Feldman, Hardt, Pitassi, Reingold, and
  Roth}{Dwork et~al.}{2015}]{dwork2015preserving}
Dwork, C., V.~Feldman, M.~Hardt, T.~Pitassi, O.~Reingold, and A.~L. Roth
  (2015).
\newblock Preserving statistical validity in adaptive data analysis.
\newblock In {\em Proceedings of the Forty-Seventh Annual ACM on Symposium on
  Theory of Computing}, pp.\  117--126. ACM.

\bibitem[\protect\citeauthoryear{Efron}{Efron}{2007}]{efron2007size}
Efron, B. (2007).
\newblock Size, power and false discovery rates.
\newblock {\em The Annals of Statistics\/}, 1351--1377.

\bibitem[\protect\citeauthoryear{Efron, Tibshirani, Storey, and Tusher}{Efron
  et~al.}{2001}]{efron2001empirical}
Efron, B., R.~Tibshirani, J.~D. Storey, and V.~Tusher (2001).
\newblock Empirical bayes analysis of a microarray experiment.
\newblock {\em Journal of the American statistical association\/}~{\em
  96\/}(456), 1151--1160.

\bibitem[\protect\citeauthoryear{Ferkingstad, Frigessi, Rue, Thorleifsson, and
  Kong}{Ferkingstad et~al.}{2008}]{ferkingstad2008unsupervised}
Ferkingstad, E., A.~Frigessi, H.~Rue, G.~Thorleifsson, and A.~Kong (2008).
\newblock Unsupervised empirical bayesian multiple testing with external
  covariates.
\newblock {\em The Annals of Applied Statistics\/}, 714--735.

\bibitem[\protect\citeauthoryear{Fithian, Sun, and Taylor}{Fithian
  et~al.}{2014}]{fithian2014optimal}
Fithian, W., D.~Sun, and J.~Taylor (2014).
\newblock Optimal inference after model selection.
\newblock {\em arXiv preprint arXiv:1410.2597\/}.

\bibitem[\protect\citeauthoryear{Fortney, Dobriban, Garagnani, Pirazzini,
  Monti, Mari, Atzmon, Barzilai, Franceschi, Owen, et~al.}{Fortney
  et~al.}{2015}]{fortney2015genome}
Fortney, K., E.~Dobriban, P.~Garagnani, C.~Pirazzini, D.~Monti, D.~Mari,
  G.~Atzmon, N.~Barzilai, C.~Franceschi, A.~B. Owen, et~al. (2015).
\newblock Genome-wide scan informed by age-related disease identifies loci for
  exceptional human longevity.
\newblock {\em PLoS Genet\/}~{\em 11\/}(12), e1005728.

\bibitem[\protect\citeauthoryear{Frazee, Langmead, and Leek}{Frazee
  et~al.}{2011}]{recount}
Frazee, A.~C., B.~Langmead, and J.~T. Leek (2011).
\newblock Recount: a multi-experiment resource of analysis-ready rna-seq gene
  count datasets.
\newblock {\em BMC bioinformatics\/}~{\em 12\/}(1), 449.

\bibitem[\protect\citeauthoryear{Genovese, Roeder, and Wasserman}{Genovese
  et~al.}{2006}]{genovese2006false}
Genovese, C.~R., K.~Roeder, and L.~Wasserman (2006).
\newblock False discovery control with p-value weighting.
\newblock {\em Biometrika\/}~{\em 93\/}(3), 509--524.

\bibitem[\protect\citeauthoryear{Gentleman, Carey, Huber, and Hahne}{Gentleman
  et~al.}{2016}]{genefilter}
Gentleman, R., V.~Carey, W.~Huber, and F.~Hahne (2016).
\newblock {\em genefilter: genefilter: methods for filtering genes from
  high-throughput experiments}.
\newblock R package version 1.54.2.

\bibitem[\protect\citeauthoryear{Geyer and Meeden}{Geyer and
  Meeden}{2005}]{geyer2005fuzzy}
Geyer, C.~J. and G.~D. Meeden (2005).
\newblock Fuzzy and randomized confidence intervals and p-values.
\newblock {\em Statistical Science\/}, 358--366.

\bibitem[\protect\citeauthoryear{G'Sell, Wager, Chouldechova, and
  Tibshirani}{G'Sell et~al.}{2016}]{gsell2016sequential}
G'Sell, M.~G., S.~Wager, A.~Chouldechova, and R.~Tibshirani (2016).
\newblock Sequential selection procedures and false discovery rate control.
\newblock {\em Journal of the Royal Statistical Society: Series B (Statistical
  Methodology)\/}~{\em 78\/}(2), 423--444.

\bibitem[\protect\citeauthoryear{Hemerik and Goeman}{Hemerik and
  Goeman}{2014}]{hemerik2014exact}
Hemerik, J. and J.~Goeman (2014).
\newblock Exact testing with random permutations.
\newblock {\em arXiv preprint arXiv:1411.7565\/}.

\bibitem[\protect\citeauthoryear{Himes, Jiang, Wagner, Hu, Wang, Klanderman,
  Whitaker, Duan, Lasky-Su, Nikolos, et~al.}{Himes et~al.}{2014}]{airway}
Himes, B.~E., X.~Jiang, P.~Wagner, R.~Hu, Q.~Wang, B.~Klanderman, R.~M.
  Whitaker, Q.~Duan, J.~Lasky-Su, C.~Nikolos, et~al. (2014).
\newblock Rna-seq transcriptome profiling identifies crispld2 as a
  glucocorticoid responsive gene that modulates cytokine function in airway
  smooth muscle cells.
\newblock {\em PloS one\/}~{\em 9\/}(6), e99625.

\bibitem[\protect\citeauthoryear{Hoeffding}{Hoeffding}{1952}]{hoeffding1952large}
Hoeffding, W. (1952).
\newblock The large-sample power of tests based on permutations of
  observations.
\newblock {\em The Annals of Mathematical Statistics\/}, 169--192.

\bibitem[\protect\citeauthoryear{Hu, Zhao, and Zhou}{Hu
  et~al.}{2012}]{hu2012false}
Hu, J.~X., H.~Zhao, and H.~H. Zhou (2012).
\newblock False discovery rate control with groups.
\newblock {\em Journal of the American Statistical Association\/}~{\em
  105\/}(491), 1215--1227.

\bibitem[\protect\citeauthoryear{Huber and Reyes}{Huber and
  Reyes}{2016}]{pasillapackage}
Huber, W. and A.~Reyes (2016).
\newblock {\em pasilla: Data package with per-exon and per-gene read counts of
  RNA-seq samples of Pasilla knock-down by Brooks et al., Genome Research
  2011.}
\newblock R package version 0.12.0.

\bibitem[\protect\citeauthoryear{Ignatiadis and Huber}{Ignatiadis and
  Huber}{2017}]{ignatiadis2017covariate}
Ignatiadis, N. and W.~Huber (2017).
\newblock Covariate-powered weighted multiple testing with false discovery rate
  control.
\newblock {\em arXiv preprint arXiv:1701.05179\/}.

\bibitem[\protect\citeauthoryear{Ignatiadis, Klaus, Zaugg, and
  Huber}{Ignatiadis et~al.}{2016}]{ignatiadis2016data}
Ignatiadis, N., B.~Klaus, J.~B. Zaugg, and W.~Huber (2016).
\newblock Data-driven hypothesis weighting increases detection power in
  genome-scale multiple testing.
\newblock {\em Nature methods\/}~{\em 13\/}(7), 577--580.

\bibitem[\protect\citeauthoryear{Lawyer, Ferkingstad, Nesv{\aa}g, Varn{\"a}s,
  and Agartz}{Lawyer et~al.}{2009}]{lawyer2009local}
Lawyer, G., E.~Ferkingstad, R.~Nesv{\aa}g, K.~Varn{\"a}s, and I.~Agartz (2009).
\newblock Local and covariate-modulated false discovery rates applied in
  neuroimaging.
\newblock {\em NeuroImage\/}~{\em 47\/}(1), 213--219.

\bibitem[\protect\citeauthoryear{Lee, Sun, Sun, and Taylor}{Lee
  et~al.}{2016}]{lee2016exact}
Lee, J.~D., D.~L. Sun, Y.~Sun, and J.~E. Taylor (2016).
\newblock Exact post-selection inference, with application to the lasso.
\newblock {\em The Annals of Statistics\/}~{\em 44\/}(3), 907--927.

\bibitem[\protect\citeauthoryear{Lehmann and Romano}{Lehmann and
  Romano}{2005}]{lehmann2005testing}
Lehmann, E. and J.~P. Romano (2005).
\newblock {\em Testing statistical hypotheses}.
\newblock New York:. Springer.

\bibitem[\protect\citeauthoryear{Lei and Fithian}{Lei and
  Fithian}{2016}]{lei2016power}
Lei, L. and W.~Fithian (2016).
\newblock Power of ordered hypothesis testing.
\newblock In {\em ICML}.

\bibitem[\protect\citeauthoryear{Lei, Ramdas, and Fithian}{Lei
  et~al.}{2017}]{lei2017star}
Lei, L., A.~Ramdas, and W.~Fithian (2017).
\newblock {STAR}: A general interactive framework for fdr control under
  structural constraints.
\newblock {\em arXiv preprint arXiv:1710.02776\/}.

\bibitem[\protect\citeauthoryear{Lewinger, Conti, Baurley, Triche, and
  Thomas}{Lewinger et~al.}{2007}]{lewinger2007hierarchical}
Lewinger, J.~P., D.~V. Conti, J.~W. Baurley, T.~J. Triche, and D.~C. Thomas
  (2007).
\newblock Hierarchical bayes prioritization of marker associations from a
  genome-wide association scan for further investigation.
\newblock {\em Genetic epidemiology\/}~{\em 31\/}(8), 871--882.

\bibitem[\protect\citeauthoryear{Li and Barber}{Li and
  Barber}{2016a}]{li2016accumulation}
Li, A. and R.~F. Barber (2016a).
\newblock Accumulation tests for {FDR} control in ordered hypothesis testing.
\newblock {\em Journal of the American Statistical Association\/}~{\em
  112\/}(just-accepted), 1--38.

\bibitem[\protect\citeauthoryear{Li and Barber}{Li and
  Barber}{2016b}]{li2016multiple}
Li, A. and R.~F. Barber (2016b).
\newblock Multiple testing with the structure adaptive benjamini-hochberg
  algorithm.
\newblock {\em arXiv preprint arXiv:1606.07926\/}.

\bibitem[\protect\citeauthoryear{Love, Anders, and Huber}{Love
  et~al.}{2014}]{DEseq2}
Love, M.~I., S.~Anders, and W.~Huber (2014).
\newblock Moderated estimation of fold change and dispersion for rna-seq data
  with deseq2.
\newblock {\em Genome biology\/}~{\em 15\/}(12), 550.

\bibitem[\protect\citeauthoryear{Markitsis and Lai}{Markitsis and
  Lai}{2010}]{markitsis2010censored}
Markitsis, A. and Y.~Lai (2010).
\newblock A censored beta mixture model for the estimation of the proportion of
  non-differentially expressed genes.
\newblock {\em Bioinformatics\/}~{\em 26\/}(5), 640--646.

\bibitem[\protect\citeauthoryear{Parker and Rothenberg}{Parker and
  Rothenberg}{1988}]{parker1988identifying}
Parker, R. and R.~Rothenberg (1988).
\newblock Identifying important results from multiple statistical tests.
\newblock {\em Statistics in medicine\/}~{\em 7\/}(10), 1031--1043.

\bibitem[\protect\citeauthoryear{Pounds and Morris}{Pounds and
  Morris}{2003}]{pounds2003estimating}
Pounds, S. and S.~W. Morris (2003).
\newblock Estimating the occurrence of false positives and false negatives in
  microarray studies by approximating and partitioning the empirical
  distribution of p-values.
\newblock {\em Bioinformatics\/}~{\em 19\/}(10), 1236--1242.

\bibitem[\protect\citeauthoryear{Slater}{Slater}{1950}]{slater50}
Slater, M. (1950).
\newblock Lagrange multipliers revisited, cowles commis.
\newblock Technical report, sion Discussion Paper, Mathematics.

\bibitem[\protect\citeauthoryear{Storey}{Storey}{2002}]{storey02}
Storey, J.~D. (2002).
\newblock A direct approach to false discovery rates.
\newblock {\em Journal of the Royal Statistical Society: Series B (Statistical
  Methodology)\/}~{\em 64\/}(3), 479--498.

\bibitem[\protect\citeauthoryear{Storey}{Storey}{2007}]{storey2007optimal}
Storey, J.~D. (2007).
\newblock The optimal discovery procedure: a new approach to simultaneous
  significance testing.
\newblock {\em Journal of the Royal Statistical Society: Series B (Statistical
  Methodology)\/}~{\em 69\/}(3), 347--368.

\bibitem[\protect\citeauthoryear{Storey, Taylor, and Siegmund}{Storey
  et~al.}{2004}]{storey04}
Storey, J.~D., J.~E. Taylor, and D.~Siegmund (2004).
\newblock Strong control, conservative point estimation and simultaneous
  conservative consistency of false discovery rates: a unified approach.
\newblock {\em Journal of the Royal Statistical Society: Series B (Statistical
  Methodology)\/}~{\em 66\/}(1), 187--205.

\bibitem[\protect\citeauthoryear{Storey and Tibshirani}{Storey and
  Tibshirani}{2003}]{storey2003statistical}
Storey, J.~D. and R.~Tibshirani (2003).
\newblock Statistical significance for genomewide studies.
\newblock {\em Proceedings of the National Academy of Sciences\/}~{\em
  100\/}(16), 9440--9445.

\bibitem[\protect\citeauthoryear{Sun, Reich, Tony~Cai, Guindani, and
  Schwartzman}{Sun et~al.}{2015}]{sun15}
Sun, W., B.~J. Reich, T.~Tony~Cai, M.~Guindani, and A.~Schwartzman (2015).
\newblock False discovery control in large-scale spatial multiple testing.
\newblock {\em Journal of the Royal Statistical Society: Series B (Statistical
  Methodology)\/}~{\em 77\/}(1), 59--83.

\bibitem[\protect\citeauthoryear{Tian, Taylor, et~al.}{Tian
  et~al.}{2018}]{tian2015selective}
Tian, X., J.~Taylor, et~al. (2018).
\newblock Selective inference with a randomized response.
\newblock {\em The Annals of Statistics\/}~{\em 46\/}(2), 679--710.

\bibitem[\protect\citeauthoryear{Tukey}{Tukey}{1994}]{tukey1994collected}
Tukey, J.~W. (1994).
\newblock {\em The collected works of John W. Tukey: Multiple comparisons,
  1948-1983}, Volume~8.
\newblock Chapman \& Hall/CRC.

\bibitem[\protect\citeauthoryear{Yekutieli}{Yekutieli}{2012}]{yekutieli2012adjusted}
Yekutieli, D. (2012).
\newblock Adjusted bayesian inference for selected parameters.
\newblock {\em Journal of the Royal Statistical Society: Series B (Statistical
  Methodology)\/}~{\em 74\/}(3), 515--541.

\bibitem[\protect\citeauthoryear{Zablocki, Schork, Levine, Andreassen, Dale,
  and Thompson}{Zablocki et~al.}{2014}]{zablocki2014covariate}
Zablocki, R.~W., A.~J. Schork, R.~A. Levine, O.~A. Andreassen, A.~M. Dale, and
  W.~K. Thompson (2014).
\newblock Covariate-modulated local false discovery rate for genome-wide
  association studies.
\newblock {\em Bioinformatics\/}~{\em 30\/}(15), 2098--2104.

\end{thebibliography}
\bibliographystyle{chicago}

\appendix

\section{EM algorithm details}\label{app:EM}

\subsection{Derivation of E-step}\label{subapp:Estep}
To fill in the details of the algorithm, we are left to calculate the imputed values $\hat{H}_i^{(r)}$ and $\hat{y}_{i}^{(r, 1)}$ given the parameters $\theta$ and $\beta$. Denote $\pi_{i}$ and $\mu_{i}$ by
\[\pi_{i} = \lb 1 + e^{-\phi_{\pi}(x_i)'\theta}\rb^{-1}, \quad \mu_{i} = \zeta^{-1}\lb \beta'\phi_{\mu}(x_i)\rb\]
We distinguish two cases: for revealed p-values, $\td{p}_{t, i}$ is a singleton; for masked p-values, $\td{p}_{t, i}$ is a two-elements set. For clarity, we let $p_{t, i}'$ denote the minimum element of $\td{p}_{t, i}$, i.e. $\td{p}_{t, i} = p_{t,i}'$ in the former case and $\td{p}_{t, i} = \{p_{t,i}', 1 - p_{t,i}'\}$ in the latter case.

\begin{itemize}
\item For revealed p-values,
\begin{align}
\hat{H}_{i}^{(r)} &= \P(H_{i} = 1 | p_{i} = \td{p}_{t, i}) = \frac{\pi_{i} \cdot h(\td{p}_{t,i}; \mu_{i})}{\pi_{i}\cdot h(\td{p}_{t,i}; \mu_{i}) + 1 - \pi_{i}},\label{eq:Ezi_case1}
\end{align}
and 
\begin{align}
\hat{y}_{i}^{(r, 1)} = \E (y_{i}H_{i} | \td{p}_{t, i}) / \hat{H}_{i}^{(r)} = y_{i}.
\end{align}
\item For masked p-values,
\begin{equation}\label{eq:pi1}
\P(p_{i} = p_{t, i}' | \td{p}_{t, i}, H_{i} = 1) = \frac{h(p_{t, i}'; \mu_{i})}{h(p_{t, i}'; \mu_{i}) + h(1 - p_{t, i}'; \mu_{i})}
\end{equation}

and 
\begin{equation}\label{eq:pi2}
\P(p_{i} = 1 - p_{t, i}' | \td{p}_{t, i}, H_{i} = 1) = \frac{h(1 - p_{t, i}'; \mu_{i})}{h(p_{t, i}'; \mu_{i}) + h(1 - p_{t, i}'; \mu_{i})}
\end{equation}

By Bayes' formula, we can also derive the conditional distribution of $H_{i}$ given $\td{p}_{t, i}$:
\begin{equation}\label{eq:Ezi_case2}
   \hat{H}_{i}^{(r)} = \P(H_{i} = 1| \td{p}_{t, i}) = \frac{\pi_{i} \cdot \lb h(p_{t, i}'; \mu_{i}) + h(1 - p_{t, i}'; \mu_{i})\rb}{\pi_{i} \cdot \lb h(p_{t, i}'; \mu_{i}) + h(1 - p_{t, i}'; \mu_{i})\rb + 2(1 - \pi_{i})}.
\end{equation}

As a consequence of \eqref{eq:pi1} - \eqref{eq:Ezi_case2}, 
\begin{align}\label{eq:Ezipi_case2}
  \hat{y}_{i}^{(r, 1)} = &\frac{1}{\hat{H}_{i}^{(r)}}\cdot \{ g(p_{t, i}') \cdot \P(H_{i} = 1, p_{i} = p_{t, i}' | \td{p}_{t, i}) + g(1 - p_{t, i}')\cdot \P(H_{i} = 1, p_{i} = 1 - p_{t, i}' | \td{p}_{t, i})\}\nonumber\\  
= & \frac{1}{\hat{H}_{i}^{(r)}}\cdot \frac{\pi_{i} \cdot \lb h(p_{t,i}';\mu_{i}) \cdot g(p_{t,i}') + h(1 - p_{t, i}'; \mu_{i})\cdot g(1 - p_{t,i}')\rb}{\pi_{i} \cdot \lb h(p_{t, i}'; \mu_{i}) + h(1 - p_{t, i}'; \mu_{i})\rb + 2(1 - \pi_{i})}\nonumber\\
= & \frac{h(p_{t,i}';\mu_{i}) \cdot g(p_{t, i}') + h(1 - p_{t, i}'; \mu_{i})\cdot g(1 - p_{t, i}')}{h(p_{t, i}'; \mu_{i}) + h(1 - p_{t, i}'; \mu_{i})}.
\end{align}
\end{itemize}

\subsubsection{Beta-mixture model}\label{subsubapp:GammaGLM}
Consider the beta-mixture model \eqref{eq:beta_mixture}, where
\[h(p; \mu) = \frac{1}{\mu}\cdot p^{\frac{1}{\mu} - 1}.\]
Plug it into our general results, we obtain that
\begin{itemize}
\item for revealed p-values,
  \begin{equation}\label{eq:Gamma_GLM_case1}
\hat{H}_{i}^{(r)} = \frac{\pi_{i} \cdot \frac{1}{\mu_{i}}\td{p}_{t, i}^{\frac{1}{\mu_{i}} - 1}}{\pi_{i}\cdot \frac{1}{\mu_{i}}\td{p}_{t, i}^{\frac{1}{\mu_{i}} - 1} + 1 - \pi_{i}}, \quad \hat{y}_{i}^{(r)} = y_i;
\end{equation}
\item for masked p-values,
\begin{align}
& \hat{H}_{i}^{(r)} = \frac{\pi_{i} \cdot \frac{1}{\mu_{i}}\lb p_{t, i}'^{\frac{1}{\mu_{i}} - 1} + (1 - p_{t, i}')^{\frac{1}{\mu_{i}} - 1}\rb}{\pi_{i} \cdot \frac{1}{\mu_{i}}\lb p_{t, i}'^{\frac{1}{\mu_{i}} - 1} + (1 - p_{t, i}')^{\frac{1}{\mu_{i}} - 1}\rb + 2(1 - \pi_{i})}, \nonumber\\
& \hat{y}_{i}^{(r)} = \frac{p_{t, i}'^{\frac{1}{\mu_{i}} - 1}(-\log p_{t, i}') + (1 - p_{t, i}')^{\frac{1}{\mu_{i}} - 1}(-\log (1 - p_{t, i}'))}{p_{t, i}'^{\frac{1}{\mu_{i}} - 1} + (1 - p_{t, i}')^{\frac{1}{\mu_{i}} - 1}}.\label{eq:Gamma_GLM_case2}
\end{align}
\end{itemize}

\subsubsection{Gaussian-mixture model}
Suppose the p-values are derived from a one-sided z-test by transforming a normal random variable, the most natural transformation is $g(p) = \Phi^{-1}(1 - p)$. By \eqref{eq:exp_family2},
\[h(p; \mu) = \exps{\mu \cdot g(p) - \frac{1}{2}\mu^{2}}.\]
Plug it into our general results, we obtain that
\begin{itemize}
\item for revealed p-values,
\begin{align}
\hat{H}_{i}^{(r)} &= \P(H_{i} = 1 | p_{i} = \td{p}_{t, i})  = \frac{\pi_{i} \cdot e^{\mu_i y_i}}{\pi_{i}\cdot e^{\mu_i y_i} + (1 - \pi_{i}) e^{\mu_i^2 / 2}},\quad \hat{y}_{i}^{(r)} = y_i; \label{eq:Gaussian_GLM_case1}
\end{align}
\item for masked p-values,
\begin{align}
& \hat{H}_{i}^{(r)} = \frac{\pi_{i} \cdot  \cosh(\mu_i y_i)}{\pi_{i} \cdot \cosh(\mu_i y_i) + (1 - \pi_{i})e^{\mu_i^2 / 2}}, \quad \hat{y}_{i}^{(r)} = |y_i|\cdot \tanh(\mu_i|y_i|).\label{eq:Gaussian_GLM_case2}
\end{align}
\end{itemize}

\subsection{Initialization}\label{subapp:EM_init}

Another important issue is the initialization. The formulae of $\hat{H}_{i}^{(r)}$ and $\hat{y}_{i}^{(r, 1)}$ requires estimates of $\pi_{1i}$ and $\mu_{i}$. In step 0 when no information can be obtained, we propose a simple method by imputing random guess as follows. First we obtain an initial guess of $\pi_{1i}$. Let $J_{i} = I(\td{p}_{t, i}\mbox{ contains two elements})$, then we observe that
\[\E J_{i} = \P(p_{i}\not\in [s_{0}(x_{i}), 1 - s_{0}(x_{i})]) \ge (1 - \pi_{1i})(1 - 2s_{0}(x_{i}))\]
\begin{equation}\label{eq:tdJi1}
\Longrightarrow \pi_{1i}\ge \E \lb 1 - \frac{J_{i}}{1 - 2s_{0}(x_{i})}\rb.
\end{equation}
Let 
\begin{equation}\label{eq:tdJi2}
\td{J}_{i} = 1 - \frac{J_{i}}{1 - 2s_{0}(x_{i})}
\end{equation}
and then we fit a logistic regression on $\td{J}_{i}$ with covariates $\phi_{\pi}(x_{i})$, denoted by $\hat{\pi}_{1i}$ and then truncate $\hat{\pi}_{1i}$ at $0$ and $1$ to obtain an initial guess of $\pi_{1i}$, i.e.
\begin{equation}
  \label{eq:pi0}
  \pi_{1i}^{(0)} = (\hat{\pi}_{1i}\vee 0)\wedge 1.
\end{equation}
\eqref{eq:tdJi1} implies that $\pi_{1i}^{(0)}$ is a conservative estimate of $\pi_{1i}$. This is preferred to an anti-conservative estimate since the latter might cause over-fitting. 

Then we obtain an initial guess of $\mu(x_{i})$ by imputing $p_{i}$'s. If $p_{i}\in [s_{0}(x_{i}), 1 - s_{0}(x_{i})]$, then $\td{p}_{t, i} = p_{i}$ and hence we can use it directly. Otherwise, we only know that $p_{i}\in \td{p}_{t, i} = \{p_{t, i}', 1 - p_{t,i}'\}$. If $p_{i}$ is null, then it should be uniform on $\{p_{t, i}', 1 - p_{t, i}'\}$; if $p_{i}$ is non-null, then it should more likely to be $p_{t, i}'$ since $p_{t, i}' < 1 - p_{t, i}'$. Thus, we impute $p_{i}$ by $p_{t, i}'$, and fit an unweighted GLM on $g(p'_{t, i})$ with covariates $\phi(x_{i})$ and inverse link to obtain an initial guess of $\mu_{i}$.

\subsection{Other issues}
In Algorithm \ref{algo:em}, we fit $\mu(x)$ using a weighted GLM, corresponding to the the M-step. However, the weighted step is sensitive to the weights $\hat{H}_{r}$ derived from the E-step, which relies on the assumption that null p-values are uniformly distributed on $[0, 1]$. In practice, null p-values might be super-uniform or only asymptotically uniform. In this case, the weights generated by the E-step might lead to abnormal estimates for $\mu(x)$. For this reason, we modify the weighted step in M-steps for fitting $\mu(x)$ into an unweighted step and find that this choice leads to consistently good performance in all experiments shown in Section \ref{sec:experiments}.

\section{Technical Proofs}
\begin{proof}[\textbf{Proof of Theorem \ref{thm:level_surface}}]
Assume $f_{0}(p \mid x_{i})\le M$. Let $\eta_{i} = \nu(\{x_{i}\})$ and $s = (s_{1}, \ldots, s_{n}) \triangleq (s(x_{1}), \ldots, s(x_{n}))$, then the objective function of \eqref{eq:np_simplify}
\[\int_{\mathcal{X}}-F_{1}(s(x) | x)\pi_1(x)\nu(dx) = -\sum_{i=1}^{n}\eta_{i}F_{1}(s_{i} | x_{i})\pi_1(x_{i})\]
is a convex function of $s$ by condition (i) and the constraint function 
\begin{align*}
&g(s)\triangleq \int_{\mathcal{X}}\bigg\{-\alpha F_{1}(s(x) | x)\pi_1(x) + (1 - \alpha)F_{0}(s(x) | x)(1 - \pi_1(x))\bigg\}\nu(dx)\\
=& \sum_{i=1}^{n}\eta_{i}\lb -\alpha F_{1}(s_{i} | x_{i})\pi_1(x_{i}) + (1 - \alpha)F_{0}(s_{i} | x_{i})(1 - \pi_1(x_{i}))\rb
\end{align*}
is also a convex function of $s$ by condition (i). To establish the necessity of the KKT condition, it is left to prove the Slater's condition (\citealt{slater50}; \citealt[Chap. 5]{boyd04}), i.e. there exists a $\bar{s}$, such that for any $s\in B(\bar{s}, \delta)$ for some $\delta > 0$, the constraint inequality holds, i.e. $g(s)\le 0$, and $g(\bar{s}) < 0$. By condition (ii), WLOG we assume $\fdr(0 | x_{1}) < \alpha$ with $\nu(\{x_{1}\}) = \eta_{1} > 0$. Fix any $\eps > 0$ and denote by $\omega(\eps)$ by the maximum modulus of continuity of $f_1(p \mid x_1)$ and $f_0(p \mid x_0)$ at point $0$, i.e.
\[\omega(\eps) = \max\left\{\sup_{p\le\eps} f_1(p \mid x_1), \sup_{p\le\eps} f_0(p \mid x_1) \right\}.\]
By assumption (i), we know that
\[\lim_{\eps\rightarrow 0}\omega(\eps) = 0.\]
Let $\bar{s}\in\R^{n}$ with 
\[\bar{s}_{1} = 2\eps, \bar{s}_{2} = \ldots = \bar{s}_{n} = \eps \cdot (2M)^{-1}\eta_{1}\Delta\]
where 
\[\Delta = f(0 | x_{1})\lb \alpha (1 - \fdr(0 | x_{1})) - (1 - \alpha)\fdr(0 | x_{1})\rb > 0.\]
Let $\delta = \min\{\bar{s}_{2}, \eps\}$. We will show that for any $s\in B(\bar{s}, \delta)$, $g(s) < 0$. In fact, for any $s \in B(\bar{s}, \delta)$, we have
\[s_{1}\in [\eps, 3\eps], \quad s_{i}\le \eps\cdot M^{-1}\eta_{1}\Delta.\]
Recalling that $M$ is an upper bound for the null density. WLOG, we assume that $M \ge \eta_1\Delta$ in which case $s_{i}\le \eps$ for all $i\ge 2$. Note that $g(0) = 0$. By mean-value theorem, there exists $\td{s}_{1} \in [0, 3\eps], \td{s}_{2}, \ldots, \td{s}_{n}\in [0, \eps]$, such that 
\begin{align*}
g(s) & = s_{1}\eta_{1}\lb -\alpha f_{1}(\td{s}_{1} | x_{1})\pi_1(x_{1}) + (1 - \alpha)f_{0}(\td{s}_{1} | x_{1})(1 - \pi_1(x_{1}))\rb\\
&\quad  + \sum_{i=2}^{n}s_{i}\eta_{i}\lb -\alpha f_{1}(\td{s}_{i} | x_{i})\pi_1(x_{i}) + (1 - \alpha)f_{0}(\td{s}_{i} | x_{i})(1 - \pi_1(x_{i}))\rb\\
& \le s_{1}\eta_{1}\lb -\alpha f_{1}(0 | x_{1})\pi_1(x_{1}) + (1 - \alpha)f_{0}(0 | x_{1})(1 - \pi_1(x_{1}))\rb + s_1 \eta_1 \omega(\eps)\\
&\quad  + \sum_{i=2}^{n}s_{i}\eta_{i}\lb -\alpha f_{1}(\td{s}_{i} | x_{i})\pi_1(x_{i}) + (1 - \alpha)f_{0}(\td{s}_{i} | x_{i})(1 - \pi_1(x_{i}))\rb\\
& = -s_{1}\eta_{1}\Delta + s_1 \eta_1 \omega(\eps) + \sum_{i=2}^{n}s_{i}\eta_{i}\lb -\alpha f_{1}(\td{s}_{i} | x_{i})\pi_1(x_{i}) + (1 - \alpha)f_{0}(\td{s}_{i} | x_{i})(1 - \pi_1(x_{i}))\rb\\
& \le -s_{1}\eta_{1}\Delta + s_1 \eta_1 \omega(\eps) + \sum_{i=2}^{n}s_{i}\eta_{i}\cdot M\\
& \le -\eps\eta_{1}\Delta + s_1 \eta_1 \omega(\eps) + \eps \eta_{1}\Delta\sum_{i=2}^{n}\eta_{i}\\
& \le -\eps\eta_{1}\Delta + O(\eps\omega(\eps)) + \eps \eta_{1}\Delta(1 - \eta_1)\\
& \le -\eps\eta_{1}^{2}\Delta + o(\eps).
\end{align*}
Thus, for sufficiently small $\eps$, $g(s) < 0$ for all $s\in B(\bar{s}, \delta)$ and hence the Slater's condition is satisfied. 
\end{proof}

\begin{proof}[\textbf{Proof of Lemma \ref{lem:bernoulli}}]
  We assume $\rho<1$ (otherwise the result is trivial). Following \citet{barber2016knockoff}, we introduce the random set $\cA \subseteq [n]$ with
  \[
  \P(i \in \cA \mid \cG_{-1}) = \frac{1-\rho_i}{1-\rho},
  \]
  conditionally independent for $i\in [n]$, and construct conditionally i.i.d. Bernoulli variables $q_1,\ldots,q_n$, independent of $\cA$, with ${\P(q_i = 1 \mid \cG_{-1}) = \rho}$. Then we can define
\begin{equation}\label{eq:tdbi}
  \td{b}_i = q_i \1\{i \in \cA\} + \1\{i \notin \cA\},
\end{equation}
  which by construction gives $\P(\td{b}_i = 1 \mid \cG_{-1}) = \rho_i$ almost surely. Furthermore, noticing that 
\begin{align*}
&\P(\td{b}_{i} = 0, \td{b}_{j} = 0 | \cG_{-1}) = \P(i\in \cA, j\in \cA, q_{i}= 0, q_{j} = 0 | \cG_{-1}) \\
= &\P(i\in \cA, q_{i}= 0 | \cG_{-1})\P(j\in \cA, q_{j} = 0| \cG_{-1}) \\
= &\P(\td{b}_{i} = 0|\cG_{-1})\P(\td{b}_{j} = 0 | \cG_{-1}),
\end{align*}
we conclude that the $\td{b}_i$ are conditionally independent given $\cG_{-1}$. As a consequence, given $\cG_{-1}$,
\[
  (\td{b}_{1}, \ldots, \td{b}_{n})\stackrel{d}{=} (b_{1}, \ldots, b_{n}).
\]
In the following proof, we will use \eqref{eq:tdbi} to represent $b_{i}$'s.

  To ensure that $\cC_t$ decreases by at most a single element in each step, we introduce intermediate steps: for integers $t\geq 0$, $1\leq i \leq n$ define 
  \[
  \cC_{t + i/n} = \cC_{t + 1} \cup \{j \leq n-i:\; j \in \cC_{t}\}.
  \]

  Next, define the augmented filtration
  \[
  \cG_t^{\cA} = \sigma\left(\cG_{-1}, \cA, \cC_t, (b_i)_{i \notin \cC_t \cap \cA}, \sum_{i \in \cC_t \cap \cA} b_i\right) \supseteq \cG_t,
  \]
  for both integer and fractional values of $t$. Note $\cC_{t+1/n}$ is measurable with respect to $\cC_t$. In addition we define

  \[
  U_t^{\cA} = \sum_{i \in \cC_t \cap \cA} b_i, \qquad
  V_t^{\cA} = \sum_{i \in \cC_t \cap \cA} 1-b_i, \quad \text{ and } \quad
  Z^{\cA}_t = \frac{1 + |\cC_t\cap \cA|}{1+U_t^{\cA}}\;.
  \]
  Recall the definition of $U_{t}$ (Section \ref{subsec:notation}) and $b_{i}$ (defined above in \eqref{eq:tdbi}), for any $t$,
\[\frac{1 + |\cC_t|}{1 + U_{t}} = \frac{1 + |\cC_t\cap \cA| + |\cC_t\cap \cA^{c}|}{1 + U_{t}^{\cA} + |\cC_t\cap \cA^{c}|}\le \frac{1 + |\cC_t\cap \cA|}{1 + U_{t}^{\cA}} = Z^{\cA}_{t}.\]
Finally, we observe that $(b_i)_{i \in \cC_t \cap  \cA} = (q_i)_{i \in \cC_t \cap \cA}$ are exchangeable with respect to $\cG_t^{\cA}$, with the random vector distributed uniformly over configurations summing to $U_t^{\cA}$. 

  There are three cases: 
  \begin{enumerate}[(i)]
  \item if $\cC_{t+1/n}\cap \cA = \cC_t\cap \cA$ then 
\[\E[Z^{\cA}_{t+1/n} \mid \cG_t^{\cA}] = Z^{\cA}_t;\]
  \item if $U_t^{\cA} = 0$ but $\cC_{t+1/n}\cap \cA \subsetneqq \cC_t\cap \cA$ then 
\[Z^{\cA}_{t+1/n} = 1 + |\cC_{t + 1/n}\cap \cA| \leq |\cC_{t}\cap \cA| =Z^{\cA}_{t} - 1\le Z^{\cA}_{t};\]
  \item otherwise, $U_{t}^{\cA} > 0$ and $\cC_{t} \setminus \cC_{t + 1/n} = \{j\}$ for some $j\in \cA$. The exchangeability of $b_{i} = q_{i}$ implies that 
\[P\lb b_{j} = 1 \mid \cG_{t}^{\cA}\rb = \frac{U_{t}^{\cA}}{U_{t}^{\cA} + V_{t}^{\cA}}.\] 
Then
  \begin{align*}
    \E\left[Z^{\cA}_{t+1/n} \mid \cG_t^{\cA} \right]
    &= \frac{U_{t}^{\cA} + V_{t}^{\cA}}{1 + U_{t}^{\cA}} \,\cdot\, \frac{V_{t}^{\cA}}{U_{t}^{\cA}+V_{t}^{\cA}} \;+\; 
    \frac{U_{t}^{\cA} + V_{t}^{\cA}}{U_{t}^{\cA}} \,\cdot\, \frac{U_{t}^{\cA}}{U_{t}^{\cA}+V_{t}^{\cA}} \\
    &= \frac{V_{t}^{\cA}}{1 + U_{t}^{\cA}} + 1  \; = Z^{\cA}_t \;.
  \end{align*}
  \end{enumerate}
  In all three cases, the conditional expectation of $Z^{\cA}_{t+1/n}$ is smaller than $Z^{\cA}_t$; thus, $Z^{\cA}_t$ is a super-martingale with respect to the filtration $\cG^{\cA}_t$. Because $\htt$ is also a stopping time with respect to the filtration $(\cG_t^{\cA})_{t=0,1/n,2/n,\ldots}$ (but one which can only take integer values), for any $\cA\in [n]$, we have
  \begin{align}
    \E\left[\frac{ 1 + |\cC_{\htt}| }{1 + \sum_{i\in \cC_{\htt}} b_i} \middle| \cG_{-1}, \cA\right] \leq  \E\left[Z^{\cA}_{\htt} \mid \cG_{-1}, \cA\right] \leq  \E\left[Z^{\cA}_0 \mid \cG_{-1}, \cA\right] = \E\left[Z^{\cA}_0 \mid \cG_{-1}\right].\label{eq:Z0A}
  \end{align}
Let $m = |\cC_{0}|$ and assume $\cC_{0} = \{1,\ldots, m\}$ WLOG. Using the representation \eqref{eq:tdbi}, 
\begin{align*}
\E\left[Z^{\cA}_0 \mid \cG_{-1}\right] &= \E \left[\frac{1 + m}{1 + \sum_{i=1}^{m}(q_i I(i\in \cA) + I(i\not\in \cA))} \mid \cG_{-1}\right]\\
& \le \E \left[\frac{1 + m}{1 + \sum_{i=1}^{m}q_i } \mid \cG_{-1}\right] = \E \left[\frac{1 + m}{1 + \sum_{i=1}^{m}q_i }\right]\\
& = \sum_{k=0}^{m}\frac{1 + m}{1 + k}\cdot \lb\begin{array}{cc}m\\k\end{array}\rb \rho^{k}(1 - \rho)^{m-k}\\
& = \sum_{k=0}^{m}\lb\begin{array}{cc}m + 1\\k + 1\end{array}\rb \rho^{k}(1 - \rho)^{m-k}\\
& = \rho^{-1}\cdot \sum_{k=0}^{m}\lb\begin{array}{cc}m + 1\\k + 1\end{array}\rb \rho^{k + 1}(1 - \rho)^{m + 1 - (k + 1)}\\
& = \rho^{-1}\lb 1 - (1 - \rho)^{m + 1}\rb \le \rho^{-1}.
\end{align*}
Marginalizing over $\cA$ in \eqref{eq:Z0A}, we obtain the result.
\end{proof}

\subsection{Mirror-conservatism}\label{subapp:mirror}
In this subsection we provide two important examples that produce mirror-conservative p-values. The first example is the permutation test (e.g. ~\cite{hoeffding1952large}). Typically we assume that under the null hypothesis, the test statistic $T(X)$, where $X$ is a short-handed notation for observed data, is invariant in distribution under a finite group of transformations $\mathcal{G}$, i.e.
\[T(X)\stackrel{d}{=}T(gX), \forall g\in \mathcal{G}.\]
When $|\mathcal{G}|$ is small, one can compute a discrete p-value by $R / |\mathcal{G}|$, where $R$ is the rank of $T(X)$ in the set $\{T(gX): g\in \mathcal{G}\}$. When $|\mathcal{G}|$ is large, \cite{hemerik2014exact} proposes sampling a subset $\mathcal{G}' = \{g_{1}, g_{2}, \ldots, g_{m}\}$ with $g_{1} = \mathrm{Id}$ and $g_{2}, \ldots, g_{m}$ being a simple random sample (without replacement) from $\mathcal{G}$ and calculate the p-value based on $\mathcal{G}'$. In both cases, it can be proved that the p-value is uniformly distributed on an equi-spaced grid $\{\frac{1}{m}, \frac{2}{m},\ldots, \frac{m}{m}\}$ under the null hypothesis, where $m$ is the number of replicates. Then for any $0 < a_{1}\le a_{2} \le \frac{1}{2}$,
\[P(p\in [a_{1}, a_{2}]) = \lceil m a_{2}\rceil - (\lfloor m a_{1}\rfloor - 1)_{+}\]
where $(u)_{+}$ denotes $\max\{u, 0\}$, and 
\[P(p\in [1 - a_{2}, 1 - a_{1}]) = \lceil m (1 - a_1)\rceil - \lfloor m (1 - a_2)\rfloor + 1 = \lceil m a_{2}\rceil - (\lfloor m a_{1}\rfloor - 1).\]
As a result we conclude that 
\[P(p\in [a_{1}, a_{2}])\le P(p\in [1 - a_{2}, 1 - a_{1}])\]
and hence $p$ is mirror-conservative.

The second example is the one-sided test for distributions with monotone likelihood ratio, which is ubiquitous in practice. Specifically, let $\theta$ be the univariate parameter of interest and $p_{\theta}(x)$ be a family of densities with respect to some carrier measure $\mu$. $p_{\theta}(x)$ is said to have monotone likelihood ratio with respect to some real-value function $T(x)$ if for any $\theta < \theta'$, $p_{\theta}\not \equiv p_{\theta'}$ and the ratio $p_{\theta'}(x) / p_{\theta}(x)$ is a nondecreasing function of $T(x)$. For testing  $H_{0}: \theta\le \theta_{0}$ against $H_{1}: \theta > \theta_{0}$, it is well-known that there exists a Uniformly Most Powerful (UMP) test \citep{lehmann2005testing}, with the following decision function:
\[\phi(x) = \left\{
    \begin{array}{cc}
      1 & (T(x) > C)\\
      \gamma & (T(x) = C)\\
      0 & (T(x) < C)
    \end{array}
\right.\]
where $(\gamma, C)$ is the solution of 
\[P_{\theta_{0}}(T(X) > C) + \gamma P_{\theta_{0}}(T(X) = C) = \alpha.\]
Write $T(X)$ as $T$ and $P_{\theta_{0}}(T(X) \ge t)$ as $G_{0}(t)$ for short. Then the induced p-value can be written as
\begin{equation}\label{eq:monotone_likelihood}
p = G_{0}(T^{+}) + U(G_{0}(T) - G_{0}(T^{+})), \quad U\sim U([0, 1]),
\end{equation}
where $G_{0}(t^{+}) = \lim_{t\downarrow t^{+}} G_{0}(t)$. \eqref{eq:monotone_likelihood} is termed as fuzzy p-values by \cite{geyer2005fuzzy}. 
\begin{proposition}
Let $p_{\theta}(x)$ be a family of densities (w.r.t the carrier measure $\mu$) that has monotone likelihood ratio w.r.t. $T(x)$. 
Then the p-value defined in \eqref{eq:monotone_likelihood} is mirror-conservative.
\end{proposition}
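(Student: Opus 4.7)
The plan is to reduce mirror-conservatism to the claim that, under any $\theta \leq \theta_0$, the law of the fuzzy $p$-value~\eqref{eq:monotone_likelihood} on $[0,1]$ has a non-decreasing Lebesgue density. Granted this, for any $0 \leq a_1 \leq a_2 \leq 0.5$ and $v \in [a_1, a_2]$ one has $1-v \in [1-a_2, 1-a_1]$ with $1-v \geq v$, so the density at $1-v$ dominates that at $v$; integrating over $v$ and changing variable $w = 1-v$ gives exactly~\eqref{eq:mirror-conservative}.

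The main work is the density-monotonicity claim. Write $\psi(t, u) = G_0(t^+) + u(G_0(t) - G_0(t^+))$ so that $p = \psi(T, U)$. The probability integral transform argument (which is the reason \eqref{eq:monotone_likelihood} defines a valid $p$-value) shows that, under $\theta_0$, $\psi$ pushes $P_{\theta_0} \otimes \mathrm{Unif}[0,1]$ forward to $\mathrm{Unif}[0,1]$. Two structural facts then follow. First, $\psi$ is order-reversing in $t$: for any $t_1 < t_2$ and any $u_1, u_2 \in [0,1]$, since $G_0$ is non-increasing,
\[
\psi(t_1, u_1) \;\geq\; G_0(t_1^+) \;\geq\; G_0(t_2) \;\geq\; \psi(t_2, u_2).
\]
Hence there is a well-defined non-increasing map $v \mapsto t(v)$ (defined up to a Lebesgue null set) with $T = t(p)$ almost surely under $P_{\theta_0} \otimes \mathrm{Unif}[0,1]$. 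Second, switching the measure on $T$ from $P_{\theta_0}$ to $P_\theta$ via the likelihood ratio $L(t) = dP_\theta/dP_{\theta_0}(t)$ pushes forward to a measure on $[0,1]$ whose density at $v$ equals $\E_{\theta_0}[L(T) \mid \psi(T,U) = v] = L(t(v))$.

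To close the argument, invoke MLR: for $\theta < \theta_0$, $L(t)$ is non-increasing in $t$ (MLR with the roles of $\theta$ and $\theta'$ exchanged). Composing the non-increasing $L$ with the non-increasing $t(\cdot)$ yields a non-decreasing density $v \mapsto L(t(v))$ on $[0,1]$. The boundary case $\theta = \theta_0$ gives the uniform density, which is (weakly) non-decreasing.

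The main obstacle is to make the map $v \mapsto t(v)$ rigorous when $T$ has both atoms and a continuous part. The cleanest route is to check density-monotonicity piece by piece: on each interval $[G_0(t^+), G_0(t)]$ arising from an atom of $T$ the density of $p$ is constant and equal to $L(t)$, while on the continuous part the density equals $L(G_0^{-1}(v))$ for a generalized inverse of $G_0$. Both pieces inherit monotonicity from MLR, and because consecutive intervals abut (the right endpoint of the interval attached to a larger atom coincides with the left endpoint of the next), the resulting density on $[0,1]$ is non-decreasing as a whole, completing the proof.
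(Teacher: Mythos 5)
Your argument is correct, but it reaches mirror-conservatism by a genuinely different route from the paper's: you prove the stronger statement that under every $\theta\le\theta_0$ the law of the fuzzy $p$-value has a non-decreasing Lebesgue density, namely $v\mapsto L(t(v))$ with $L$ the likelihood ratio as a function of $T$ and $t(\cdot)$ the order-reversing map from $p$ back to $T$, and then read off \eqref{eq:mirror-conservative} by reflecting $[a_1,a_2]$ onto $[1-a_2,1-a_1]$. This is exactly the ``increasing density'' sufficient condition mentioned in Section 2.1 of the paper, and as a bonus it also delivers ordinary conservatism $\P_\theta(p\le a)\le a$. The paper never constructs a density or an inverse map: it writes $dP_\theta=g_\theta(T)\,dP_{\theta_0}$ with $g_\theta$ monotone, separates the preimages $H^{-1}([0,a_2])$ and $H^{-1}([1-a_2,1])$ by a single value $t(a_2,1-a_2)$ of the statistic, deduces the ratio inequality $\P_\theta(p\in A_1)/\P_\theta(p\in A_2)\le\P_{\theta_0}(p\in A_1)/\P_{\theta_0}(p\in A_2)$, and finishes with the exact uniformity of $p$ at $\theta_0$ (which it, like you, must verify by a two-case atom/continuity analysis). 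The trade-off is that your route needs the extra measure-theoretic care you yourself flag --- making $v\mapsto t(v)$ and the disintegration $\E_{\theta_0}[L(T)\mid p=v]=L(t(v))$ rigorous when $T$ mixes atoms with a continuous part --- whereas the interval-separation argument only ever compares two probabilities. Your piecewise fix does close that gap: the atom intervals $[G_0(t^+),G_0(t)]$, on which the density is constant and equal to $L(t)$, and the continuous part, on which it equals $L(G_0^{-1}(v))$, tile $[0,1]$ up to a Lebesgue-null set in an order-reversing way, so monotonicity of $L$ in $t$ yields a globally non-decreasing density. Both proofs ultimately rest on the same two facts: the likelihood ratio is monotone in $T$, and $p$ is an anti-monotone randomized function of $T$ that is exactly uniform at $\theta_0$.
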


\begin{proof}
Since $p_{\theta}(x)$ has monotone likelihood ratio, there exists  a non-decreasing function $g_{\theta}(t)$ for each $\theta\le \theta_{0}$, such that
\[\frac{p_{\theta}(x)}{p_{\theta_{0}}(x)} = g_{\theta}(T(x)).\]
Let $\nu$ be a measure such that for any event $A\subset \R$,
\[\nu(A) = \int I(T(x) \in A)\cdot p_{\theta_{0}}(x) \mu(dx).\]
Then for any event $A\subset \R$, 
\begin{equation}\label{eq:measure}
P_{\theta}(T(X) \in A) = \int g_{\theta}(T(x))\cdot I(T(x) \in A)\cdot p_{\theta_{0}}(x) d\mu = \int g_{\theta}(t) \nu (dt).
\end{equation}
Note that the above argument can be easily proved by standard approximation argument in measure theory that starts from indicator functions $g_{\theta}(t) = I(t\in A')$, extends the result to simple step functions and finally pushes it to the limit. Let $\omega$ be the product measure of $\nu$ and the Lebesgue measure on $[0, 1]$. Then for any event  $B\subset \R^{2}$,
\begin{equation}\label{eq:omega1}
P_{\theta}((T(X), U)\in B) = \int_{B}g_{\theta}(t) \omega(dt, du).
\end{equation}
Note that $g_{\theta_{0}}(t)\equiv 1$ by definition. This implies that 
\begin{equation}\label{eq:omega2}
\omega(B) = P_{\theta_{0}}((T(X), U)\in B).
\end{equation}

~\\
\noindent Let $H(\cdot, \cdot)$ be the transformation such that $p = H(T, U)$. Then for any $z$,
\[\{(t, u): G_{0}(t) < z\}\subset H^{-1}([0, z)) \subset H^{-1}([0, z]) \subset \{(t, u): G_{0}(t)\le z\}.\]
As a result, for any $0 < z_1 < z_2 < 1$,
\[H^{-1}([0, z_1]) \subset \{(t, u): G_{0}(t)\le z_1\}, \quad H^{-1}([z_2, 1]) \subset \{(t, u): G_{0}(t)\ge z_2\},\]
and hence there exists $t(z_1, z_2)$ such that
\begin{equation}\label{eq:separation}
t_1 \le t(z_1, z_2)\le t_2, \quad \forall (t_1, u_1)\in H^{-1}([0, z_1]), (t_2, u_2)\in H^{-1}([z_2, 1]).
\end{equation}
Given $0\le a_1 \le a_2 < 0.5$, let $A_1 = [a_1, a_2]$ and $A_2 = [1 - a_2, 1 - a_1]$. Then \eqref{eq:separation} and \eqref{eq:omega1}, together with the monotonicity of $g_{\theta}$, imply that\[P_{\theta}(p\in A_1) = \int_{H^{-1}(A_1)}g_{\theta}(t) \omega(dt, du)\le \omega(H^{-1}(A_1))\cdot g_{\theta}(t(a_2, 1-a_2)),\]
and 
\[P_{\theta}(p\in A_2) = \int_{H^{-1}(A_1)}g_{\theta}(t) \omega(dt, du)\ge \omega(H^{-1}(A_2))\cdot g_{\theta}(t(a_2, 1-a_2)).\]
Recalling \eqref{eq:omega2}, we obtain that 
\begin{equation}\label{eq:prob_ratio}
\frac{P_{\theta}(p\in A_1)}{P_{\theta}(p\in A_2)}\le \frac{P_{\theta_{0}}(p\in A_1)}{P_{\theta_{0}}(p\in A_2)}.
\end{equation}

~\\
\noindent It is left to prove that $\frac{P_{\theta_{0}}(p\in A_1)}{P_{\theta_{0}}(p\in A_2)} = 1$. In fact, we can prove that 
\begin{equation}\label{eq:uniform}
p\sim U([0, 1])\quad \mbox{when } \theta = \theta_{0}.
\end{equation}
Fix any $z\in (0, 1)$, let 
\[G_{0}^{-1}(z) = \sup\{t: G_{0}(t)\ge z\}.\]
For clarity we write $u$ for $G_{0}^{-1}(z)$. Now we prove \eqref{eq:uniform} in two cases:
\begin{itemize}
\item if $u$ is a continuity point of $G_{0}$, i.e.
\[G_{0}\lb u\rb = G_{0}\lb u^{+}\rb.\]
Since $G_{0}$ is left-continuous, we must have
\[G_{0}\lb u\rb = G_{0}\lb u^{+}\rb = z.\]
Then 
\[P_{\theta_{0}}\lb p\le z\rb = P_{\theta_{0}}\lb T(X)\ge u\rb = G_{0}\lb u\rb.\]
\item if $u$ is an atom of $G_{0}$, i.e.
\[G_{0}\lb u\rb > G_{0}\lb u^{+}\rb.\]
By definition,
\[G_{0}\lb u\rb\ge z\quad \mbox{and}\quad  z\ge G_{0}\lb u^{+}\rb.\]
Then 
\begin{align*}
  P_{\theta_{0}}\lb p\le z\rb & = P_{\theta_{0}}\lb T(X)> u^{+}\rb + P_{\theta_{0}}\lb T(X)= u\rb\cdot \frac{z - G_{0}(u^{+})}{G_{0}(u) - G_{0}(u^{+})}\\
& = G_{0}(u^{+}) + (G_{0}(u) - G_{0}(u^{+}))\cdot \frac{z - G_{0}(u^{+})}{G_{0}(u) - G_{0}(u^{+})}\\
& = z.
\end{align*}
\end{itemize}
Therefore we prove \eqref{eq:uniform}. By \eqref{eq:prob_ratio}, we conclude that for any $\theta\le \theta_{0}$,
\[\frac{P_{\theta}(p\in A_1)}{P_{\theta}(p\in A_2)}\le 1\]
which implies the mirror-conservativeness.

\end{proof}

\end{document}